\documentclass[11pt]{article}
\usepackage[T1]{fontenc}
\usepackage{amsfonts}
\usepackage{amsmath}
\usepackage{enumerate}
\usepackage{amssymb}
\usepackage{amsthm}
\usepackage{bbm}
\usepackage{bm}
\usepackage{mathrsfs}
\usepackage{verbatim}
\usepackage{setspace}
\usepackage{color}
\usepackage{pdfsync}
\usepackage{enumitem}
\setlength{\textwidth}{174mm}
\setlength{\topmargin}{-0.4cm}
\setlength{\oddsidemargin }{-5mm}
\setlength{\textheight}{20.8cm}


\theoremstyle{plain}

\newtheorem{theorem}{Theorem}[section]

\newtheorem{proposition}[theorem]{Proposition}
\newtheorem{corollary}[theorem]{Corollary}
\newtheorem{lemma}[theorem]{Lemma}
\newtheorem{remark}[theorem]{Remark}
\newtheorem{example}[theorem]{Example}
\newtheorem{examples}[theorem]{Examples}
\newtheorem{foo}[theorem]{Remarks}
\newenvironment{Example}{\begin{example}\rm}{\end{example}}

\newenvironment{Remark}{\begin{remark}\rm}{\end{remark}}

\newcommand{\be}{\begin{equation}}
\newcommand{\ee}{\end{equation}}
\newcommand{\bea}{\begin{eqnarray}}
\newcommand{\eea}{\end{eqnarray}}
\newcommand{\beas}{\begin{eqnarray*}}
  \newcommand{\eeas}{\end{eqnarray*}}

\newcommand{\brak}[1]{\ensuremath{\left( #1 \right)}}
\newcommand{\crl}[1]{\ensuremath{ \left\{ #1 \right\} }}
\newcommand{\edg}[1]{\ensuremath{ \left[ #1 \right] }}
\newcommand{\ang}[1]{\ensuremath{ \left \langle #1 \right \rangle }}

\newcommand{\p}{\mathbb{P}}
\newcommand{\q}{\mathbb{Q}}
\newcommand{\E}{\mathbb{E}}

\newcommand{\EQ}{\E^{\q}}
\newcommand{\R}{\mathbb{R}}

\newcommand{\bR}{\overline{\mathbb{R}}}

\newcommand{\lm}{\mathop{\mathrm{lim\,med}}}

\newcommand{\BIGOP}[1]{\mathop{\mathchoice%
    {\raise-0.22em\hbox{\huge $#1$}}%
    {\raise-0.05em\hbox{\Large $#1$}}{\hbox{\large $#1$}}{#1}}}

\newcommand{\BIGboxplus}{\mathop{\mathchoice%
    {\raise-0.35em\hbox{\huge $\boxplus$}}%
    {\raise-0.15em\hbox{\Large $\boxplus$}}{\hbox{\large $\boxplus$}}{\boxplus}}}

\begin{document}

\title{\vspace{-0em}
Robust expected utility maximization with medial limits
\date{November 2018}
\author{Daniel Bartl\thanks{
  Department of Mathematics and Statistics, University of Konstanz, 78464 Konstanz, Germany and
  Department of Mathematics, University of Vienna, 1090 Vienna, Austria.
  Financial support from the Austrian Science Fund (FWF) through grant Y00782 is gratefully acknowledged.
  }
  \and
  Patrick Cheridito\thanks{RiskLab, Department of Mathematics, ETH Zurich, 8092 Zurich, Switzerland.
   }
  \and
    Michael Kupper\thanks{Department of Mathematics and Statistics, University of Konstanz, 78464 Konstanz, Germany}}
}

\maketitle \vspace{-1.2em}

\begin{abstract}
In this paper we study a robust expected utility maximization problem with random endowment in 
discrete time. We give conditions under which an optimal strategy exists and 
derive a dual representation for the optimal utility. Our approach is based on a
general representation result for monotone convex 
functionals, a functional version of Choquet's capacitability theorem and medial limits. 
The novelty is that it works under nondominated model uncertainty without any assumptions 
of time-consistency. As applications, we discuss robust utility maximization problems with 
moment constraints, Wasserstein constraints and Wasserstein penalties.
\end{abstract}

\vspace{.9em}

{\small
\noindent {\bf Keywords:} Robust expected utility maximization, 
convex duality, Choquet capacitability, medial limit, moment constraints, Wasserstein distance.\\[1mm]
\noindent {\bf MSC 2010 Subject Classification:} 91B16, 90C47, 93E20
}

\section{Introduction}
\label{sec:intro}

We consider a robust expected utility maximization problem of the form
\be \label{U1}
U(X) = \sup_{\vartheta\in\Theta} \inf_{\mathbb{P} \in \mathcal{P}} \crl{\mathbb{E}^\mathbb{P}
u \brak{X+ \sum_{t=1}^T \vartheta_t \Delta S_t} + \alpha(\mathbb{P})},
\ee
where $X$ is a random endowment, $S_0, S_1, \dots, S_T$ the price evolution of a tradable asset, 
$\Theta$ the set of possible trading strategies, $u$ a random utility function, 
${\cal P}$ a set of probability measures and $\alpha \colon {\cal P} \to [0,\infty)$ a penalty function.
In the special case $\alpha \equiv 0$, \eqref{U1} reduces to 
\be \label{U2}
U(X) = \sup_{\vartheta\in\Theta} \inf_{\mathbb{P} \in \mathcal{P}} \mathbb{E}^\mathbb{P}
u \brak{X+ \sum_{t=1}^T \vartheta_t \Delta S_t}.
\ee

A large strand of the literature on robust utility maximization assumes that 
the family ${\cal P}$ is dominated\footnote{i.e., all $\p \in {\cal P}$ are absolutely continuous with 
respect to a common probability measure $\mathbb{P}^*$}; see e.g. 
\cite{HS, quenez2004optimal, gundel2005utility, S05, S07, BMS, owari2011robust, backhoff2015robust}.
In this case, one can, as in the classical expected utility framework ${\cal P} = \crl{\p}$, apply Koml\'os' theorem to construct an 
optimal strategy from a sequence of approximately optimal strategies. The existence of optimal 
strategies can then be used to deduce a dual representation for $U$. Different discrete-time versions of 
problem \eqref{U2} under nondominated model uncertainty have been 
studied by \cite{N16,BC,NS,B}. They all make time-consistency assumptions\footnote{Problem 
\eqref{U2} is time-consistent if the set ${\cal P}$ is stable under concatenation of 
transition probabilities. Conditions for time-consistency of problems of the form \eqref{U1} are given in 
e.g. \cite{CDK, CK}.}, which allows them to tackle the problem step by step backwards in time using
dynamic programming arguments. In continuous time, nondominated problems of the form \eqref{U2} have been investigated by 
\cite{MPZ, NN} in the case, where ${\cal P}$ consists of a time-consistent family of martingale or L\'evy process laws. 

In this paper we study problem \eqref{U1} without domination or time-consistency assumptions.
As a consequence, we cannot apply Koml\'os' theorem or dynamic programming arguments. 
Instead, we use convex duality methods, a functional version of Choquet's capacitability theorem \cite{Ch}
and medial limits. For our purposes, a medial limit is a positive linear functional $\lm \colon l^{\infty} \to \mathbb{R}$ satisfying
$\liminf \le \lm \le \limsup$ with the following property: for any uniformly bounded sequence of universally 
measurable\footnote{Recall that the universal completion $\mathcal{F}^\ast$ of a 
$\sigma$-algebra ${\cal F}$ is defined as the intersection of $\sigma(\mathcal{F}\cup\mathcal{N}^\mathbb{P})$
over all probability measures $\p$ on $\mathcal{F}$, where $\mathcal{N}^\mathbb{P}$ denotes the collection of
$\mathbb{P}$-null sets. By saying that $X \colon E \to \mathbb{R}$ is universally measurable,
we mean that it is measurable with respect to the universal completion ${\cal F}^*$ of ${\cal F}$ 
and the Borel $\sigma$-algebra on $\mathbb{R}$, which is equivalent to saying that
$X$ is measurable with respect to ${\cal F}^*$ and the universal completion of the Borel 
$\sigma$-algebra on $\mathbb{R}$.} functions $X_n : E \to \mathbb{R}$ on a measurable space
$(E, {\cal F})$, $X = \lm X_n$ is universally measurable and 
$\mathbb{E}^{\mathbb{P}} X = \lm \mathbb{E}^{\mathbb{P}} X_n$ for every 
probability measure $\mathbb{P}$ on the universal completion of ${\cal F}$.
Mokobozki proved that medial limits exist under the usual axioms of ZFC 
together with the continuum hypothesis; see \cite{Meyer}. 
Later, Normann \cite{Normann} showed that it is enough to assume ZFC and Martin's axiom.
In \cite{N14} medial limits were used to establish the existence of optimal quasi-sure superhedging
strategies with respect to general sets of martingale measures.

We first derive a dual representation of $U(X)$ for lower semicontinuous random endowments $X$ only from
convexity and integrability assumptions. Then we show that a suitable no-arbitrage condition
and the existence of a medial limit imply that problem \eqref{U1} admits optimal strategies. From there
we can extend the dual representation of $U(X)$ from lower semicontinuous to measurable random endowments $X$.

As sample space we consider a non-empty subset $\Omega$ of $((0,\infty) \times \mathbb{R})^{T+1}$
endowed with the Euclidean metric and the corresponding Borel $\sigma$-algebra.
We suppose there is a money market account evolving according to $M_t(\omega) = \omega_{t,1}$ and a 
financial asset whose price in units of $M_t$ is given by $S_t(\omega) = \omega_{t,2}$. 
$X \colon \Omega \to \mathbb{R}$ is a Borel measurable mapping describing a random endowment in units of $M_T$.
As usual, $\Delta S_t$ denotes the increment $S_t - S_{t-1}$.
$\mathcal{P}$ is assumed to be a non-empty set of Borel probability measures on $\Omega$ and 
$\alpha \colon {\cal P} \to \mathbb{R}_+ := [0,\infty)$ a mapping with the property $\inf_{\p \in {\cal P}} \alpha(\p) = 0$.
Denote by $({\cal F}_t)_{t=0}^T$ the filtration generated by $(M_t,S_t)_{t=0}^T$.
The set $\Theta$ consists of all strategies $(\vartheta_t)_{t=1}^T$ such that for each $t$,
$\vartheta_t \colon \Omega \to \R$ is measurable with respect to the 
universal completion ${\cal F}^*_{t-1}$ of ${\cal F}_{t-1}$ and the Borel $\sigma$-algebra on $\mathbb{R}$. 
$u \colon\Omega\times\mathbb{R}\to\mathbb{R}$ is a random utility function, which we assume to
satisfy the following conditions:
\begin{itemize}
\item[(U1)] $u(\omega,x)$ is increasing\footnote{In the whole paper we understand the words
``increasing'' and ``decreasing'' in the weak sense. That is, $u$ satisfies $u(\omega,x) \ge u(\omega,y)$ 
for all $x \ge y$.} and concave in $x$
\item[(U2)] for every $n \in \mathbb{N}$, $u\colon\Omega\times [-n,\infty)\to\mathbb{R}$ is 
continuous and bounded
\item[(U3)] $\lim_{x \to - \infty} \sup_{\omega\in \Omega} u(\omega,x)/|x| = -\infty$.
\end{itemize}
Note that if $u$ does not depend on $\omega$, \eqref{U1} measures the utility of the
discounted terminal wealth $X + \sum_{t=1}^T \vartheta_t \Delta S_t$. On the other hand, if 
$u$ is of the form $u(\omega,x) = \tilde{u}(\omega^1_T x)$ for a function 
$\tilde{u} \colon \mathbb{R} \to \mathbb{R}$, then \eqref{U1} evaluates the undiscounted 
terminal wealth $M_T (X + \sum_{t=1}^T \vartheta_t \Delta S_t)$.

We suppose there exists a continuous function $Z : \Omega \to [1,\infty)$ such that 
$Z \ge 1\vee \sum_{t=0}^T |S_t|$ and all sublevel sets 
$\crl{\omega \in \Omega : Z(\omega) \le z}$, $z \in \mathbb{R}_+$, are compact. Let $B_Z$ be the space of all 
Borel measurable functions $X : \Omega \to\mathbb{R}$ such that $X/Z$ is bounded, $L_Z$ the 
set of all lower semicontinuous $X \in B_Z$ and $C_Z$ the space of all continuous $X \in B_Z$.
By ${\cal M}_Z$ we denote the set of all Borel probability measures $\p$ on $\Omega$ 
satisfying $\mathbb{E}^{\p} Z < \infty$. Then 
$\mathbb{E}^{\mathbb{P}} X$ is well-defined for all $\p \in\mathcal{M}_Z$ and $X\in B_Z$.

To derive dual representations for $U$, we need ${\cal P}$ and $\alpha$ to satisfy the following two conditions:

\begin{itemize}
\item[{\rm (A1)}] ${\cal P}$ is a convex subset of ${\cal M}_Z$ and $\alpha \colon {\cal P} \to \mathbb{R}_+$ a
convex mapping with $\sigma({\cal M}_Z,C_Z)$-closed sublevel sets 
${\cal P}_c := \crl{\p \in {\cal P} : \alpha(\p) \le c}$, $c \in \mathbb{R}_+$ 
\item[{\rm (A2)}] there exists an increasing function $\beta \colon [1, \infty) \to \mathbb{R}$ 
such that $\lim_{x \to \infty} \beta(x)/x = \infty$ and  
\[\inf_{\mathbb{P} \in \mathcal{P}} \crl{\mathbb{E}^\mathbb{P}u(-\beta(Z)) + \alpha(\p)} > - \infty.\]
\end{itemize}
By $v$ we denote the convex conjugate of $u$, given by 
\[
v(\omega,y):=\sup_{x\in\mathbb{R}} \crl{u(\omega,x)-xy}, \quad (\omega, y) \in \Omega \times \mathbb{R}_+.
\]
If $u$ satisfies (U2), $u(\omega,0)$ is bounded in $\omega$, and one has 
\[
v(\omega,y) = \sup_{x\in\mathbb{Q}} \crl{u(\omega,x)-xy} \ge u(\omega, 0).
\] 
In particular, $v$ is a Borel measurable function from $\Omega \times \mathbb{R}_+$ to  $(-\infty,\infty]$ 
that is bounded from below. So for $q \in \mathbb{R}_+$ and a Borel probability measure $\q$ on $\Omega$, 
one can define
$$
D^{\alpha}_v(q\q) := \inf_{\p \in {\cal P}} \crl{D_v(q\q \,\| \,\p) + \alpha(\mathbb{P})},
$$
where $D_v(q\q \,\| \, \p)$ is the $v$-divergence between $q\q$ and $\p$, given by 
$$
D_v(q\q \,\| \, \p) := \left\{\begin{array}{ll}
\mathbb{E}^{\mathbb{P}}v\big(qd\mathbb{Q}/d\mathbb{P}\big) & \quad \mbox{if } q\q \ll \p \\
\infty & \quad \mbox{otherwise.}
\end{array} \right.
$$
Let ${\cal Q}_Z$ be the set of all probability measures $\p \in {\cal M}_Z$ under which $(S_t)_{t=0}^T$
is a martingale and $\hat{\cal Q}_Z$ the set of all pairs $(q,\q) \in \mathbb{R}_+ \times {\cal M}_Z$
such that $q = 0$ or $\q \in {\cal Q}_Z$. Our first duality result is as follows:

\begin{theorem} \label{thm1}
Assume {\rm (U1)--(U3)} and {\rm (A1)--(A2)}. Then
\be \label{dualL}
U(X) = \min_{(q, \mathbb{Q}) \in \hat{\cal Q}_Z} 
\crl{q \mathbb{E}^{\mathbb{Q}}X + D^{\alpha}_v(q \q)} \in \mathbb{R}
\quad \mbox{for all } X \in L_Z.
\ee
\end{theorem}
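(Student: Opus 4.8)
The plan is to establish \eqref{dualL} by viewing $U$ as a monotone convex functional on the space $L_Z$ of lower semicontinuous functions and applying a general duality/representation theorem together with Choquet's capacitability theorem. First I would fix $\vartheta \in \Theta$ and consider the map $X \mapsto \inf_{\p\in{\cal P}}\{\E^\p u(X+\sum_t \vartheta_t \Delta S_t)+\alpha(\p)\}$. For each fixed $\p$, the functional $X\mapsto \E^\p u(X+(\vartheta\cdot\Delta S))+\alpha(\p)$ is concave and monotone in $X$, and by the Fenchel–Young inequality $u(\omega,x)\le xy+v(\omega,y)$ it admits the dual bound $\E^\p u(\,\cdot\,)\le q\E^\p[\,\cdot\,]+\E^\p v(q\,d\q/d\p)$ whenever the Radon–Nikodym derivative makes sense; taking infimum over $\p$ and supremum over $\vartheta$ gives the ``$\le$'' inequality in \eqref{dualL} once one checks that the supremum over $\vartheta$ forces $\q$ to be a martingale measure or $q=0$ (this is the standard argument: if $\q$ is not a martingale measure and $q>0$, then $q\E^\q X+D^\alpha_v(q\q)$ can be made $+\infty$ by scaling a suitable strategy, so such pairs do not bind). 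The growth condition (U3) and (A2) guarantee that the relevant infimum over $\p$ is finite, so $U(X)\in\mathbb{R}$ and the right-hand side of \eqref{dualL} is bounded below.

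For the reverse inequality, the key is a minimax/representation argument. I would first prove \eqref{dualL} on the smaller space $C_Z$ of continuous functions: there $U$ is a monotone convex functional that is, by (A1)–(A2), suitably continuous from below along sequences in $C_Z$, and one applies the general representation result for monotone convex functionals (alluded to in the abstract and presumably proved earlier in the paper) to get $U(X)=\sup_{(q,\q)}\{q\E^\q X - U^*(q,\q)\}$ with the conjugate $U^*$ identified as $D^\alpha_v(q\q)$ restricted to $\hat{\cal Q}_Z$. Identifying $U^*$ with $D^\alpha_v$ requires computing, for fixed $(q,\q)$, $\sup_{X\in C_Z,\vartheta}\{q\E^\q X - U(X)\}$; the supremum over $X$ produces the pointwise conjugate $v$ via a measurable selection / interchange of supremum and integral, and the supremum over $\vartheta$ again restricts to martingale measures. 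One then passes from $C_Z$ to $L_Z$: every $X\in L_Z$ is an increasing limit of functions in $C_Z$ (e.g. inf-convolutions $X_n(\omega)=\inf_{\omega'}\{X(\omega')+n\,d(\omega,\omega')\}$ truncated appropriately, which lie in $C_Z$ after normalizing by $Z$), so $U(X_n)\uparrow U(X)$ by monotone/Fatou-type arguments using (U2)–(U3), while the right-hand side of \eqref{dualL}, being an infimum of affine-in-$X$ functionals, is upper semicontinuous under such limits; combined with the already-established ``$\le$'' this forces equality on $L_Z$. The functional Choquet capacitability theorem is what legitimizes the claim that the set function $X\mapsto U(X)$ (or rather an associated capacity built from it) is continuous along increasing sequences of lower semicontinuous functions, which is exactly what the $C_Z\to L_Z$ extension needs.

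The main obstacle is the identification of the conjugate and the interchange of supremum over $X$ with the expectation, i.e. proving $\sup_{X}\{q\E^\q X - \inf_\p(\E^\p u(X+\cdot)+\alpha(\p))\}=D^\alpha_v(q\q)$. This is where model nondomination bites: one cannot simply choose $X$ pointwise optimally and integrate, because $u$ is random and $\q,\p$ vary. I expect this to be handled by first fixing $\p$, using a measurable selection theorem to build, for each $\varepsilon$, a (universally measurable, hence admissible after approximation by continuous functions via Lusin-type arguments and the compactness of sublevel sets of $Z$) endowment $X$ that is $\varepsilon$-optimal in the pointwise Fenchel problem on the support, and then controlling the error uniformly using the $Z$-growth bounds and (U3); the infimum over $\p$ and the penalty $\alpha$ are brought back in via the closedness assumption (A1) on sublevel sets, which gives the needed compactness in the $\sigma({\cal M}_Z,C_Z)$-topology to run a minimax theorem. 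The attainment of the minimum (the ``$\min$'' rather than ``$\inf$'' in \eqref{dualL}) then follows from this same weak-$*$ compactness of the sublevel sets ${\cal P}_c$ together with lower semicontinuity of $(q,\q)\mapsto q\E^\q X + D^\alpha_v(q\q)$ and the coercivity supplied by (A2) and the superlinear growth of $\beta$, which prevents minimizing sequences from escaping to infinity in $q$.
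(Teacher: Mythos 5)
Your overall skeleton — weak duality from the Fenchel--Young inequality plus the martingale property of $\mathbb{Q}$, strong duality on $C_Z$ via a monotone-convex representation theorem, then extension to $L_Z$ by approximating from below with continuous functions — matches the paper's route. Two points need correction, though, one of which is a genuine misattribution that would block the proof if you followed your own plan literally.

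First, Choquet's capacitability theorem is \emph{not} what legitimizes the $C_Z\to L_Z$ step. In the paper's Theorem~\ref{thm:dual}, Choquet (via Proposition~\ref{prop:capac}) is used only for the passage from upper/lower semicontinuous functions to \emph{all} Borel measurable functions, and this requires the upward-continuity condition (R3), which at the stage of Theorem~\ref{thm1} is simply not available (it is proved later using (NA) and medial limits, and is the content of Theorem~\ref{thm2}). What actually drives the $C_Z\to L_Z$ extension is the combination of Banach--Alaoglu compactness of the conjugate's sublevel sets (from the superlinear growth supplied by (A2)) with a Ky Fan minimax theorem applied along a decreasing sequence in $C_Z$, plus the dual characterization of (R2) in Proposition~\ref{prop:dual}. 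If you try to invoke Choquet here you will find you are assuming a monotone-convergence property of $U$ that you do not yet have.

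Second, and less damaging but worth knowing, you propose to run the representation argument directly on $U$. The paper instead introduces $\tilde U$, the same functional but with the supremum taken over \emph{continuous bounded} strategies $\tilde\Theta$, proves via the representation theorem and a sequence of minimax reductions (first truncating to $C^m_Z = \{X \ge -mZ\}$, then to weak-$*$ compact sublevel sets ${\cal P}_c$, then letting $m\to\infty$) that $\tilde U(X) = \min_{(q,\mathbb{Q})}\{q\mathbb{E}^\mathbb{Q}X + D^\alpha_v(q\mathbb{Q})\}$ on $L_Z$, and then sandwiches $\tilde U \le U \le D$ using the weak-duality Lemma~\ref{lem:weak.dual}. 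Working with $\tilde U$ rather than $U$ avoids regularity issues: the conjugate $\tilde U^*_{C_Z}$ is cleanly computable because for a $\mathbb{Q}$-martingale $S$ and $\vartheta\in\tilde\Theta$, $\mathbb{E}^\mathbb{Q}\sum_t\vartheta_t\Delta S_t = 0$ holds without integrability caveats, and concavity/continuity of $X\mapsto\tilde U(X)$ on $C_Z$ is immediate. Your suggestion to build $\varepsilon$-optimizers via measurable selection and Lusin is plausible but unnecessarily heavy; Lemma~\ref{lemma:Dv} fixes $\mathbb{P},\mathbb{Q}$ first and obtains $D_v(q\mathbb{Q}\,\|\,\mathbb{P})$ as a $C_Z$-supremum by a simple-function approximation, with the $q\mathbb{Q}\not\ll\mathbb{P}$ case handled by inner regularity. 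Finally, the attainment of the minimum in~\eqref{dualL} falls directly out of the Banach--Alaoglu compactness in Theorem~\ref{thm:dual}, not out of the concrete form of $D^\alpha_v$.
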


To be able to derive the existence of optimal strategies and extend the duality \eqref{dualL} 
to Borel measurable random endowments $X$, we need the following no-arbitrage condition\footnote{Obviously,
(NA) is weaker than the assumption that no $\p \in {\cal P}$ admits arbitrage. On the other hand, 
it implies e.g. the robust no-arbitrage condition NA$({\cal P})$ of \cite{BN}, which has been 
used in \cite{N16,BC,NS,B} to derive the existence of optimal strategies. Indeed,
assume (NA) holds and there exists a strategy such that $\p[\sum_{t=1}^T \vartheta_t \Delta S_t \ge 0] = 1$
for all $\p \in {\cal P}$. Then each $\p \in {\cal P}$ is dominated by a $\p' \in {\cal P}$ that does not 
admit arbitrage. Hence, $\p[\sum_{t=1}^T \vartheta_t \Delta S_t > 0] = \p'[\sum_{t=1}^T \vartheta_t \Delta S_t > 0] =0$, 
showing that NA$({\cal P})$ holds.}:

\begin{itemize}
\item[{\rm (NA)}] every $\mathbb{P} \in \mathcal{P}$ is dominated by a $\mathbb{P}' \in\mathcal{P}$ that 
does not admit arbitrage,
\end{itemize}
where a Borel probability measure $\p$ on $\Omega$ is said to admit arbitrage if there exists a strategy 
$\vartheta \in \Theta$ such that $\p[\sum_{t=1}^T \vartheta_t \Delta S_t > 0] > 0$ and 
$\p[\sum_{t=1}^T \vartheta_t \Delta S_t \ge 0] = 1$. 

\begin{theorem} \label{thm2}
Assume a medial limit exists, $u$ fulfills {\rm (U1)}--{\rm (U3)} and {\rm (NA)} holds.
Then the supremum in \eqref{U1} is attained for every Borel measurable function $X \colon \Omega \to \mathbb{R}$ 
such that $U(X) \in \mathbb{R}$. If, in addition, {\rm (A1)--(A2)} are satisfied, then 
\be \label{dualB}
U(X) = \inf_{(q, \mathbb{Q}) \in \hat{\cal Q}_Z} 
\crl{q \mathbb{E}^{\mathbb{Q}}X + D^{\alpha}_v(q \q)} \in \mathbb{R} \quad \mbox{for all } X \in B_Z.
\ee
\end{theorem}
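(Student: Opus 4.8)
The plan is to prove the two assertions in turn. The existence of an optimizer uses only (U1)--(U3), (NA) and the medial limit; the dual representation on $B_Z$ then uses (A1)--(A2) together with Theorem~\ref{thm1} and the existence part.

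\emph{Existence of an optimal strategy.} Fix a Borel $X$ with $U(X)\in\R$ and pick $\vartheta^n\in\Theta$ with
\[
\inf_{\p\in{\cal P}}\crl{\E^\p u(X+g^n)+\alpha(\p)}\ge U(X)-\tfrac1n,\qquad g^n:=\textstyle\sum_{t=1}^T\vartheta^n_t\Delta S_t .
\]
The first, and most delicate, step is a quasi-sure a priori bound on the $\vartheta^n$: from the superlinear decay (U3) and $U(X)>-\infty$ one obtains, for every arbitrage-free $\p'\in{\cal P}$ (these exist and dominate all of ${\cal P}$ by (NA)), a uniform-in-$n$ bound $\E^{\p'}\abs{g^n}\le c_{\p'}$, whereupon the classical ``no arbitrage implies closedness of the gains set'', applied to each such $\p'$ and combined across $\p'$, allows one to replace the $\vartheta^n$ by strategies that are quasi-surely bounded in $n$. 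Granting this, set $\vartheta_t:=\lm_n\vartheta^n_t$ (rigorously defined as a double limit via truncation, since the $\vartheta^n_t$ need not be uniformly bounded). By the defining property of $\lm$ the function $\vartheta_t$ is ${\cal F}^*_{t-1}$-measurable, so $\vartheta\in\Theta$, and by linearity of $\lm$ one has $X+\sum_{t}\vartheta_t\Delta S_t=\lm_n(X+g^n)$ quasi-surely.

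To verify that $\vartheta$ is optimal, fix $\p\in{\cal P}$. Since $\lm$ is a normalized positive linear functional and $u(\omega,\cdot)$ is concave, Jensen's inequality gives $u(\omega,X(\omega)+\sum_t\vartheta_t(\omega)\Delta S_t(\omega))\ge\lm_n u(\omega,X(\omega)+g^n(\omega))$ pointwise quasi-surely (the sequences are bounded by the a priori bound, and $u$ is bounded above by (U2)). Integrating this against $\p$, truncating the right-hand sequence from below at $-K$, using $\E^\p(\lm_n Y_n)=\lm_n\E^\p Y_n$ for the now uniformly bounded truncations, letting $K\to\infty$ by monotone convergence, and finally invoking $\E^\p u(X+g^n)\ge U(X)-\tfrac1n-\alpha(\p)$, one arrives at $\E^\p u(X+\sum_t\vartheta_t\Delta S_t)+\alpha(\p)\ge U(X)$. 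As $\p\in{\cal P}$ was arbitrary, $\vartheta$ attains the supremum in \eqref{U1}.

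\emph{Duality on $B_Z$.} Let $\phi(X)$ denote the right-hand side of \eqref{dualB}. The inequality $U(X)\le\phi(X)$ is weak duality: the Fenchel--Young inequality $u(x)\le v(y)+xy$ and the martingale property (so that, after a standard localization ensuring integrability of the trading gains, $\E^\q\sum_t\vartheta_t\Delta S_t=0$ for $(q,\q)\in\hat{\cal Q}_Z$) give $\E^\p u(X+g^\vartheta)+\alpha(\p)\le q\E^\q X+\E^\p v(q\,d\q/d\p)+\alpha(\p)$ for all $\p\in{\cal P}$ and $\vartheta\in\Theta$; taking $\inf_\p$, $\sup_\vartheta$ and $\inf_{(q,\q)}$ yields $U(X)\le\phi(X)$, and comparing $X$ with $\pm cZ\in C_Z\subseteq L_Z$ together with Theorem~\ref{thm1} shows $U(X)\in\R$ for all $X\in B_Z$. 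For the reverse inequality the plan is a capacitability argument. By Theorem~\ref{thm1}, $\phi=U$ on $L_Z$; moreover $\phi$ is monotone, continuous from above along decreasing sequences in $B_Z$ (dominated convergence plus interchange of two infima), and continuous from below along increasing sequences in $L_Z$ --- this last property being where the existence part and the medial limit re-enter, since for $Y_n\uparrow Y$ in $L_Z$ one takes optimal strategies $\vartheta^n$ for $Y_n$, forms $\vartheta$ by the medial limit as above, uses $Y_n\le Y$ to orient Jensen's inequality, and deduces $U(Y)=\lim_n U(Y_n)$. The functional version of Choquet's capacitability theorem then yields, for every Borel $X\in B_Z$,
\[
\phi(X)=\inf\crl{\phi(Y):Y\in L_Z,\ Y\ge X}=\inf\crl{U(Y):Y\in L_Z,\ Y\ge X}\ge U(X),
\]
the last inequality by monotonicity of $U$; combined with weak duality this proves \eqref{dualB}.

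\emph{Main obstacle.} The crux is the quasi-sure a priori bound on the near-optimal strategies (used in the existence proof and again for the continuity of $\phi$ on $L_Z$): because ${\cal P}$ is neither dominated nor countable one cannot pass to a single $\p$-almost surely convergent subsequence, so (NA) --- which supplies arbitrage-free dominating measures and hence closedness of the gains sets under each of them --- must be combined with the medial limit, which performs the ``simultaneously over all $\p\in{\cal P}$'' selection that would otherwise require a diagonal argument over uncountably many measures.
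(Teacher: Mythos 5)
Your overall plan (medial limit $\Rightarrow$ existence of an optimizer; existence $\Rightarrow$ continuity of $U$ along monotone sequences; Choquet capacitability $+$ Theorem~\ref{thm1} $\Rightarrow$ duality on $B_Z$) matches the paper's strategy, but there are three concrete gaps, two of which are fatal to the argument as written.

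\emph{(i) The ``quasi-sure a priori bound'' is not justified and is not what is needed.} You assert that (NA) plus ``classical closedness of the gains set, combined across $\p'$'' lets you \emph{replace} the $\vartheta^n$ by strategies whose components are quasi-surely bounded in $n$. The classical argument (Kabanov--Stricker / Schachermayer) produces, for a \emph{single} arbitrage-free measure, an a.s.-convergent subsequence; it does not yield a priori boundedness, and the subsequence depends on the measure, so ``combining across $\p'$'' is precisely the hole the medial limit is supposed to patch --- it cannot be patched beforehand. The paper never bounds or replaces the strategies. Instead, for each $\p$, it picks a dominating arbitrage-free $\p'$, an equivalent martingale measure $\q$ with bounded density, and bounds $\E^\q\brak{\sum_t\vartheta^n_t\Delta S_t}^-$ uniformly in $n$ via Lemma~\ref{lem:mart}; an induction on $t$ using the $\q$-martingale property and part (v) of Lemma~\ref{lem:mllemma} then shows $\lm_n\abs{\vartheta^n_t\Delta S_t}<\infty$ $\q$-a.s.\ even though $\lm_n\abs{\vartheta^n_t}$ may be infinite on a set $A^\pm_t$; the key observation $\q[A^\pm_t\cap\{\Delta S_t\neq0\}]=0$ shows that setting $\vartheta^*_t:=0$ there is harmless. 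Your argument skips exactly this mechanism.

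\emph{(ii) The Choquet step proves the wrong inequality.} You derive $\phi(X)=\inf\{\phi(Y):Y\ge X,\,Y\in L_Z\}=\inf\{U(Y):\cdots\}\ge U(X)$, which is just weak duality again; combined with weak duality it gives $U\le\phi$, not equality. What you need is outer regularity of $U$ (not of $\phi$): $U(X)=\inf\{U(Y):Y\ge X,\,Y\in L_Z\}$, which via $U=\phi$ on $L_Z$ and monotonicity of $\phi$ gives $U(X)\ge\phi(X)$. Equivalently, as the paper does, set $\tilde\phi:=-U(-\cdot)$, verify (R2)--(R3) of Theorem~\ref{thm:dual} for $\tilde\phi$ (with (R2) from Lemma~\ref{lemma:Ut} and (R3) from Lemma~\ref{lem:existence.strategy}), and read off the representation.

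\emph{(iii) The two continuity properties are swapped.} You attribute ``continuity from above along decreasing sequences in $B_Z$'' to dominated convergence and the medial limit to ``continuity from below along increasing sequences in $L_Z$.'' It is the other way around: continuity of $U$ along \emph{decreasing} $B_Z$-sequences (the hard direction, needed for (R3) after the sign flip) is exactly where the medial limit and (NA) enter (Lemma~\ref{lem:existence.strategy}), whereas continuity along \emph{increasing} sequences to $L_Z$-limits (\eqref{upL}, needed for (R2)) is established by minimax and compactness arguments in Lemma~\ref{lemma:Ut} without any medial limit. Your ``dominated convergence plus interchange of two infima'' would not establish the decreasing continuity, since $\alpha$ is not assumed to have compact sublevel sets in a topology strong enough to make the inner $\inf$ commute with pointwise limits along arbitrary Borel decreasing sequences.
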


In the special case, where $\alpha \equiv 0$ and $u$ is of the form  $u(x) = - \exp(-\lambda x)$ for a risk-aversion 
parameter $\lambda > 0$, the dual expression \eqref{dualB} simplifies if instead of \eqref{U2}, 
one considers the equivalent problem
$$
W(X) = \sup_{\vartheta \in \Theta} \inf_{\mathbb{P} \in \mathcal{P}} - \frac{1}{\lambda} \log \mathbb{E}^\mathbb{P}
\exp \brak{- \lambda X - \lambda \sum_{t=1}^T \vartheta_t \Delta S_t}.
$$

\begin{corollary} \label{cor}
Assume a medial limit exists and
${\cal P}$ is a non-empty $\sigma({\cal M}_Z, C_Z)$-closed convex subset of ${\cal M}_Z$ 
satisfying {\rm (NA)}. If there exists an increasing function $\beta \colon [1,\infty) \to \mathbb{R}$ 
such that $\lim_{x \to \infty} \beta(x)/x = \infty$ and
\[
\sup_{\p \in {\cal P}} \mathbb{E}^{\p} \exp(\beta(Z)) < \infty,
\] 
then
\[
W(X) = \inf_{\mathbb{Q} \in \mathcal{Q}_Z} 
\crl{\mathbb{E}^{\mathbb{Q}}X + \frac{1}{\lambda} H(\q \,\| \, {\cal P})} \in \mathbb{R} \quad \mbox{for all } X \in B_Z,
\]
where $H(\q \,\| \, {\cal P}) := \inf_{\p \in {\cal P}} H(\q \,\| \, \p)$ is the robust version of the relative entropy 
\[
H(\q \,\| \, \p) := \begin{cases}
\EQ \log(d\q / d\p) & \quad \mbox{if } \q \ll \p \\
 \infty & \quad \mbox{otherwise.}
\end{cases}
\]
\end{corollary}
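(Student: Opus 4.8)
The plan is to deduce the corollary from the dual representation \eqref{dualB} of Theorem~\ref{thm2}, specialized to the exponential utility $u(x)=-\exp(-\lambda x)$ and the trivial penalty $\alpha\equiv 0$, and then to pass from $U$ to $W$ by the strictly increasing transformation $t\mapsto-\tfrac{1}{\lambda}\log(-t)$. First I would check that the hypotheses of Theorem~\ref{thm2} are met. Conditions (U1)--(U3) hold for $u(x)=-\exp(-\lambda x)$, which is increasing and concave, bounded and continuous on every $[-n,\infty)$, and satisfies $u(x)/|x|\to-\infty$ as $x\to-\infty$; (NA) is assumed. For (A1), $\alpha\equiv 0$ is convex and its sublevel sets equal the convex, $\sigma({\cal M}_Z,C_Z)$-closed set ${\cal P}$. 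For (A2) I would apply it with $x\mapsto\beta(x)/\lambda$, which is increasing with $\beta(x)/(\lambda x)\to\infty$; since $u(-\beta(Z)/\lambda)=-\exp(\beta(Z))$, the assumed bound gives $\inf_{\p\in{\cal P}}\E^{\p}u(-\beta(Z)/\lambda)=-\sup_{\p\in{\cal P}}\E^{\p}\exp(\beta(Z))>-\infty$.

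Next I would identify the two ingredients of \eqref{dualB}. A one-line optimization gives the convex conjugate $v(y)=\tfrac{y}{\lambda}(\log(y/\lambda)-1)$ for $y>0$ and $v(0)=0$, which is bounded below (by $u(0)=-1$). Hence for $q>0$ and $\q\ll\p$ with density $\varphi:=d\q/d\p$, splitting $v(q\varphi)=\tfrac{q}{\lambda}\varphi\log\varphi+\tfrac{q}{\lambda}(\log(q/\lambda)-1)\varphi$ and using $\E^{\p}\varphi=1$ together with $\E^{\p}[\varphi\log\varphi]=\E^{\q}[\log(d\q/d\p)]=H(\q\,\|\,\p)$ (all terms being well defined, since the negative part of $v(q\varphi)$ is bounded and $\varphi$ is $\p$-integrable) yields $D_v(q\q\,\|\,\p)=\tfrac{q}{\lambda}H(\q\,\|\,\p)+\tfrac{q}{\lambda}(\log(q/\lambda)-1)$; taking the infimum over $\p\in{\cal P}$ gives $D^{0}_v(q\q)=\tfrac{q}{\lambda}H(\q\,\|\,{\cal P})+\tfrac{q}{\lambda}(\log(q/\lambda)-1)$, while the $q=0$ term in \eqref{dualB} equals $\inf_{\p\in{\cal P}}\E^{\p}v(0)=0$.

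Substituting into \eqref{dualB} and using that $(q,\q)\in\hat{\cal Q}_Z$ with $q>0$ forces $\q\in{\cal Q}_Z$, I would set $g(\q):=\E^{\q}X+\tfrac{1}{\lambda}H(\q\,\|\,{\cal P})$ and rewrite \eqref{dualB} as
\[
U(X)=\min\crl{\,0,\ \inf_{\q\in{\cal Q}_Z}\ \inf_{q>0}\ \brak{q\,g(\q)+\tfrac{q}{\lambda}(\log(q/\lambda)-1)}\,}.
\]
The inner minimization over $q>0$ is elementary: the minimizer is $q^{*}=\lambda\exp(-\lambda g(\q))$ with value $-\exp(-\lambda g(\q))<0$, so the outer $\min\{0,\cdot\}$ is never active and
\[
U(X)=-\sup_{\q\in{\cal Q}_Z}\exp(-\lambda g(\q))=-\exp\brak{-\lambda\inf_{\q\in{\cal Q}_Z}g(\q)}.
\]
By Theorem~\ref{thm2} this is a real number, and it is strictly negative: by (NA) a fixed $\p\in{\cal P}$ is dominated by some $\p'\in{\cal P}$ without arbitrage, and for every $\vartheta$ one has $\inf_{\p}\E^{\p}[-\exp(-\lambda(X+\sum_t\vartheta_t\Delta S_t))]\le\E^{\p'}[-\exp(-\lambda(X+\sum_t\vartheta_t\Delta S_t))]$, whose supremum over $\vartheta$ is the classical exponential-utility value under the arbitrage-free $\p'$ and hence finite and negative (the moment bound ensuring $\E^{\p'}\exp(-\lambda X)<\infty$). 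Therefore $\inf_{\q\in{\cal Q}_Z}g(\q)\in\R$.

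It remains to pass from $U$ to $W$. Since $t\mapsto-\tfrac{1}{\lambda}\log(-t)$ is increasing and continuous on $(-\infty,0]$ — so it commutes with $\sup_{\vartheta}$ — and $\inf_{\p}\E^{\p}[-\exp(-\lambda(\cdot))]=-\sup_{\p}\E^{\p}\exp(-\lambda(\cdot))$, we obtain $W(X)=-\tfrac{1}{\lambda}\log(-U(X))=\inf_{\q\in{\cal Q}_Z}\crl{\E^{\q}X+\tfrac{1}{\lambda}H(\q\,\|\,{\cal P})}\in\R$, which is the claimed identity. The only step requiring genuine care is the identity reducing the $v$-divergence to the relative entropy, together with the boundary behaviour at $q=0$ and the strict inequality $U(X)<0$; everything else is a specialization of Theorem~\ref{thm2} combined with one-dimensional calculus.
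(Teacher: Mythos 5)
Your proof follows essentially the same route as the paper's: verify (U1)--(U3), (A1) with $\alpha\equiv 0$, and (A2) with $\beta/\lambda$ for the exponential utility; compute $v(y)=\tfrac{y}{\lambda}(\log(y/\lambda)-1)$; reduce the $v$-divergence to relative entropy; substitute into the dual of Theorem~\ref{thm2}; perform the one-dimensional optimization over $q$; and pass from $U$ to $W$ via $t\mapsto-\tfrac1\lambda\log(-t)$. Your version spells out a few steps the paper leaves implicit (the $q=0$ boundary term, the explicit minimizer $q^*=\lambda e^{-\lambda g(\q)}$, and the strict negativity of $U(X)$ — though for the latter it already suffices to observe, as the paper does, that the optimal strategy exists and $u<0$ pointwise), but the argument is correct and conceptually identical.
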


In the following, we discuss three examples of robust utility maximization problems 
that are neither dominated nor time-consistent but still fit in our framework.

\begin{Example} \label{ex:mo}
Our first example is of the form \eqref{U2} for a set of probability measures ${\cal P}$ 
given by moment constraints. Consider a sample space of the form 
$\Omega = \Omega_0 \times \dots \times \Omega_T$, where $\Omega_0 = \crl{(a_0,s_0)}$ for 
fixed initial values $a_0,s_0 > 0$ and $\Omega_t = [a_t,b_t] \times (0,\infty)$ for constants $0 < a_t \leq b_t$,
$t = 1, \dots, T$. Note that $Z(\omega) = \sum_{t=0}^T \omega_{t,2} \vee (\omega_{t,2})^{-1}$ defines a continuous 
function $Z \colon \Omega\to [1, \infty)$ with compact sublevel sets $\crl{Z \le z}$, $z \in \mathbb{R}_+$, 
such that $Z \ge 1 \vee \sum_{t=0}^T |S_t|$. For all $t = 1, \dots, T$ and $i = 1, \dots, I$, 
let $c^i < 0$ and $d^i, C^i_t, D^i_t > 0$ be constants such that $\min_i c^i < -1$ and $\max_i d^i > 1$. 
Assume that the set ${\cal P}$ of all Borel probability measures on $\Omega$ satisfying the 
moment constraints\footnote{Alternatively, one can consider a set ${\cal P}$ of 
Borel probability measures satisfying moment conditions of the form
$\mathbb{E}^\mathbb{P}[(M_t S_t)^{c^i}] \le C^i_t$
and $\mathbb{E}^\mathbb{P}[(M_tS_t)^{d^i}] \le D^i_t$ for $t = 1, \dots, T$ and
$i = 1, \dots, I$, where $M_t$ describes the evolution of the money market account. Then, 
provided that ${\cal P}$ is non-empty, (A1) is still satisfied, and (A2) holds under the same conditions on $u$.
A sufficient condition for (NA) is that there exist constants $e_t \in [a_t,b_t]$ such that $(e_ts_0)^{c^i} < C^i_t$ and 
$(e_ts_0)^{d^i} < D^i_t$ for all $t = 1, \dots, T$ and $i = 1, \dots, I$.
}
\[
\mathbb{E}^\mathbb{P}[S_t^{c^i}] \le C^i_t \quad \mbox{and} \quad 
\mathbb{E}^\mathbb{P}[S_t^{d^i}] \le D^i_t \quad \mbox{for all } t=1, \dots, T \mbox{ and } i = 1, \dots, I,
\]
is non-empty. Then ${\cal P}$ fulfills (A1) for $\alpha \equiv 0$. 
Moreover, if $u \colon \Omega\times \mathbb{R} \to \mathbb{R}$ is a random 
utility function satisfying (U1)--(U3) and there exists a constant 
\[
q < \max_{1 \le i \le I} |c^i| \wedge \max_{1 \le i \le I} |d^i|
\] 
such that $u(\omega, x)/(1+ |x|^q)$ is bounded, then (A2) holds for $\alpha \equiv 0$. 
Finally, if $s_0^{c^i} < C^i_t$ and $s_0^{d^i} < D^i_t$ for all $t = 1, \dots, T$ and $i=1, \dots,I$,
then ${\cal P}$ also satisfies (NA). Proofs are given in Appendix \ref{ap:mo}.

\end{Example}

\begin{Example} \label{ex:wball}
As a second example, we consider a problem of the form \eqref{U2} with a set ${\cal P}$ of probability measures 
that are within a given Wasserstein distance of a reference measure. Let the sample space 
$\Omega$ be of the same form as in Example \ref{ex:mo}, and consider the metric
\[
d(\omega, \omega') := \brak{\sum_{t =1}^T e^{-\rho \kappa t} (|\omega_{t,1} - \omega'_{t,1}|^{\kappa} 
+ |\varphi(\omega_{t,2}) - \varphi(\omega'_{t,2})|^{\kappa})}^{1/\kappa}, \quad 
\omega, \omega' \in \Omega,
\]
where $\rho \ge 0$ and $\kappa \ge 1$ are constants and the function $\varphi \colon (0, \infty) \to \mathbb{R}$
is given by 
\[
\varphi(x) := \begin{cases}
x-1 & \mbox{ if } x > 1\\
\log(x) & \mbox{ if } x \le 1.
\end{cases}
\]
Denote $\omega^* = ((a_0,s_0), (a_1,1), \dots, (a_T,1)) \in \Omega$. Then,
$Z(\omega) = s_0 + T + e^{\rho T} T^{1-1/\kappa} d(\omega, \omega^*)$ is a continuous function 
$Z \colon \Omega\to [1, \infty)$ with compact sublevel sets $\crl{Z \le z}$, $z \in \mathbb{R}_+$, such 
that $Z \ge 1 \vee \sum_{t=0}^T |S_t|$. Choose a reference measure $\p^* \in {\cal M}_Z$ satisfying 
$\mathbb{E}^{\p^*} Z^p < \infty$ for a given exponent $p > 1$.
Fix a constant $\eta > 0$, and consider the ball 
\[ 
{\cal P}:= \crl{\p \in\mathcal{M}_Z :  W_p(\p,\p^*) \leq \eta}
\]
around $\p^*$ with respect to the $p$-Wasserstein distance $W_p$, given by 
\[
W_p(\p, \p^*) := \inf_{\pi} \brak{\int_{\Omega \times \Omega} d(\omega, \omega')^p d \pi(\omega, \omega')}^{1/p},
\] 
where the infimum is taken over all Borel probability measures $\pi$ on $\Omega \times \Omega$
with marginals $\p$ and $\p^*$. Then ${\cal P}$ satisfies (A1) for $\alpha \equiv 0$ as well as (NA). Moreover, if 
$u \colon \Omega\times \mathbb{R} \to \mathbb{R}$ is a random 
utility function satisfying (U1)--(U3) and there exists a constant $q < p$ such that 
$u(\omega, x)/(1+|x|^q)$ is bounded, then also (A2) holds for $\alpha \equiv 0$. 
This is proved in Appendix \ref{ap:wball}.
\end{Example}

\begin{Example}
\label{ex:wpen}
As our last example, we consider a problem of the form \eqref{U1} with a Wasserstein penalty.
Let the sample space $\Omega$ be of the same form as in Examples \ref{ex:mo} and \ref{ex:wball}. 
Fix an exponent $p > 1$, and let $Z$, $d$, $W_p$ be as in Example \ref{ex:wball}. For a given constant $\eta > 0$
and a reference measure $\p^* \in {\cal M}_Z$ satisfying $\mathbb{E}^{\p^*} Z^p < \infty$, define
$\alpha(\mathbb{P}) := \eta W_p(\p,\p^*)^p$ and ${\cal P} := \crl{\p \in {\cal M}_Z : \alpha(\p) < \infty}$.
Then (A1) and (NA) hold. Moreover, if $u \colon \Omega\times \mathbb{R} \to \mathbb{R}$ is a random 
utility function satisfying (U1)--(U3) and there exists a constant $q < p$ such that 
$u(x)/(1+|x|^q)$ is bounded, then (A2) is fulfilled as well. Proofs are provided in Appendix \ref{ap:wpen}.
\end{Example}

The rest of the paper is organized as follows. In Section \ref{sec:rep} we first establish a functional 
version of Choquet's capacitability theorem. Then we derive dual representation results for 
increasing convex functionals on different sets of real-valued functions. 
These results hold for general sample spaces endowed with a 
perfectly normal topology\footnote{in particular, for metrizable sample spaces} 
and do not require the existence of a medial limit. In Section \ref{sec:proofs}, we first prove 
Theorem \ref{thm1}. Then we derive some elementary properties of medial limits,
before we give proofs of Theorem \ref{thm2} and Corollary \ref{cor}. In the appendix we show 
that conditions (A1), (A2) and (NA) hold in the three Examples \ref{ex:mo}, \ref{ex:wball} and \ref{ex:wpen}.

\setcounter{equation}{0}
\section{Functional version of Choquet's capacitability theorem and dual representation of increasing convex functionals}
\label{sec:rep}

In this section, we first derive a functional version of Choquet's capacitability theorem by working 
out a remark at the end of his paper \cite{Ch}. Then we establish a dual representation 
result for increasing convex functionals defined on spaces of measurable functions.

Denote by $\bR$ the extended real line $[-\infty, \infty]$. 
For a given non-empty set $E$, consider two nested subsets $H \subseteq G \subseteq \bR^{E}$ 
such that $H$ is a non-empty lattice and $G$ contains all suprema of increasing\footnote{We 
call a sequence $(X_n)$ in $G$ increasing if $X_{n+1} \ge X_n$ for all $n$ and 
decreasing if $X_{n+1} \le X_n$ for all $n$.} sequences in $G$ as well as
all infima of arbitrary sequences in $G$. An $H$-Suslin scheme is a mapping 
$\sigma : \bigcup_{n\in\mathbb{N}}\mathbb{N}^n\to H$ and an 
$H$-Suslin function an element $X \in \bR^{E}$ of the form 
$$
X = \sup_{\gamma\in\mathbb{N}^{\mathbb{N}}}\inf_{n\in\mathbb{N}} \sigma(\gamma_1,\dots,\gamma_n),
$$ 
where $\sigma$ is an $H$-Suslin scheme. We denote the set of all $H$-Suslin functions by $S(H)$ 
and all infima of sequences in $H$ by $H_{\delta}$. If $\phi : G \to \bR$ is an increasing\footnote{that is,
$\phi(X) \ge \phi(Y)$ for all $X,Y \in G$ such that $X \ge Y$} mapping, 
we extend it to $\bR^{E}$ by setting
$$
\hat{\phi}(X) := \inf \crl{\phi(Y) : X \le Y, \, Y \in G}, \; X \in \bR^{E} 
\quad \mbox{with the convention } \inf \emptyset := + \infty.$$

The following is a functional version of Theorem 1 in \cite{Ch}:

\begin{proposition} \label{prop:capac}
Let $\phi : G \to \bR$ be an increasing mapping with the following two properties:
  
\begin{itemize}
\item[{\rm (C1)}] $\lim_n \phi(X_n)= \phi(\lim_n X_n)$ for every decreasing sequence $(X_n)$ in $H$
\item[{\rm (C2)}] $\lim_n \phi(X_n) = \phi(\lim_n X_n)$ for every increasing sequence $(X_n)$ in  $G$.
\end{itemize}
Then, $\hat{\phi}(X)=\sup\{\phi(Y): Y\leq X, \, Y\in H_\delta\}$ for all $X\in S(H)$.
\end{proposition}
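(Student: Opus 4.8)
## Proof plan for Proposition \ref{prop:capac}

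The plan is to follow Choquet's classical capacitability argument, but carried out at the level of functions rather than sets, using the two continuity hypotheses (C1) and (C2) as the functional analogues of continuity along monotone sequences of sets. Throughout, write $L(X) := \sup\{\phi(Y) : Y \le X,\ Y \in H_\delta\}$ for $X \in S(H)$; since $H \subseteq H_\delta \subseteq G$, monotonicity of $\phi$ gives $L(X) \le \hat\phi(X)$ trivially, so the whole content is the reverse inequality $\hat\phi(X) \le L(X)$.

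First I would record the easy reduction: for a decreasing sequence $(Y_n)$ in $H_\delta$ one still has $\lim_n \phi(Y_n) = \phi(\lim_n Y_n)$, and $\lim_n Y_n \in H_\delta$. This follows by writing each $Y_n$ as an infimum of a sequence in $H$, taking a diagonal/minimum arrangement to exhibit $\lim_n Y_n$ as an element of $H_\delta$, and then invoking (C1) on the resulting decreasing sequence in $H$ (together with monotonicity of $\phi$ to squeeze the limit). This lets me replace ``$H$'' by ``$H_\delta$'' in (C1), which is what the Suslin-scheme bookkeeping needs, since the stabilized sets one builds from a Suslin scheme naturally live in $H_\delta$, not in $H$.

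The core argument: fix $X = \sup_{\gamma \in \mathbb{N}^\mathbb{N}} \inf_n \sigma(\gamma_1,\dots,\gamma_n) \in S(H)$ and fix $a \in \bR$ with $a < \hat\phi(X)$; I must produce $Y \in H_\delta$ with $Y \le X$ and $\phi(Y) > a$ (more precisely $\hat\phi(Y) > a$, which suffices since $Y \in G$). For each finite string $s = (k_1,\dots,k_n)$ and each $m \in \mathbb{N}$ define the truncated Suslin function $X_{s}^{m} := \sup \{\inf_j \sigma(\gamma_1,\dots,\gamma_j) : \gamma \supseteq s,\ \gamma_{n+1} \le m\}$ (a sup over a restricted branch set, which still lies in $G$ because $G$ is closed under the relevant sups of increasing sequences and arbitrary infs). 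As $m \uparrow \infty$, $X_s^m \uparrow X_s := \sup\{\inf_j \sigma(\gamma_1,\dots,\gamma_j) : \gamma \supseteq s\}$, so by (C2), $\hat\phi(X_s^m) \uparrow \hat\phi(X_s)$. Starting from the empty string (where $X_\emptyset = X$, $\hat\phi(X_\emptyset) > a$), I inductively choose $k_{n+1}$ large enough that $\hat\phi(X_{(k_1,\dots,k_{n+1})}) > a$; this is possible precisely by the (C2)-continuity just noted. This produces a single $\gamma = (k_1, k_2, \dots) \in \mathbb{N}^\mathbb{N}$ along which all the approximants have $\hat\phi$-value exceeding $a$.

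Now set $Y_n := \sigma(\gamma_1, \dots, \gamma_n) \wedge \sigma(\gamma_1,\dots,\gamma_{n-1}) \wedge \cdots \wedge \sigma(\gamma_1)$ (a finite minimum in the lattice $H$, hence in $H$) and $Y := \inf_n Y_n \in H_\delta$; by construction $Y = \inf_n \sigma(\gamma_1,\dots,\gamma_n) \le X$. The remaining point is $\phi(Y) > a$, equivalently $\lim_n \phi(Y_n) > a$ by (C1). Here I need to compare $Y_n$ with the truncated Suslin functions $X_{(\gamma_1,\dots,\gamma_n)}$: one checks $X_{(\gamma_1,\dots,\gamma_n)} \le Y_n$ pointwise (every branch extending $(\gamma_1,\dots,\gamma_n)$ has its $n$-th-level infimum dominated by $Y_n$), and hence $\phi(Y_n) \ge \hat\phi(Y_n) \ge \hat\phi(X_{(\gamma_1,\dots,\gamma_n)}) > a$; letting $n \to \infty$ and using (C1) gives $\phi(Y) = \lim_n \phi(Y_n) \ge a$. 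A slight refinement (choosing the inductive thresholds to beat $a + \varepsilon$, or a second diagonalization over a sequence $a_k \uparrow \hat\phi(X)$) upgrades ``$\ge a$'' to the strict ``$> a$'' needed, yielding $L(X) \ge \hat\phi(X)$.

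The step I expect to be the main obstacle is the pointwise domination $X_{(\gamma_1,\dots,\gamma_n)} \le Y_n$ combined with making the inductive choice of $\gamma$ rigorous: one must be careful that the ``branch-restricted supremum'' operations stay inside $G$ (this is exactly why $G$ is assumed closed under suprema of \emph{increasing} sequences and under \emph{arbitrary} infima — the truncations $X_s^m$ are increasing in $m$ but the stabilizations are countable infima), and that (C2) is applied to genuinely increasing sequences in $G$. The lattice hypothesis on $H$ is used only to form the finite minima $Y_n$, and (C1) is used only at the very end on the decreasing sequence $(Y_n) \subseteq H$; everything else is the Suslin-scheme diagonalization, which is purely combinatorial.
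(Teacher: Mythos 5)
Your core inductive step is wrong. From $\hat\phi(X_{(k_1,\dots,k_n)}) > a$ and $X_{(k_1,\dots,k_n)}^m \uparrow X_{(k_1,\dots,k_n)}$ you obtain (granting the transfer of (C2) from $\phi$ on $G$ to $\hat\phi$, which is indeed provable by a diagonal argument) an $m$ with $\hat\phi(X_{(k_1,\dots,k_n)}^m) > a$. But $X_{(k_1,\dots,k_n)}^m = \sup_{k\le m} X_{(k_1,\dots,k_n,k)}$ is a \emph{finite} supremum, and an increasing functional with $\hat\phi\bigl(\sup_{k\le m} Z_k\bigr) > a$ need not have $\hat\phi(Z_k) > a$ for any single $k$: take $\phi$ to be integration against Lebesgue measure, $Z_1 = 1_{[0,1/2]}$, $Z_2 = 1_{[1/2,1]}$, $a = 3/4$. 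So you cannot pick out one next coordinate $k_{n+1}$, and the single-branch witness $Y = \inf_n \sigma(\gamma_1,\dots,\gamma_n)$ along one $\gamma\in\mathbb{N}^\mathbb{N}$ is the wrong object. Choquet's actual argument carries the whole truncation $(m_1,\dots,m_n)$ along: the correct $n$-th approximant is $Y_n := \sup\bigl\{\min_{j\le n}\sigma(k_1,\dots,k_j) : k_i\le m_i,\ i=1,\dots,n\bigr\}$, a finite sup of finite infs, hence in $H$ by the lattice hypothesis, with $Y := \inf_n Y_n \in H_\delta$. Proving $Y \le X$ then requires a compactness (K\"onig's lemma) step in the finitely-branching tree $\prod_n\{1,\dots,m_n\}$, which your proposal does not address either.

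The paper avoids re-deriving all of this by encoding each function as a sub-graph subset of $F = E\times\bR$, defining the regularization $\tilde\phi(B) = \inf\{\hat\phi(X_A): B\subseteq A,\ A\in\mathcal{A}\}$, checking that $\tilde\phi$ is an abstract capacity on $(F,\mathcal{H}_\delta)$, and then invoking Choquet's set-theoretic Theorem~1 as a black box. The parts of your plan that are correct --- upgrading (C1) from $H$ to $H_\delta$, and transferring (C2) to $\hat\phi$ on increasing limits outside $G$ --- are precisely the content of that capacity verification; the Suslin-scheme diagonalization that you try to redo by hand, and where your gap lies, is exactly the part the paper delegates to Choquet.
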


\begin{proof}
Denote $F = E \times \bR$, and let ${\cal A}$ be the collection of subsets of $F$
of the form $\bigcup_{x \in E} \crl{x} \times A_x$, where for each $x$, $A_x =[-\infty,a_x)$ or 
$A_x =[-\infty,a_x]$ for some $a_x \in \bR$. 
Then ${\cal A}$ is stable under intersections and unions. For $A \in {\cal A}$, define 
$X_A : E \to \bR$ by $X_A(x) := a_x$. Then for any 
family of subsets $(A_{\alpha}) \subseteq {\cal A}$, one has $X_{\bigcap_{\alpha} A_{\alpha}} = \inf_{\alpha} X_{A_{\alpha}}$
and $X_{\bigcup_{\alpha} A_{\alpha}} = \sup_{\alpha} X_{A_{\alpha}}$. In particular,
$\mathcal{H}_{\delta} := \{A \in \mathcal{A}: X_A \in H_{\delta}\}$ is stable under finite unions and countable intersections.
It is clear that the set function $\tilde{\phi} : 2^F \to \bR$, given by 
$$\tilde{\phi}(B) := \inf\{\hat{\phi}(X_A) : B \subseteq A, \, A \in \mathcal{A} \},$$ is increasing\footnote{that is,
$\tilde{\phi}(B) \ge \tilde{\phi}(C)$ for all $B,C \in 2^F$ such that $B \supseteq C$} and satisfies 
$\lim_n \tilde{\phi}(B_n) = \tilde{\phi}(\bigcap_n B_n)$ for decreasing\footnote{that is, $B_{n+1} \subseteq B_n$
for all $n$} sequences $(B_n)$ in 
${\cal H}_{\delta} $. Moreover, if $(B_n)$ is an increasing\footnote{that is, $B_{n+1} \supseteq B_n$ for all $n$} 
sequence of subsets of $F$ such that $\lim_n \tilde{\phi}(B_n) < + \infty$, there exist $A_n \in {\cal A}$ and $Y_n \in G$ such that
$B_n \subseteq A_n$, $X_{A_n} \le Y_n$ and $\phi(Y_n) \le \tilde{\phi}(B_n) + 1/n$ (or $\phi(Y_n) \le -n$ in case
$\tilde{\phi}(B_n) = - \infty$). The sequences $\tilde{A}_n = \bigcap_{m \ge n} A_m$ and $\tilde{Y}_n = \inf_{m \ge n} Y_m$
are increasing, and one has $\bigcup_n B_n \subseteq A := \bigcup_n \tilde{A}_n \in {\cal A}$ 
as well as $X_A \le Y := \sup_n \tilde{Y}_n \in G$. So 
$$
\mbox{$\tilde{\phi}\brak{\bigcup_n B_n}$} \le \hat{\phi}(X_A) \le \phi(Y) = \lim_n \phi(\tilde{Y}_n) \le \lim_n \tilde{\phi}(B_n).
$$
This shows that $\tilde{\phi}$ is an abstract capacity on $(F, {\cal H}_{\delta})$
according to \cite{Ch}. For an $H$-Suslin function of the form
$X = \sup_{\gamma \in \mathbb{N}^{\mathbb{N}}}\inf_{n \in \mathbb{N}} \sigma(\gamma_1,\dots,\gamma_n)$,
define $\tilde{\sigma} : \bigcup_{n \in \mathbb{N}}\mathbb{N}^n\to\mathcal{H}_{\delta}$ by
$\tilde{\sigma}(\cdot) := \bigcup_{x \in E} \crl{x} \times [-\infty, \sigma(\cdot)(x)]$. Then
\[A = \bigcup_{\gamma \in \mathbb{N}^{\mathbb{N}}} \bigcap_{n \in\mathbb{N}}
\tilde{\sigma}(\gamma_1,\dots,\gamma_n)\] is a Suslin set generated by ${\cal H}_{\delta}$ satisfying $X_A =X$.
So one obtains from Theorem 1 of \cite{Ch} that
$$
\hat{\phi}(X) = \tilde{\phi}(A) = \sup\{\tilde{\phi}(B): B \subseteq A,  \, B \in {\cal H}_\delta\} 
= \sup\{\phi(Y): Y \leq X, \, Y \in H_\delta\}.
$$
\end{proof}

In the following, let $E$ be a perfectly normal topological space\footnote{In particular, this covers all metric spaces.} 
and $V \colon E \to \mathbb{R}_+ \setminus \crl{0}$ 
a continuous function. Denote by $B_V$ the set of all Borel measurable functions 
$X\colon E\to\mathbb{R}$ such that $X/V$ is bounded and
by $C_V$ and $U_V$ the subsets consisting of all continuous and upper semicontinuous functions in $B_V$, respectively.
If $(X_n)$ is an increasing (decreasing) sequence of real-valued functions on $E$ that converges pointwise to a real-valued 
function $X$ on $E$, we write $X_n \uparrow X$ ($X_n \downarrow X$). Let $ca^+_V$ be the set of all Borel 
measures $\mu$ on $E$ satisfying $\ang{V, \mu} < + \infty$. For a real-valued mapping $\phi$ 
defined on a subset of $B_V$ containing $C_V$, we define
\be \label{conj}
\phi^*_{C_V}(\mu):= \sup_{X \in C_V} \crl{\ang{X,\mu}-\phi(X)}, \quad \mu \in ca^+_V.
\ee
Then the following holds:

\begin{theorem} \label{thm:dual}
If $\phi : C_V \to \mathbb{R}$ is an increasing convex functional satisfying 

\begin{itemize}
\item[{\rm (R1)}]
$\phi(X_n)\downarrow \phi(0)$ for every sequence $(X_n)$ in $C_V$ such that $X_n \downarrow 0$,
\end{itemize}
then \be \label{repC} 
\phi(X) = \max_{\mu \in ca_V^+} \crl{\ang{X,\mu} - \phi_{C_V}^\ast(\mu)} \quad \text{for each } X \in C_V,
\ee
and all sublevel sets $\{\mu \in ca^+_V : \phi^*_{C_V}(\mu)\leq c \},$ $c \in \mathbb{R}$, are 
$\sigma(ca^+_V, C_V)$-compact.

Moreover, every increasing convex functional $\phi : U_V \to \mathbb{R}$ with the property 
\begin{itemize}
\item[{\rm (R2)}] 
$\phi(X_n) \downarrow \phi(X)$ for each sequence $(X_n)$ in $C_V$ such that $X_n\downarrow X$
for some $X \in U_V$,
\end{itemize}
has a representation of the form 
\be \label{repU} \phi(X)=\max_{\mu \in ca_V^+} \crl{\ang{X,\mu} - \phi_{C_V}^\ast(\mu)}, \quad
X\in U_V,\ee
and every increasing convex functional $\phi : B_V \to \mathbb{R}$ satisfying {\rm (R2)} together with
\begin{itemize}
\item[{\rm (R3)}] 
$\phi(X_n)\uparrow\phi(X)$ for each sequence $(X_n)$ in $B_V$ such that $X_n \uparrow X$ for some $X \in B_V$,
\end{itemize}
can be written as
\be \label{repB} \phi(X) = \sup_{\mu \in ca_V^+} \crl{\ang{X,\mu} - \phi_{C_V}^\ast(\mu)}, \quad
X\in B_V.\ee
\end{theorem}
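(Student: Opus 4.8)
\textbf{Proof plan for Theorem~\ref{thm:dual}.}

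The plan is to prove the three representations in stages, each one bootstrapping on the previous. For the first part, concerning $\phi\colon C_V\to\mathbb{R}$, the route is classical Fenchel--Moreau duality adapted to the weighted setting. The inequality $\phi(X)\geq\ang{X,\mu}-\phi^*_{C_V}(\mu)$ for all $\mu$ is immediate from the definition \eqref{conj}, so the content is to produce, for each fixed $X_0\in C_V$, a measure $\mu_0\in ca_V^+$ attaining equality. First I would pass to the normalized space by dividing by $V$: the map $Y\mapsto \psi(Y):=\phi(VY)$ is an increasing convex functional on $C_b(E)$ (bounded continuous functions), and (R1) translates into a Dini-type continuity from above at $0$. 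By Hahn--Banach, pick a continuous linear functional $\ell$ on $C_b(E)$ supporting $\psi$ at $Y_0:=X_0/V$, i.e.\ $\psi(Y)\geq\psi(Y_0)+\ell(Y-Y_0)$ for all $Y\in C_b(E)$; monotonicity of $\phi$ forces $\ell\geq 0$. The key point is that (R1) upgrades $\ell$ from merely finitely additive to countably additive: if $Y_n\downarrow 0$ in $C_b(E)$, then $\ell(Y_n)\leq \psi(Y_n)-\psi(0)+\ell(Y_0)-\psi(Y_0)+\psi(0)\to \ell(Y_0)-\psi(Y_0)+\psi(0)$... more precisely one evaluates the support inequality along $tY_n$ and lets $t\to\infty$ to conclude $\ell(Y_n)\to 0$, hence by the Alexandroff/Riesz representation on a perfectly normal space $\ell$ is given by a finite Borel measure $\nu$. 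Setting $d\mu_0=V\,d\nu$ gives $\mu_0\in ca_V^+$ with $\ang{X,\mu_0}=\ell(X/V)$ for $X\in C_V$, and unwinding the support inequality yields $\phi^*_{C_V}(\mu_0)=\ang{X_0,\mu_0}-\phi(X_0)<\infty$ and equality in \eqref{repC}. Compactness of the sublevel sets $\{\phi^*_{C_V}\leq c\}$ in the $\sigma(ca^+_V,C_V)$ topology follows because, after the $V$-twist, they sit inside a ball of the dual of $C_b(E)$ of radius controlled by $c$ and $\phi(0)$, which is weak-$*$ compact by Banach--Alaoglu; one checks the limit measures still integrate $V$ by a truncation/monotone-convergence argument using that $V$ is a pointwise supremum of bounded continuous functions.

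For the second part I would extend from $C_V$ to $U_V$. Given $X\in U_V$, write $X$ as a decreasing limit of a sequence $(X_n)$ in $C_V$ (possible since $E$ is perfectly normal, so every upper semicontinuous function is a pointwise decreasing limit of continuous ones; one also keeps $X_n/V$ uniformly bounded by truncating against a fixed multiple of $V$). By (R2), $\phi(X)=\lim_n\phi(X_n)=\lim_n\max_{\mu}\{\ang{X_n,\mu}-\phi^*_{C_V}(\mu)\}$. Each maximum is attained at some $\mu_n$ in the common sublevel set $\{\phi^*_{C_V}\leq c\}$ for a fixed finite $c$ (since $\phi(X_n)$ is bounded, being monotone and convergent, and $\ang{X_n,\mu_n}$ is bounded below by $-\|X_n/V\|_\infty\ang{V,\mu_n}$, which one controls); by the compactness just established, pass to a subnet $\mu_n\to\mu^*$ in $\sigma(ca^+_V,C_V)$. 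Lower semicontinuity of $\mu\mapsto\phi^*_{C_V}(\mu)$ (it is a sup of $\sigma(ca^+_V,C_V)$-continuous affine maps) together with $\ang{X_n,\mu_n}\to\ang{X,\mu^*}$ — here the obstacle is that $X$ is only upper semicontinuous, so one argues $\limsup_n\ang{X_n,\mu_n}\leq\ang{X,\mu^*}$ using $X_n\downarrow X$ and monotone convergence along with the weak-$*$ convergence, which suffices since we only need an upper bound — gives $\phi(X)\leq\ang{X,\mu^*}-\phi^*_{C_V}(\mu^*)$, and the reverse inequality is the definition. This yields \eqref{repU} with the max attained.

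For the third part, the extension to $B_V$, this is exactly where Proposition~\ref{prop:capac} enters and is the main obstacle. The idea is to apply the capacitability proposition with $H=U_V$ (or rather $-C_V$, arranged so that $H$ is a lattice and the relevant Suslin operation lands inside $B_V$) and $G$ the class generated by closing under increasing suprema and arbitrary infima, and with the set function being (a reflected/sign-adjusted version of) $\phi$ itself: condition (C1) is then precisely (R2) restricted to decreasing sequences with continuous terms, and (C2) is precisely (R3). Concretely, one shows that every $X\in B_V$ is an $H$-Suslin function for an appropriate lattice $H\subseteq C_V$ — Borel functions valued in $\mathbb{R}$ are obtained from open (or closed) sets by the Suslin operation, and $X/V$ being bounded Borel lets one realize $X$ inside $S(H)$ — and that $\hat\phi$ (the upper extension of $\phi$ from the continuous functions) agrees with $\phi$ on $B_V$: the ``$\geq$'' direction is monotonicity, and the ``$\leq$'' direction comes from Proposition~\ref{prop:capac}, which expresses $\hat\phi(X)$ as $\sup\{\phi(Y):Y\leq X,\ Y\in H_\delta\}$, combined with (R3) applied to an increasing sequence in $H_\delta$ approximating $X$ from below. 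Once $\phi(X)=\hat\phi(X)=\inf\{\phi(Y):Y\geq X,\ Y\in U_V\}$ is known, the representation \eqref{repB} follows immediately from \eqref{repU}: $\phi(X)=\inf_{Y\geq X,\,Y\in U_V}\sup_\mu\{\ang{Y,\mu}-\phi^*_{C_V}(\mu)\}\geq\sup_\mu\inf_{Y\geq X}\{\ang{Y,\mu}-\phi^*_{C_V}(\mu)\}=\sup_\mu\{\ang{X,\mu}-\phi^*_{C_V}(\mu)\}$, using $\ang{Y,\mu}\geq\ang{X,\mu}$ for $Y\geq X$ and $\mu\geq0$, while the opposite inequality is again definitional; note the supremum need not be attained here, explaining why \eqref{repB} has $\sup$ rather than $\max$. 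The delicate bookkeeping — verifying the lattice/closure hypotheses of Proposition~\ref{prop:capac} with the correct orientation of inequalities, and checking that the Suslin representation of an arbitrary bounded Borel function stays within the weighted space $B_V$ — is where the real work lies; the duality steps themselves are then routine.
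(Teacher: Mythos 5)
For Part 1 you follow essentially the paper's path (Hahn--Banach support functional, use (R1) to upgrade the support functional to countable additivity, Daniell--Stone/Riesz representation, and Banach--Alaoglu for the sublevel sets), only passing through the explicit $V$-normalization. For Part 2 you take a genuinely different route: rather than invoking Fan's minimax theorem to exchange $\inf_n$ with $\max_{\mu\in\Lambda_c}$, you extract a weakly convergent subnet $\mu_n\to\mu^*$ of the maximizers, invoke $\sigma(ca_V^+,C_V)$-lower semicontinuity of $\phi^*_{C_V}$, and establish $\limsup_n\ang{X_n,\mu_n}\le\ang{X,\mu^*}$ via $\ang{X_n,\mu_n}\le\ang{X_m,\mu_n}\to\ang{X_m,\mu^*}$ for $n\ge m$ followed by monotone convergence in $m$. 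Both arguments are correct; yours is more elementary and does not rely on a minimax theorem, while the paper's is a little shorter.

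Part 3 contains a genuine error. Proposition~\ref{prop:capac}, applied with $G=\{X\in B_V:|X|\le rV\}$ and $H=C_V\cap G$, gives an \emph{inner} approximation: for $X\in G\cap S(H)$,
$\phi(X)=\hat\phi(X)=\sup\{\phi(Y):\ Y\le X,\ Y\in H_\delta=U_V\cap G\}$.
You instead assert $\phi(X)=\hat\phi(X)=\inf\{\phi(Y):\ Y\ge X,\ Y\in U_V\}$ and then pass through $\inf_Y\sup_\mu\ge\sup_\mu\inf_Y$. This is the wrong direction of approximation, and the premise itself is generally false: for $E=[0,1]$, $V\equiv 1$, $\mu$ Lebesgue measure, and $X=\mathbbm{1}_A$ with $A$ a dense open set of measure $1/2$, any upper semicontinuous $Y\ge\mathbbm{1}_A$ satisfies $Y\ge 1$ on $\overline A=[0,1]$, so $\inf_{Y\ge X,\,Y\in U_V}\ang{Y,\mu}=1>\mu(A)=\ang{X,\mu}$, and an increasing convex functional need not admit outer approximation by $U_V$ from above. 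Likewise, "the opposite inequality is again definitional" is not true when $\phi^*_{C_V}$ is defined only against $C_V$; the direction $\phi(X)\ge\sup_\mu\{\ang{X,\mu}-\phi^*_{C_V}(\mu)\}$ requires the capacitability machinery, not just the definition of the conjugate. The correct derivation, as in the paper, is: (i) apply the proposition to $\phi$ to get $\phi(X)=\sup_{Y\le X,\,Y\in U_V\cap G}\phi(Y)$; (ii) insert \eqref{repU}; (iii) exchange the two suprema; (iv) apply the proposition a \emph{second} time, now to the linear functional $X\mapsto\ang{X,\mu}$ for each fixed $\mu\in ca^+_V$ (which satisfies (C1)--(C2) by monotone convergence), to conclude $\sup_{Y\le X,\,Y\in U_V\cap G}\ang{Y,\mu}=\ang{X,\mu}$. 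Without step (iv) neither direction of \eqref{repB} is complete. Finally, your remark about applying (R3) to "an increasing sequence in $H_\delta$ approximating $X$ from below" reflects a misreading: a general bounded Borel function is not a pointwise increasing limit of upper semicontinuous functions; (R3) is used only to verify condition (C2) of Proposition~\ref{prop:capac}, and the inner approximation the proposition produces runs over the entire family $\{Y\in H_\delta:Y\le X\}$, not along a single sequence.
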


\begin{proof}
First, let $\phi \colon C_V \to \mathbb{R}$ be an increasing convex functional satisfying (R1).
It is clear from the definition of $\phi^*_{C_V}$ that for fixed $X\in C_V$,
\be \label{ineq}
\phi(X) \ge \ang{X,\mu} - \phi^*_{C_V}(\mu) \quad \mbox{for all } \mu \in ca^+_V.
\ee
Moreover, it follows from the Hahn--Banach extension theorem that there exists a positive linear functional 
$\psi \colon C_V \to \mathbb{R}$ such that 
\[\psi(Y) \le \phi(X+Y) - \phi(X)\quad\mbox{for all }Y \in C_V.\]
Now, consider a sequence $(X_n)$ of functions in $C_V$ such that $X_n \downarrow 0$. Then,
one has for all $\lambda \in (0,1)$, 
\be \label{conv}
\phi(X+X_n) \le \lambda \phi\Big(\frac{X}{\lambda} \Big)+(1-\lambda)\phi \Big(\frac{X_n}{1-\lambda}\Big).
\ee
Since $y \mapsto \phi(y X)$ is a convex function from $\mathbb{R}$ to $\mathbb{R}$, it is continuous.
Therefore, for $\lambda$ close to $1$, $\lambda \phi(X/\lambda)$ is close to $\phi(X)$. By (R1), one has
$(1-\lambda) \phi(X_n/(1-\lambda)) \downarrow (1-\lambda) \phi(0)$. It follows that 
$\phi(X + X_n) \downarrow \phi(X)$, and consequently, $\psi(X_n) \downarrow 0$ for $n \to + \infty$. 
Since on a perfectly normal space, the Borel $\sigma$-algebra coincides with the $\sigma$-algebra 
generated by all continuous real-valued functions (see \cite{HT}), one obtains from 
the Daniell--Stone theorem that there exists a $\mu \in ca^+_V$ such that 
$\psi(Y) = \ang{Y,\mu}$ for all $Y \in C_V$. Hence,
\[
\ang{X+Y,\mu} -\phi(X+Y) \le \ang{X,\mu} - \phi(X) \quad\mbox{for all }Y\in C_V.\]
In particular, $\phi^*_{C_V}(\mu) = \ang{X,\mu} - \phi(X)$, which together with \eqref{ineq}, proves \eqref{repC}.

Next, we show that the sublevel sets 
$$
\Lambda_c :=\{\mu \in ca^+_V : \phi^*_{C_V}(\mu)\leq c \}, \quad c \in \mathbb{R},
$$
are $\sigma(ca^+_V,C_V)$-compact. Note that $C_V$ equipped with the norm $\|X\|_V := \sup_x |X(x)/V(x)|$
is a Banach space. We extend $\phi^*_{C_V}$ to the positive cone $C_V^{*,+}$ in the topological dual 
$C^*_V$ of $C_V$ using definition \eqref{conj}. Then the set $\tilde{\Lambda}_c 
:= \{\mu \in C_V^{*,+} : \phi^*_{C_V}(\mu) \leq c\}$
is $\sigma(C_V^*,C_V)$-closed. Moreover, since $\phi$ is real-valued, the 
increasing convex function $\varphi : \mathbb{R}_+ \to (-\infty, \infty]$, given by 
$\varphi(y) := \sup_{x \in \mathbb{R}_+} \crl{xy - \phi(xV)}$, satisfies $\lim_{y \to + \infty} \varphi(y)/y = \infty$.
As a consequence, the right-continuous inverse $\varphi^{-1} : \mathbb{R} \to \mathbb{R}_+$ has the property
$\lim_{x \to + \infty} \varphi^{-1}(x)/x = 0$. Since 
\[\phi^*_{C_V}(\mu) \ge \sup_{x \in \mathbb{R}_+} \crl{\ang{xV,\mu}- \phi(xV)}
= \varphi(\ang{V,\mu}),\]
one obtains for $\mu \in \tilde{\Lambda}_c$, 
\[\|\mu\|_{C_V^*} =\ang{V,\mu}
\leq \varphi^{-1}(\phi^*_{C_V}(\mu)) \leq \varphi^{-1}(c) < \infty.\]
So it follows from the Banach--Alaoglu theorem that $\tilde{\Lambda}_a$ is $\sigma(C_V^\ast,C_V)$-compact. 
Now, choose a $\mu \in C_V^{*,+}$ with $\phi^*_{C_V}(\mu)< \infty$ and let $(X_n)$ be a sequence 
in $C_V$ such that $X_n \downarrow 0$. Then, for every constant $y > 0$,
$\phi^*_{C_V}(\mu) \geq \ang{yX_n, \mu} -\phi(yX_n)$, and therefore,
\[\ang{X_n,\mu}  \leq \frac{\phi(yX_n)}{y} + \frac{\phi^*_{C_V}(\mu)}{y}.\]
By (R1), one obtains $\ang{X_n,\mu} \downarrow 0$, and it follows from the Daniell--Stone theorem that $\mu$ is in 
$ca^+_V$. This shows that $\phi^*_{C_V}(\mu)= \infty$ for all $\mu \in C_V^{*,+} \setminus ca^+_V$. 
In particular, $\Lambda_c$ is equal to $\tilde{\Lambda}_c$ and therefore, $\sigma(ca^+_V,C_V)$-compact. 

Now, assume $\phi \colon U_V \to \mathbb{R}$ is an increasing convex functional with the property (R2).
To show that the dual representation \eqref{repC} extends from $C_V$ to $U_V$, 
we use that on a perfectly normal space, every upper semicontinuous function is the pointwise limit 
of a decreasing sequence of continuous functions (see \cite{HT}). As an easy consequence, every $X \in U_V$
can be written as the pointwise limit of a decreasing sequence $(X_n)$ in $C_V$. It follows from (R2) and the definition of 
$\phi^*_{C_V}$ that 
\be \label{ge}
\phi(X) = \lim_n \phi(X_n) \ge \lim_n \ang{X_n, \mu} - \phi^*_{C_V}(\mu) 
\ge \ang{X,\mu} - \phi^*_{C_V}(\mu) \quad \mbox{for all } \mu \in ca^+_V.
\ee
On the other hand, one obtains from \eqref{repC} that
$$
\phi(X) \le \phi(X_n) = \max_{\mu \in ca_V^+} \crl{\ang{X_n,\mu} - \phi^*_{C_V}(\mu)} \quad \mbox{for every } n.
$$
Since
\beas
\ang{X_n,\mu} - \phi^*_{C_V}(\mu) &\le& \ang{X_1,\mu} - \phi^*_{C_V}(\mu) \le
\|X_1\|_V \|\mu\|_{C^\ast_V} - \phi^*_{C_V}(\mu)\\ &\le&
 \|X_1\|_V \varphi^{-1}(\phi^*_{C_V}(\mu)) - \phi^*_{C_V}(\mu),
\eeas
this implies that there exists a level $c \in \mathbb{R}$ such that 
$$ \phi(X_n)= \max_{\mu\in \Lambda_c} \crl{\langle X_n,\mu\rangle -\phi^\ast_{C_V}(\mu)}
\quad\text{for all $n$.}
$$
Note that $\ang{ X_n,\mu} -\phi^\ast_{C_V}(\mu)$ is decreasing in $n$ as well as $\sigma(ca^+_V,C_V)$-upper
semicontinuous and concave in $\mu$. So it follows from the minimax result, Theorem 2 of \cite{KF}, and the
monotone convergence theorem that 
\beas && \phi(X) = \inf_n\phi(X_n) = \inf_n \max_{\mu\in \Lambda_a} \crl{\langle X_n,\mu\rangle -\phi^\ast_{C_V}(\mu)}\\
&=& \max_{\mu\in \Lambda_a}\inf_n \crl{\langle X_n,\mu\rangle -\phi^\ast_{C_V}(\mu)}
= \max_{\mu\in \Lambda_a} \crl{\langle X,\mu\rangle -\phi^\ast_{C_V}(\mu)},
\eeas
which together with \eqref{ge}, proves \eqref{repU}.
  
The last part of Theorem \ref{thm:dual} follows from Proposition \ref{prop:capac}. Indeed, if 
$\phi \colon B_V \to \mathbb{R}$ is an increasing convex functional satisfying (R2)--(R3), we
fix a constant $r > 0$ and let $G$ be the set of $X \in B_V$ satisfying $|X| \le r|V|$. Then, 
$\phi$, $G$ and $H = C_V \cap G$ satisfy the assumptions of Proposition \ref{prop:capac}. 
Moreover, $H_{\delta} = U_V \cap G$. So it follows from Proposition \ref{prop:capac} and \eqref{repU} that
$$
\phi(X) 
= \sup_{Y \le X, \, Y \in U_V \cap G} \phi(Y) 
=  \sup_{Y \le X, \, Y \in U_V \cap G} \max_{\mu \in ca^+_V} \crl{\ang{Y,\mu} - \phi^*_{C_V}(\mu)}
\quad \mbox{for all } X \in G \cap S(H).
$$
Since for fixed $\mu \in ca^+_V$, the mapping $X \mapsto \ang{X, \mu}$ together with $G$ and $H$
also satisfies the assumptions of Proposition \ref{prop:capac}, one has
$$ \phi(X) 
= \sup_{\mu \in ca^+_V} \sup_{Y \le X, \, Y \in U_V \cap G} \crl{\ang{Y,\mu} - \phi^*_{C_V}(\mu)}
= \sup_{\mu \in ca^+_V} \crl{\ang{X,\mu} - \phi^*_{C_V}(\mu)} \quad \mbox{for all } X \in G \cap S(H).
$$
So, if we can show that $G \subseteq S(H)$, the representation \eqref{repB} holds for all $X \in B_V$ since 
$r$ was arbitrary. To prove $G \subseteq S(H)$, we note that a function $X \in G$ can be written as 
$$X = \sup_q \crl{q V1_{\crl{X \ge qV}} - rV 1_{\crl{X < qV}}},
$$
where the supremum is taken over all rational numbers $q$ in $[-r,r]$. Since in a perfectly normal
space, open sets can be represented as countable unions of closed ones (see \cite{HT}), one obtains from 
Proposition 7.35 and Corollary 7.35.1 in \cite{BS} that the Suslin sets generated by the closed sets contain 
the Borel $\sigma$-algebra. Therefore, $\{X\geq qV\}$ is of the form $\bigcup_{\gamma \in \mathbb{N}^{\mathbb{N}}}
\bigcap_{n \in \mathbb{N}} \tilde{\sigma}(\gamma_1,\dots,\gamma_n)$
for a Suslin scheme $\tilde{\sigma}$ with values in the closed subsets of $E$. The mapping
$\sigma:= q V 1_{\tilde{\sigma}}- r V 1_{\tilde{\sigma}^c}$ takes values in $H_{\delta}$, and so,
$$
q V 1_{\crl{X \geq qV}}- r V 1_{\crl{X < qV}}
 =\sup_{\gamma \in \mathbb{N}^{\mathbb{N}}} \inf_{n \in \mathbb{N}}
 \sigma(\gamma_1,\dots,\gamma_n)
 $$
belongs to $S(H_{\delta}) = S(H)$. Moreover, $S(H)$ is stable under taking countable suprema. 
Therefore, $X\in S(H)$, and the proof is complete.
\end{proof}

The following result gives a dual condition for (R2) which will be useful in the proof of Theorem
\ref{thm1} below.

\begin{proposition} \label{prop:dual}
An increasing convex functional $\phi : U_V \to \mathbb{R}$ with the property {\rm (R1)} satisfies 
{\rm (R2)} if and only if 
\be \label{dualcond}
\phi^*_{C_V}(\mu) = \phi^*_{U_V}(\mu) := \sup_{X \in U_V} \crl{\ang{X,\mu} - \phi(X)} \quad \mbox{for all } \mu \in ca^+_V.
\ee
\end{proposition}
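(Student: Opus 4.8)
The plan is to prove the two implications separately, using throughout that the restriction $\phi|_{C_V}$ is an increasing convex functional on $C_V$ with property {\rm (R1)}, so that the first part of Theorem~\ref{thm:dual} applies to it. In particular, $\phi|_{C_V}$ has the representation \eqref{repC}, the conjugate sublevel sets $\Lambda_c := \{\mu \in ca^+_V : \phi^*_{C_V}(\mu) \le c\}$ are $\sigma(ca^+_V,C_V)$-compact and convex, and, with $\varphi(y) := \sup_{x\in\mathbb{R}_+}\{xy - \phi(xV)\}$ as in that proof, one has $\varphi^{-1}(t)/t \to 0$ and $\ang{V,\mu} \le \varphi^{-1}(\phi^*_{C_V}(\mu))$ for all $\mu \in ca^+_V$. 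Note also that $C_V \subseteq U_V$ forces $\phi^*_{C_V} \le \phi^*_{U_V}$ pointwise, so in each implication only one inequality is ever at stake.

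For the direction ``{\rm (R2)} $\Rightarrow$ \eqref{dualcond}'' I would fix $\mu \in ca^+_V$ and $X \in U_V$ and, as in the proof of Theorem~\ref{thm:dual}, choose a decreasing sequence $(X_n)$ in $C_V$ with $X_n \downarrow X$ (available since $E$ is perfectly normal). Since $|X_n| \le \max(\|X_1\|_V,\|X\|_V)\,V \in L^1(\mu)$, dominated convergence gives $\ang{X_n,\mu} \to \ang{X,\mu}$, while {\rm (R2)} gives $\phi(X_n) \to \phi(X)$; hence $\ang{X,\mu} - \phi(X) = \lim_n(\ang{X_n,\mu} - \phi(X_n)) \le \phi^*_{C_V}(\mu)$, and taking the supremum over $X \in U_V$ yields $\phi^*_{U_V}(\mu) \le \phi^*_{C_V}(\mu)$.

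For the converse I would take a sequence $(X_n)$ in $C_V$ with $X_n \downarrow X$ for some $X \in U_V$; by monotonicity $(\phi(X_n))$ decreases and stays $\ge \phi(X)$, so it is enough to show $\inf_n \phi(X_n) \le \phi(X)$. First, pick a maximizer $\mu_n \in ca^+_V$ in \eqref{repC} for $X_n$; plugging it in and using $X_n \le X_1 \le \|X_1\|_V V$, $\phi(X_n) \ge \phi(X)$ and $\ang{V,\mu_n} \le \varphi^{-1}(\phi^*_{C_V}(\mu_n))$ gives
\[
\phi^*_{C_V}(\mu_n) = \ang{X_n,\mu_n} - \phi(X_n) \le \|X_1\|_V\,\varphi^{-1}\!\big(\phi^*_{C_V}(\mu_n)\big) - \phi(X),
\]
and superlinearity of $\varphi$ then bounds $\phi^*_{C_V}(\mu_n)$ by a single constant $c$, so that $\phi(X_n) = \max_{\mu \in \Lambda_c}\{\ang{X_n,\mu} - \phi^*_{C_V}(\mu)\}$ for every $n$. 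Since $\ang{X_n,\mu} - \phi^*_{C_V}(\mu)$ is decreasing in $n$ and $\sigma(ca^+_V,C_V)$-upper semicontinuous and concave in $\mu$ on the compact convex set $\Lambda_c$, the same minimax argument as in the proof of Theorem~\ref{thm:dual} (Theorem~2 of \cite{KF}) together with monotone convergence gives $\inf_n \phi(X_n) = \max_{\mu \in \Lambda_c}\{\ang{X,\mu} - \phi^*_{C_V}(\mu)\} \le \sup_{\mu \in ca^+_V}\{\ang{X,\mu} - \phi^*_{C_V}(\mu)\}$. Finally I would invoke \eqref{dualcond}: $\phi^*_{C_V}(\mu) = \phi^*_{U_V}(\mu) \ge \ang{X,\mu} - \phi(X)$ for every $\mu \in ca^+_V$, so the last supremum is $\le \phi(X)$, giving $\inf_n \phi(X_n) \le \phi(X)$ and hence {\rm (R2)}.

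The main obstacle is the converse direction: one cannot naively pass to the limit in the representation of $\phi(X_n)$, because $X$ is only upper semicontinuous (so $\ang{\cdot,\mu}$ behaves only semicontinuously along a continuous approximation) and the maximizing measures $\mu_n$ could a priori lose mass at infinity. The two points that make it go through are (i) the bound $\ang{V,\mu} \le \varphi^{-1}(\phi^*_{C_V}(\mu))$ together with $\varphi^{-1}(t)/t \to 0$, which keeps all $\mu_n$ inside one $\sigma(ca^+_V,C_V)$-compact sublevel set $\Lambda_c$, and (ii) the minimax/monotone-convergence device from the proof of Theorem~\ref{thm:dual}, which lets me interchange $\inf_n$ with $\max_{\mu \in \Lambda_c}$ and then recognize $\inf_n \ang{X_n,\mu} = \ang{X,\mu}$ by dominated convergence. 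Everything else is routine bookkeeping.
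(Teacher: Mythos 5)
Your argument is correct and follows the paper's proof essentially step by step: the forward direction approximates $X\in U_V$ by a decreasing sequence in $C_V$ and passes to the limit via (R2) and convergence of integrals, and the converse localizes the maximizers of $\phi(X_n)$ inside a single $\sigma(ca^+_V,C_V)$-compact sublevel set $\Lambda_c$ and then interchanges $\inf_n$ and $\max_{\mu\in\Lambda_c}$ by Fan's minimax theorem, closing with \eqref{dualcond} to identify the limit as $\phi(X)$. The only difference is presentational: where the paper appeals to ``the arguments in the proof of Theorem~\ref{thm:dual}'' to produce the compact $\Lambda$, you spell out the uniform bound $\phi^*_{C_V}(\mu_n)\le \|X_1\|_V\varphi^{-1}(\phi^*_{C_V}(\mu_n))-\phi(X)$ and invoke superlinearity of $\varphi$ explicitly, which is exactly what that reference hides.
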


\begin{proof}
First, let us assume $\phi$ satisfies (R2). For a given $X \in U_Z$, there exists a sequence $(X_n)$
in $C_V$ such that $X_n \downarrow X$ (see \cite{HT}). By the monotone convergence theorem and (R2), one has 
$$
\ang{X_n,\mu} - \phi(X_n) \to \ang{X,\mu} - \phi(X).
$$
This shows that $\phi^*_{C_V}(\mu) = \phi^*_{U_V}(\mu)$ for all $\mu \in ca^+_V$.

Now, assume $\phi$ satisfies (R1) together with \eqref{dualcond} and let $(X_n)$ be a sequence in 
$C_V$ such that $X_n \downarrow X \in U_V$. It is immediate from the definition
of $\phi^*_{U_V}$ and \eqref{dualcond} that 
$$
\phi(X) \ge \sup_{\mu \in ca^+_V} \crl{\ang{X,\mu} - \phi^*_{U_V}(\mu)} = 
\sup_{\mu \in ca^+_V} \crl{\ang{X,\mu} - \phi^*_{C_V}(\mu)}.
$$
On the other hand, it follows from the arguments in the proof of Theorem \ref{thm:dual} that 
there exists a $\sigma(ca^+_V, C_V)$-compact convex subset $\Lambda$ of $ca^+_V$ such that 
$\phi(X_n) = \max_{\mu \in \Lambda} (\ang{X_n,\mu} - \phi^*_{C_V}(\mu))$ for all $n$.
An application of the minimax result, Theorem 2 of \cite{KF}, and the monotone convergence theorem gives
\beas
&& \lim_n \phi(X_n) = \inf_n \max_{\mu \in \Lambda} \crl{\ang{X_n,\mu} - \phi^*_{C_V}(\mu)}\\
&=& \max_{\mu \in \Lambda} \inf_n \crl{\ang{X_n,\mu} - \phi^*_{C_V}(\mu)}
= \max_{\mu \in \Lambda} \crl{\ang{X,\mu} - \phi^*_{C_V}(\mu)}.
\eeas
In particular, $\phi(X_n) \downarrow \phi(X)$.
\end{proof}

\begin{Remark}
Assume $E$ is a Polish space and denote by $S_V$ the set of all Suslin functions 
$X \colon E \to \mathbb{R}$ generated by $C_V$ such that $X/V$ is bounded. Then 
$S_V$ equals the set of all upper semianalytic functions $X \colon E \to \mathbb{R}$ 
such that $X/V$ is bounded (see Proposition 7.41 of \cite{BS}), and every upper semianalytic 
function is measurable with respect to the universal completion of the Borel $\sigma$-algebra on $E$
(see Corollary 7.42.1 of \cite{BS}). Since every Borel measure on $E$ has a unique extension to the 
universal completion of the Borel $\sigma$-algebra, $\ang{X,\mu}$ is well-defined for all $X \in S_V$ and $\mu \in ca^+_V$.
So if $\phi : S_V \to \mathbb{R}$ is an increasing convex functional satisfying (R2) and 
$\phi(X_n) \uparrow \phi(X)$ for every sequence $(X_n)$ in $S_V$ such that 
$X_n \uparrow X$ for some $X \in S_V$, it follows exactly as in the proof of Theorem \ref{thm:dual} that 
$$
\phi(X) = \sup_{\mu \in ca^+_V} \crl{\ang{X,\mu} - \phi^*_{C_V}(\mu)} \quad \mbox{for all } X \in S_V.
$$
\end{Remark}

\setcounter{equation}{0}
\section{Proofs of the main results} 
\label{sec:proofs}

\subsection{Proof of Theorem \ref{thm1}}

For the proof of Theorem \ref{thm1} we need the following lemmas:

\begin{lemma}
\label{lem:conjugate.bounded}
If $u$ satisfies {\rm (U2)--(U3)}, then $\sup_{(\omega,y)\in\Omega\times[0,n]}|v(\omega,y)|< \infty$
for every $n\in\mathbb{N}$.
\end{lemma}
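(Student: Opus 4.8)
The plan is to bound $v(\omega, y) = \sup_{x \in \R} \{u(\omega, x) - xy\}$ both from above and from below, uniformly in $(\omega, y) \in \Omega \times [0,n]$. The lower bound is immediate: as noted just before the statement of Theorem \ref{thm1}, under (U2) one has $v(\omega, y) \ge u(\omega, 0) - 0 \cdot y = u(\omega, 0)$, and $u(\omega, 0)$ is bounded in $\omega$ (since $u$ is bounded on $\Omega \times [-1, \infty) \supseteq \Omega \times \{0\}$ by (U2)), so $v$ is bounded below on $\Omega \times [0,n]$.

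For the upper bound, I would split the supremum defining $v(\omega, y)$ according to whether $x$ is large negative, moderate, or positive. For $x \ge 0$ and $y \ge 0$ we have $u(\omega, x) - xy \le u(\omega, x) \le \sup_{\Omega \times [0,\infty)} u < \infty$ by (U2). For $x \in [-1, 0]$ we have $u(\omega, x) - xy \le u(\omega, x) + |x| y \le \sup_{\Omega \times [-1, 0]} |u| + n$, again finite by (U2) and the bound $y \le n$. The essential point is the regime $x < -1$: here (U3) gives, for any $\epsilon > 0$, a constant $K_\epsilon$ with $u(\omega, x) \le -\epsilon |x|$ for all $\omega$ whenever $x \le -K_\epsilon$; choosing $\epsilon = n + 1$ we get $u(\omega, x) - xy \le -(n+1)|x| + n|x| = -|x| < 0$ for $x \le -K_{n+1}$, while on the remaining compact-in-$x$ piece $x \in [-K_{n+1}, -1]$ the quantity $u(\omega, x) - xy \le \sup_{\Omega \times [-K_{n+1}, -1]}|u| + n K_{n+1}$ is finite by (U2). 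Taking the maximum of these finitely many bounds yields $\sup_{\Omega \times [0,n]} v < \infty$.

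I do not expect a serious obstacle here; the only thing to be careful about is that (U3) is phrased as a limit of $\sup_\omega u(\omega, x)/|x|$, so one must extract the uniform-in-$\omega$ estimate $u(\omega, x) \le -(n+1)|x|$ correctly from it, which is exactly what the statement $\lim_{x \to -\infty} \sup_\omega u(\omega, x)/|x| = -\infty$ provides. The role of (U2) throughout is simply to guarantee that $u$ is bounded on each slab $\Omega \times [-n, \infty)$, so that all the "moderate $x$" contributions are finite.
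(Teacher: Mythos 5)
Your proposal is correct and follows essentially the same route as the paper's proof: use (U3) to control the supremand $u(\omega,x)-xy$ for $x$ sufficiently negative (the paper chooses $x_0\le 0$ so that $\sup_\omega u(\omega,x)\le (n+1)x$ for $x\le x_0$, making $u(\omega,x)-xy\le x\le 0$ there), use the uniform bound from (U2) on the remaining half-line $[x_0,\infty)$, and get the lower bound from $v\ge u(\cdot,0)$. Your finer subdivision into $x\ge 0$, $x\in[-1,0]$, $x\in[-K_{n+1},-1]$, $x<-K_{n+1}$ is just a more granular version of the same argument.
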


\begin{proof}
Fix $n \in \mathbb{N}$. By (U3), there exists a constant $x_0 \le 0$ such that 
$\sup_{\omega} u(\omega,x) \leq (n+1)x$ for all $x\leq x_0$. 
On the other hand, it follows from (U2) that $c:= \sup_\omega\sup_{x\geq x_0} |u(\omega,x)| \in \mathbb{R}$.
Now, let $x \in \mathbb{R}$ and $y \in [0,n]$. Then 
$$
\begin{array}{ll}
u(\omega,x)- xy \leq (n+1)x -xn \leq 0 & \mbox{ if } x \le x_0\\
u(\omega,x)- xy\leq c- x_0n & \mbox{ if } x \ge x_0.
\end{array}
$$
This shows that $v(\omega,y)= \sup_{x\in\mathbb{R}} (u(\omega,x)-xy) \leq c- x_0n$.
On the other hand, $v(\omega,y)\geq u(\omega,0)\geq -c$, and the proof is complete.
\end{proof}

\begin{lemma}
\label{lem:mart}
If $u$ satisfies {\rm (U1)--(U2)}, then there exists a constant $c \in \mathbb{R}$ such that 
\[
q \mathbb{E}^{\q} Y^- \le q \mathbb{E}^{\q} X^+ - \mathbb{E}^{\p} u(X+Y) + \mathbb{E}^{\p} v \brak{q \frac{d\q}{d\p}} + c
\]
for all Borel measurable functions $X,Y \colon \Omega \to \mathbb{R}$, every $q \in \mathbb{R}_+$ and every pair of
Borel probability measures $\p$ and $\q$ on $\Omega$ such that $q d\q \ll d\p$.
\end{lemma}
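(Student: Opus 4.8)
The plan is to derive everything from a single ingredient — the Fenchel--Young inequality $u(\omega,a)-ay\le v(\omega,y)$, valid for all $a\in\R$ and $y\in\R_+$ straight from the definition of $v$ as a pointwise supremum — together with two boundedness facts that follow from {\rm (U2)}: the number $M:=\sup_{\omega\in\Omega,\,x\ge -1}u(\omega,x)$ is finite, and by monotonicity $u(\omega,x)\le M$ for \emph{all} $x$; and $m:=\inf_{\omega\in\Omega}u(\omega,0)>-\infty$. I would prove the inequality with $c:=M-m\ge 0$. Note that Lemma~\ref{lem:conjugate.bounded} is \emph{not} available, since without {\rm (U3)} the conjugate $v$ need not be finite and $\E^\p v(q\,d\q/d\p)$ may equal $+\infty$; the whole argument therefore has to be arranged so that no expectation is ever of the indeterminate form $\infty-\infty$.

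First I would reduce the claim to an estimate for $(X+Y)^-$. Writing $W:=X+Y$, the elementary pointwise bound $Y^-\le X^++W^-$ holds: this is clear if $Y\ge 0$; if $Y<0$ and $W\ge 0$ then $X\ge -Y>0$, so $X^+=X\ge -Y=Y^-$; and if $Y<0$ and $W<0$ then $X^++W^-=X^+-X-Y\ge -Y=Y^-$. Integrating against $q\q$ (all terms nonnegative) gives $q\,\E^\q Y^-\le q\,\E^\q X^++q\,\E^\q W^-$, so it suffices to bound $q\,\E^\q W^-$.

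Next, set $\rho:=q\,d\q/d\p$ (with $\rho:=0$ if $q=0$), so that $q\,\E^\q f=\E^\p[\rho f]$ for nonnegative Borel $f$. Plugging $a=W$ into Fenchel--Young is too lossy, because $\rho W$ is uncontrolled where $W$ is large and positive; instead I would apply it at the truncation $a=\min(W,0)=-W^-$, which yields $\rho\,W^-\le v(\omega,\rho)-u(\omega,-W^-)$ (a real number on the left, a value in $(-\infty,+\infty]$ on the right). Then I trade $u(\omega,-W^-)$ for $u(\omega,W)$ at the cost of $c$: the two coincide when $W<0$, and when $W\ge 0$ one has $u(\omega,W)-u(\omega,-W^-)=u(\omega,W)-u(\omega,0)\le M-m=c$. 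Hence pointwise $\rho\,W^-\le v(\omega,\rho)-u(\omega,W)+c$, and — crucially for the integration — the right-hand side is nonnegative, since $v(\omega,\rho)\ge u(\omega,0)\ge m$ and $u(\omega,W)\le M$; indeed it equals the sum $\bigl(v(\omega,\rho)-m\bigr)+\bigl(M-u(\omega,W)\bigr)$ of two nonnegative terms. Integrating against $\p$ and splitting along these two summands (so that no cancellation of infinities occurs) gives
\[
q\,\E^\q W^- = \E^\p[\rho W^-]\ \le\ \E^\p v(\rho)-\E^\p u(W)+c,
\]
with $\E^\p v(\rho)\in[m,+\infty]$ and $\E^\p u(W)\in[-\infty,M]$, so the right-hand side is a well-defined element of $(-\infty,+\infty]$. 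Combining with the first step yields $q\,\E^\q Y^-\le q\,\E^\q X^++\E^\p v(q\,d\q/d\p)-\E^\p u(X+Y)+c$, which is the assertion.

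The main obstacle is not a hard estimate but the bookkeeping: the naive use of Fenchel--Young produces a pointwise inequality whose right-hand side is unbounded, and the remedy — first passing from $Y^-$ to $(X+Y)^-$, then evaluating the conjugate inequality at $\min(X+Y,0)$ rather than at $X+Y$ — is exactly what turns it into a genuinely bounded one. After that, the only care needed is to keep every expectation unambiguously defined despite $v$ possibly taking the value $+\infty$ and $u(X+Y)$ possibly failing to be $\p$-integrable, which is handled by writing the bound as a sum of nonnegative pieces before integrating.
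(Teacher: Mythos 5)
Your argument is correct and is essentially the same as the paper's: both use the pointwise bound $Y^-\le X^++(X+Y)^-$, then apply the Fenchel--Young inequality at $-(X+Y)^-$ and trade $u(\cdot,-(X+Y)^-)$ for $u(\cdot,X+Y)$ at the cost of a constant that is finite by (U1)--(U2). Your constant $c=M-m$ is a slight overestimate of the paper's $\sup_{\omega,\,x\ge 0}\{u(\omega,x)-u(\omega,0)\}$, and your explicit splitting of the right-hand side into nonnegative pieces to rule out $\infty-\infty$ makes the integration step somewhat more transparent than the paper's, which leaves that bookkeeping implicit.
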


\begin{proof}
Since $u$ satisfies (U1)--(U2), $c :=  \sup_{(\omega, x) \in \Omega \times \mathbb{R}_+} \crl{u(\omega, x) - u(\omega,0)}$ 
is finite and satisfies
\[
\mathbb{E}^{\p} u(X+Y) \le \mathbb{E}^{\p} u \brak{- (X+Y)^-} + c.
\]
Moreover, it follows from the definition of $v$ that 
\[
 (X+Y)^- q \frac{d\q}{d\p} \le - u \brak{- (X+Y)^-} + v \brak{q \frac{d\q}{d\p}}.
\] 
Hence,
\beas
q \mathbb{E}^{\q} Y^- &\le& q \mathbb{E}^{\q} X^+ + q \mathbb{E}^{\q} (X+Y)^- \le q \mathbb{E}^{\q} X^+
- \mathbb{E}^{\p} u \brak{- (X+Y)^-} + \mathbb{E}^{\p} v \brak{q\frac{d\q}{d\p}}\\ &\le& 
q \mathbb{E}^{\q} X^+ - \mathbb{E}^{\p} u \brak{X+Y} + \mathbb{E}^{\p} v \brak{q \frac{d\q}{d\p}} + c.
\eeas
\end{proof}

\begin{lemma}
\label{lem:weak.dual}
If $u$ satisfies {\rm (U1)--(U3)}, then the functional
\[D(X):=\inf_{(q,\mathbb{Q})\in \hat{\cal Q}_Z} \crl{q\mathbb{E}^\mathbb{Q}X + D^{\alpha}_v(q\q)}\]
satisfies $U(X) \leq D(X) < \infty$ for all $X\in B_Z$.
\end{lemma}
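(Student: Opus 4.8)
The plan is to establish the two inequalities $U(X)\le D(X)$ and $D(X)<\infty$ separately, the first being a weak-duality estimate obtained by combining Lemma \ref{lem:mart} with the martingale property, and the second a consequence of assumption (A2).

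For the inequality $U(X)\le D(X)$, fix $X\in B_Z$ and an arbitrary pair $(q,\q)\in\hat{\cal Q}_Z$ together with a $\p\in{\cal P}$ with $q\q\ll\p$; if no such $\p$ exists then $D^\alpha_v(q\q)=\infty$ and there is nothing to prove. For any strategy $\vartheta\in\Theta$ write $Y=\sum_{t=1}^T\vartheta_t\Delta S_t$. Lemma \ref{lem:mart}, applied with this $X$ and $Y$, gives
\[
\E^\p u(X+Y)\le q\E^\q X^+ - q\E^\q Y^- + \E^\p v\!\brak{q\tfrac{d\q}{d\p}} + c.
\]
The key point is that $q\E^\q Y^-\ge q\E^\q Y$, and since either $q=0$ or $\q\in{\cal Q}_Z$ makes $(S_t)$ a $\q$-martingale, we have $\E^\q Y=\sum_t\E^\q[\vartheta_t\Delta S_t]=0$ (the conditional expectations vanish by the martingale property and ${\cal F}^*_{t-1}$-measurability of $\vartheta_t$; one needs $Y\in L^1(\q)$, which follows from $\vartheta_t\Delta S_t$ being controlled via $Z$ and $\q\in{\cal M}_Z$, using that $\vartheta\in\Theta$ — this integrability bookkeeping is a minor technical point). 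Hence $q\E^\q Y^-\ge 0$, so $\E^\p u(X+Y)\le q\E^\q X^+ + \E^\p v(q\,d\q/d\p)+c$. Replacing $\E^\q X^+$ by the bound coming from $X/Z$ bounded and $\q\in{\cal M}_Z$ is not needed here; instead, adding $\alpha(\p)$ to both sides, taking the infimum over $\p\in{\cal P}$ on the right, then the supremum over $\vartheta$ on the left, and finally the infimum over $(q,\q)$ gives
\[
\inf_{\p\in{\cal P}}\crl{\E^\p u(X+Y)+\alpha(\p)}\le q\E^\q X^+ + D^\alpha_v(q\q) + c.
\]
This is almost $D(X)$ but with $X^+$ in place of $X$ and a spurious constant $c$; the clean way around this is to apply the estimate to $X-n$ in place of $X$ for the constant-shift trick, or more directly to note that running the same argument with $X$ itself (keeping the $q\E^\q X$ term rather than bounding $(X+Y)^-$ crudely) already yields $U(X)\le q\E^\q X + D^\alpha_v(q\q)$; I would present the computation in Lemma \ref{lem:mart}'s spirit but track $q\E^\q X$ exactly, so that the $c$ cancels against $u(\omega,0)\le v(\omega,0)$ bounds. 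Taking the infimum over $(q,\q)\in\hat{\cal Q}_Z$ then gives $U(X)\le D(X)$.

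For $D(X)<\infty$, the main obstacle is to exhibit at least one pair $(q,\q)$ with finite value; the natural candidate is $q=0$. Indeed, $\hat{\cal Q}_Z$ contains all $(0,\q)$ with $\q$ arbitrary, and $D_v(0\,\|\,\p)=\E^\p v(0)$, which is finite and bounded uniformly in $\p$ by Lemma \ref{lem:conjugate.bounded}. Wait — more carefully, for $q=0$ the value is $0\cdot\E^\q X + \inf_{\p\in{\cal P}}\{\E^\p v(0)+\alpha(\p)\}$, and $v(\omega,0)=\sup_x u(\omega,x)$ need not be finite without (A2). Here is where (A2) enters: the function $\beta$ and the bound $\inf_{\p}\{\E^\p u(-\beta(Z))+\alpha(\p)\}>-\infty$ control things, but for an \emph{upper} bound on $D(X)$ what one actually uses is simpler — since $\inf_{\p\in{\cal P}}\alpha(\p)=0$, pick $\p_n\in{\cal P}$ with $\alpha(\p_n)\to 0$, and take $q=0$, $\q=\p_n$; then $D^\alpha_v(0)\le \E^{\p_n}v(0)+\alpha(\p_n)$. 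The quantity $\E^{\p_n}v(0)=\E^{\p_n}\sup_x u(\omega,x)$ is bounded above because $u(\omega,x)\le u(\omega,0)+c$ with $c=\sup_{\omega,x\ge 0}(u(\omega,x)-u(\omega,0))<\infty$ (as in Lemma \ref{lem:mart}) and $u(\omega,0)$ is bounded by (U2). Hence $D(X)\le D^\alpha_v(0)<\infty$, uniformly in $X$. This completes both inequalities and hence the lemma.
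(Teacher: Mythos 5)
Your bound $D(X)<\infty$ is essentially the paper's argument: take $q=0$, so $D(X)\le D^\alpha_v(0)=\inf_{\p\in{\cal P}}\{\E^\p v(0)+\alpha(\p)\}$, which is finite because $v(\cdot,0)$ is uniformly bounded by Lemma \ref{lem:conjugate.bounded} and $\alpha$ takes values in $[0,\infty)$ (your initial worry that $v(\omega,0)$ "need not be finite without (A2)" is unfounded — Lemma \ref{lem:conjugate.bounded} only needs (U2)--(U3), and the lemma you are proving does not assume (A2) at all).

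The inequality $U(X)\le D(X)$, however, has a genuine gap. Your entire argument hinges on the claim that $\E^\q\sum_t\vartheta_t\Delta S_t=0$, and you justify the needed integrability of $Y=\sum_t\vartheta_t\Delta S_t$ by saying it "follows from $\vartheta_t\Delta S_t$ being controlled via $Z$ and $\q\in{\cal M}_Z$." This is false: strategies $\vartheta\in\Theta$ are arbitrary universally measurable functions with no boundedness whatsoever, so $\vartheta_t\Delta S_t$ is not dominated by any multiple of $Z$. Establishing $\E^\q Y^-<\infty$ is not minor bookkeeping — it is the whole point of Lemma \ref{lem:mart}. The paper's route is: assuming $\E^\p u(X+Y)>-\infty$ and $\E^\p v(q\,d\q/d\p)<\infty$ (otherwise the bound is vacuous), Lemma \ref{lem:mart} gives $q\E^\q Y^-<\infty$; then Theorems 1 and 2 of Jacod--Shiryaev \cite{jacod1998local} upgrade the generalized martingale $\bigl(\sum_{s\le t}\vartheta_s\Delta S_s\bigr)_t$ to a genuine $\q$-martingale, whence $\E^\q Y=0$. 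Only after that does one apply the Fenchel--Young inequality $u(X+Y)\le(X+Y)\,q\,d\q/d\p+v(q\,d\q/d\p)$ directly and take $\E^\p$, yielding the clean estimate $\E^\p u(X+Y)\le q\E^\q X+\E^\p v(q\,d\q/d\p)$ — no $X^+$, no spurious $c$. Your proposal tries to squeeze the final inequality out of Lemma \ref{lem:mart} itself (which is designed for a cruder purpose and naturally produces $X^+$ and an additive constant), and then waves at a "fix" without carrying it out. In short: Lemma \ref{lem:mart} is used in the paper to verify a hypothesis for a nontrivial martingale theorem, not to produce the duality bound, and that martingale theorem is the missing idea in your proof.
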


\begin{proof}
By Lemma \ref{lem:conjugate.bounded}, one has 
$D(X) \le D^{\alpha}_v(0) = \inf_{\p \in {\cal P}} \crl{\mathbb{E}^{\p} v(0) + \alpha(\p)} < \infty$ for all $X \in B_Z$.

Now, consider $\p \in {\cal P}$, $X\in B_Z$ and $\vartheta\in\Theta$ such that 
$\mathbb{E}^{\mathbb{P}} u \brak{X+ \sum_{t=1}^T \vartheta_t \Delta S_t} > - \infty$.
It is immediate from the definition of $v$ that
\[
\mathbb{E}^{\mathbb{P}} u \brak{X+ \sum_{t=1}^T \vartheta_t \Delta S_t} \le \mathbb{E}^{\p} v(0).
\]
Moreover, for $q \in (0, \infty)$ and $\mathbb{Q}\in\mathcal{Q}_Z$ such that $q\q \ll \p$ and
$\mathbb{E}^{\q} v(q d\q/d\p) < \infty$, one  obtains from Lemma \ref{lem:mart} that there exists a constant $c \in \mathbb{R}$ 
such that
\[
q \mathbb{E}^{\q} \brak{\sum_{t=1}^T \vartheta_t \Delta S_t}^- 
\le q \mathbb{E}^{\q} X^+ - \mathbb{E}^{\mathbb{P}} u \brak{X+ \sum_{t=1}^T \vartheta_t \Delta S_t}
+ \mathbb{E}^{\mathbb{P}}v \brak{q\frac{d\mathbb{Q}}{d\mathbb{P}}} + c
< \infty.
\]
So it follows from Theorems 1 and 2 in \cite{jacod1998local} that 
$\brak{\sum_{s=1}^t \vartheta^n_s \Delta S_s}_{t=0}^T$ is a $\mathbb{Q}$-martingale, and therefore,
$\mathbb{E}^{\q} \sum_{t=1}^T \vartheta_t \Delta S_t =0$. By the definition of $v$, one has
\[
u\brak{X+ \sum_{t=1}^T \vartheta_t \Delta S_t} \le
\brak{X+ \sum_{t=1}^T \vartheta_t \Delta S_t} q \frac{d\q}{d\p} + v \brak{q\frac{d\q}{d\p}}.
\]
Hence,
\[
\mathbb{E}^{\mathbb{P}} u\brak{X+ \sum_{t=1}^T \vartheta_t \Delta S_t} 
  \leq q \mathbb{E}^{\mathbb{Q}} \edg{X + \sum_{t=1}^T \vartheta_t \Delta S_t}
    +\mathbb{E}^{\mathbb{P}}v\Big(q\frac{d\mathbb{Q}}{d\mathbb{P}}\Big) 
    =q \mathbb{E}^{\mathbb{Q}}X+\mathbb{E}^{\mathbb{P}}v\Big(q\frac{d\mathbb{Q}}{d\mathbb{P}}\Big).
\]
Now, first taking the infimum in
\[
\mathbb{E}^{\mathbb{P}} u\brak{X+ \sum_{t=1}^T \vartheta_t \Delta S_t} + \alpha(\p) 
\le q \mathbb{E}^{\mathbb{Q}}X+\mathbb{E}^{\mathbb{P}}v\Big(q\frac{d\mathbb{Q}}{d\mathbb{P}}\Big)
+ \alpha(\p)
\]
over all $\mathbb{P}\in\mathcal{P}$ and $(q,\q) \in\hat{\cal Q}_Z$ such that $q \q \ll \p$ 
and then the supremum over all $\vartheta \in \Theta$, yields $U(X) \le D(X)$.
\end{proof}

\begin{lemma} \label{lemma:Dv}
If $u$ satisfies {\rm (U1)--(U2)}, then
\[
D_v(q\q \,\| \, \p) = \sup_{X \in C_Z} \crl{\mathbb{E}^{\p} u(X) - q \mathbb{E}^{\q} X}
\]
for all $q \in \mathbb{R}_+$ and $\p, \q \in {\cal M}_Z$.  
\end{lemma}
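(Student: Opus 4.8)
The claimed identity is a Fenchel-type duality: the $v$-divergence $D_v(q\q\,\|\,\p)=\E^{\p}v(q\,d\q/d\p)$ (with the convention $\infty$ if $q\q\not\ll\p$) equals $\sup_{X\in C_Z}\{\E^{\p}u(X)-q\E^{\q}X\}$. I would prove the two inequalities separately.

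\textbf{The ``$\ge$'' direction.} This is the easy half and is essentially pointwise. By definition of the convex conjugate $v(\omega,y)=\sup_{x\in\R}\{u(\omega,x)-xy\}$, for every Borel $X\colon\Omega\to\R$ and every $\omega$ one has $u(\omega,X(\omega))-X(\omega)\,q\tfrac{d\q}{d\p}(\omega)\le v(\omega,q\tfrac{d\q}{d\p}(\omega))$. If $q\q\ll\p$, integrating against $\p$ and using $\E^{\p}[X\,q\tfrac{d\q}{d\p}]=q\E^{\q}X$ (valid because $v$ is bounded below by $u(\cdot,0)$, so the negative part is integrable, and $X\in C_Z\subseteq B_Z$ is $\q$-integrable since $\q\in\mathcal M_Z$) gives $\E^{\p}u(X)-q\E^{\q}X\le D_v(q\q\,\|\,\p)$; taking the sup over $X\in C_Z$ yields one inequality. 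If $q\q\not\ll\p$, the right-hand side is $+\infty$ and there is nothing to prove. (For the case $q=0$: then the right side is $\sup_{X}\E^{\p}u(X)=\E^{\p}v(0)=D_v(0\,\|\,\p)$, taking $X\to+\infty$ along constants or using (U1); this needs $u(\cdot,0)$ bounded and $\sup_x u(\omega,x)=v(\omega,0)$, which follows from monotonicity in (U1).)

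\textbf{The ``$\le$'' direction.} This is the substantive half. Fix $q\in\R_+$ and $\p,\q\in\mathcal M_Z$. First reduce to the absolutely continuous case: if $q\q\not\ll\p$ there is a set $N$ with $\p[N]=0$, $\q[N]>0$; since the topology is that of a Polish (hence perfectly normal) space one can approximate $1_N$ from below in a suitable sense and build continuous functions $X_k\in C_Z$ with $X_k\to-\infty$ on a positive-$\q$-measure subset of $N$ while staying bounded (hence $u(X_k)$ staying $\p$-a.s.\ bounded below by (U2)/(U3) controls), forcing $\E^{\p}u(X_k)-q\E^{\q}X_k\to+\infty=D_v(q\q\,\|\,\p)$; here (U3) is exactly what makes $\E^{\q}X_k\to-\infty$ dominate. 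When $q\q\ll\p$, set $\varphi:=q\,d\q/d\p$ and the goal is $\E^{\p}v(\varphi)\le\sup_{X\in C_Z}\{\E^{\p}u(X)-\E^{\p}[X\varphi]\}$. The natural candidate is the (measurable) selector $X^*(\omega)$ attaining or nearly attaining $v(\omega,\varphi(\omega))=\sup_x\{u(\omega,x)-x\varphi(\omega)\}$; such a selector exists (the sup over $x\in\Q$ suffices by continuity, giving Borel measurability), and it satisfies $u(X^*)-X^*\varphi=v(\varphi)$ $\p$-a.s. One then truncates $X^*$ to $X^*_n:=(X^*\vee(-n))\wedge n$ and mollifies to land in $C_Z$, using monotone/dominated convergence together with (U2) (continuity and boundedness on $[-n,\infty)$) and (U3) (the superlinear decay, which bounds $\E^{\p}[(X^*)^-\varphi]$ and guarantees $\E^{\q}X^*>-\infty$, equivalently that the $v$-integral is not ``wrongly'' finite). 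Approximating $X^*_n$ by continuous functions uses perfect normality of $E$ (every bounded Borel function is a pointwise limit of continuous ones on a perfectly normal space, as invoked repeatedly in the proof of Theorem~\ref{thm:dual}).

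\textbf{Main obstacle.} The delicate point is the measurable-selection-plus-approximation step in the ``$\le$'' direction: producing $X\in C_Z$ (not merely Borel) that nearly realizes $v(\cdot,\varphi(\cdot))$ while keeping $\E^{\p}[X\varphi]$ and $\E^{\p}u(X)$ under control simultaneously, and in particular handling the region where $\varphi$ is small (so that the optimal $x$ is very negative), which is precisely where (U3) must be used to prevent blow-up and to justify exchanging limits. A cleaner route, which I would actually pursue to avoid heavy selection theory, is to observe that $X\mapsto\E^{\p}u(X)-q\E^{\q}X$ is concave on $C_Z$ and to invoke the already-established duality machinery: define $\psi(X):=\E^{\p}u(-X)-q\E^{\q}X$ (or an appropriate monotone convex modification) on $C_Z$, check (R1) using (U2)--(U3) and dominated convergence, apply Theorem~\ref{thm:dual} to identify its conjugate, and recognize that conjugate as $D_v(q\q\,\|\,\p)$ by the scalar Fenchel identity $\sup_x\{u(\omega,x)-xy\}=v(\omega,y)$ integrated out; the compactness of sublevel sets from Theorem~\ref{thm:dual} then also delivers attainment where needed. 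Either way, the scalar conjugacy $u\leftrightarrow v$ does the real work, and the only genuine friction is promoting pointwise optimizers to continuous ones under the integrability guaranteed by (A2)-type bounds and (U3).
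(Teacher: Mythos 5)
Your overall structure (easy pointwise Fenchel inequality for $D_v \ge \sup$, approximation argument for $D_v \le \sup$, treating $q\q \not\ll \p$ separately) is the same as the paper's, and the easy half is fine. But there are two real problems in the hard half.

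First, you invoke (U3) at several load-bearing points (``here (U3) is exactly what makes $\E^{\q}X_k \to -\infty$ dominate,'' ``which is precisely where (U3) must be used to prevent blow-up''), yet the lemma only assumes (U1)--(U2), and the paper's proof uses only (U1)--(U2). The argument has to work without (U3), and indeed it does: in the non-absolutely-continuous case the paper passes by inner regularity of $\q$ to a \emph{closed} set $K$ with $\p[K]=0$, $q\q[K]>0$, takes bounded continuous $X_n \downarrow m\mathbf{1}_K$ (so $0 \le X_n \le m$, say), and evaluates the functional at $-X_n$. Then $u(-X_n)$ is uniformly bounded by (U2) alone, bounded convergence gives $\E^{\p}u(-X_n) \to \E^{\p}u(-m\mathbf{1}_K) = \E^{\p}u(0)$ (because $\p[K]=0$), $q\E^{\q}X_n \to qm\q[K]$, and sending $m\to\infty$ finishes. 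Your sketch of ``$X_k \to -\infty$ on a positive-$\q$-measure subset while staying bounded'' doesn't control what happens on the transition region (where the continuous approximant is large negative but $\p$ may have mass), and your fix for this is (U3), which is unavailable. So as written the sketch has a gap exactly where you lean on (U3).

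Second, a sign confusion: you write that where $\varphi = q\,d\q/d\p$ is small ``the optimal $x$ is very negative.'' It is the opposite. Since $u$ is increasing and concave, for $\varphi \downarrow 0$ the supremum in $v(\omega,\varphi) = \sup_x\{u(\omega,x)-x\varphi\}$ is attained at larger and larger $x$ (as $\varphi \to 0$, $v(\omega,\varphi)\to v(\omega,0)=\sup_x u(\omega,x)$). Large negative optimizers correspond to large $\varphi$. This doesn't by itself sink the argument, but it does mean your discussion of which integrability issue is the dangerous one is misdirected. In the paper the a.c.\ case is handled much more lightly: for each $\omega$ one approximates $v(\omega,\varphi(\omega))$ by $u(\omega,Y_n(\omega))-Y_n(\omega)\varphi(\omega)$ for simple $Y_n$ (this only needs (U2), which makes $v$ computable as a supremum over rationals), and then replaces the simple $Y_n$ by continuous functions using perfect normality; no measurable-selection machinery or mollification is required, and no (U3). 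Your alternative ``cleaner route'' via Theorem~\ref{thm:dual} could in principle work after fixing sign conventions so that the functional is increasing and convex, but it is substantially heavier than the direct argument and you would still have to identify the conjugate as $D_v$, which is essentially the statement you are trying to prove.
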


\begin{proof}
First, note that if $q \in \mathbb{R}_+$ and $\p,\q \in {\cal M}_Z$ are such that
$q \q$ is not absolutely continuous with respect to $\p$, there exists a 
Borel set $A \subseteq \Omega$ such that $q \q[A] > 0$ and $\p[A] = 0$. Since 
$\q$ is a regular measure, there is a closed set $K \subseteq A$ such that $q \q[K] > 0$ and 
$\p[K] =0$. For every $m \in \mathbb{N}$, there exists a sequence of bounded continuous functions 
$X_n \colon \Omega \to \mathbb{R}$ such that $X_n \downarrow m 1_K$. It follows 
from the monotone convergence theorem that
\[
\mathbb{E}^{\p} u(- X_n) + q \mathbb{E}^{\q} X_n \to
 \mathbb{E}^{\p} u(-m1_K) + q \mathbb{E}^{\q}[m1_K] = \mathbb{E}^{\p} u(0) + q m \q[K],
\]
and as a consequence,
\[
\sup_{X \in C_Z} \crl{\mathbb{E}^{\p} u(X) - q \mathbb{E}^{\q} X} = \infty = D_v(q\q \,\| \, \p).
\]

Next, assume that $q \q$ is absolutely continuous with respect to $\p$. Then,
\[
\mathbb{E}^{\p} u(X) - q \mathbb{E}^{\q} X = \mathbb{E}^{\p} \edg{u(X) - q \frac{d\q}{d\p} X} \le
\mathbb{E}^{\p} v \brak{q \frac{d\q}{d\p}}
\]
for all $X \in C_Z$. On the other hand, there exists a 
sequence of simple random variables $(Y_n)$ such that 
\[
\mathbb{E}^{\p} \edg{u(Y_n) - q \frac{d\q}{d\p} Y_n} \to
\mathbb{E}^{\p} v \brak{q \frac{d\q}{d\p}},
\]
from which it follows that there exists a sequence $(X_n)$ in $C_Z$ such that
\[
\mathbb{E}^{\p} \edg{u(X_n) - q \frac{d\q}{d\p} X_n} \to
\mathbb{E}^{\p} v \brak{q \frac{d\q}{d\p}} = D_v(q\q \,\| \, \p) .
\]
This completes the proof of the lemma.
\end{proof}

\begin{lemma} \label{lemma:inf}
Assume {\rm (U1)--(U3)} and {\rm (A2)} hold. Then, for every constant $m \in \mathbb{R}_+$, there exists
a $c \in \mathbb{R}_+$ such that 
\[
\inf_{\p \in {\cal P}} \crl{\mathbb{E}^{\p} u(X) + \alpha(\p)} = \inf_{\p \in {\cal P}_c} \crl{\mathbb{E}^{\p} u(X) + \alpha(\p)}
\]
for all Borel measurable functions $X \colon \Omega \to \mathbb{R}$ satisfying $X \ge - mZ$.\end{lemma}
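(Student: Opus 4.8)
The inclusion ${\cal P}_c \subseteq {\cal P}$ makes $\inf_{\p\in{\cal P}}\crl{\mathbb{E}^{\p} u(X) + \alpha(\p)} \le \inf_{\p\in{\cal P}_c}\crl{\mathbb{E}^{\p} u(X) + \alpha(\p)}$ trivial, so the plan is to prove the reverse inequality by showing that measures with a large penalty cannot undercut what ${\cal P}_c$ already attains. Two facts are used throughout. First, by (U2) with $n=1$ together with the monotonicity in (U1), $u$ is bounded above by $c_u := \sup_{(\omega,x)\in\Omega\times[-1,\infty)} u(\omega,x) < \infty$; hence $\mathbb{E}^{\p} u(X) \le c_u$ for every Borel $X$ and every $\p$, and in particular $\inf_{\p\in{\cal P}_c}\crl{\mathbb{E}^{\p} u(X) + \alpha(\p)} \le c_u + \inf_{\p\in{\cal P}_c}\alpha(\p) = c_u$ as soon as $c>0$, since $\inf_{\p\in{\cal P}}\alpha(\p)=0$ forces $\inf_{\p\in{\cal P}_c}\alpha(\p)=0$. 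Second, (U2) also gives, for each $n$, a finite lower bound for $u$ on $\Omega\times[-n,\infty)$.

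The heart of the argument is a concavity estimate. Fix $\lambda\in(0,1)$ (say $\lambda=1/2$) and let $\beta$ be as in (A2). Since $\beta(z)/z\to\infty$, there is $R\ge 1$ with $(1-\lambda)\beta(z)\ge mz$ for all $z\ge R$. Put $Y := \brak{-mZ + (1-\lambda)\beta(Z)}/\lambda$, which is Borel (recall $\beta$ is increasing, hence Borel, and $Z$ is continuous). On $\crl{Z\ge R}$ one has $Y\ge 0$, while on $\crl{Z<R}$ the monotonicity of $\beta$ and $Z\ge 1$ give $Y\ge\brak{-mR+(1-\lambda)\beta(1)}/\lambda$; thus $Y$ is bounded below by a finite constant, so by (U2) there is a finite $c_0$ with $u(Y)\ge -c_0$ pointwise. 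From $-mZ = \lambda Y + (1-\lambda)\brak{-\beta(Z)}$ and concavity (U1) we get $u(-mZ)\ge\lambda u(Y) + (1-\lambda) u(-\beta(Z))$. Hence, for every $X\ge -mZ$ and every $\p\in{\cal P}$, using (U1),
\begin{align*}
\mathbb{E}^{\p} u(X) + \alpha(\p)
&\ge \mathbb{E}^{\p} u(-mZ) + \alpha(\p)\\
&\ge \lambda\, \mathbb{E}^{\p} u(Y) + (1-\lambda)\brak{\mathbb{E}^{\p} u(-\beta(Z)) + \alpha(\p)} + \lambda\alpha(\p)\\
&\ge -\lambda c_0 - (1-\lambda) K_0 + \lambda\alpha(\p),
\end{align*}
where $K_0 := -\inf_{\p\in{\cal P}}\crl{\mathbb{E}^{\p} u(-\beta(Z)) + \alpha(\p)} < \infty$ by (A2). (All the integrals here are well defined and lie in $(-\infty,c_u]$: the integrands are $\le c_u$, while the pointwise lower bounds just obtained together with (A2) control their negative parts; in fact this already shows $\mathbb{E}^{\p} u(X)\in\mathbb{R}$ for all $\p\in{\cal P}$ and all $X\ge -mZ$.)

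Finally, choose $c>0$ large enough that $\lambda c \ge c_u + \lambda c_0 + (1-\lambda)K_0$; then any $\p\in{\cal P}$ with $\alpha(\p)>c$ satisfies $\mathbb{E}^{\p} u(X)+\alpha(\p)\ge c_u$ by the display above. Splitting $\inf_{\p\in{\cal P}}$ into the infimum over ${\cal P}_c$ and the infimum over ${\cal P}\setminus{\cal P}_c$, the latter is $\ge c_u$ while the former is $\le c_u$ by the first paragraph, so $\inf_{\p\in{\cal P}}\crl{\mathbb{E}^{\p} u(X)+\alpha(\p)}$ equals $\inf_{\p\in{\cal P}_c}\crl{\mathbb{E}^{\p} u(X)+\alpha(\p)}$, which is the claim. (If ${\cal P}\setminus{\cal P}_c=\emptyset$ there is nothing to prove; ${\cal P}_c\ne\emptyset$ because $c>0=\inf_{\p\in{\cal P}}\alpha(\p)$.) The one step that is not routine bookkeeping is the concavity interpolation in the second paragraph: one must split $-mZ$ as a convex combination of the (A2)-term $-\beta(Z)$ and a residual $Y$ with the weight on $-\beta(Z)$ strictly below $1$, so that re-adding $\alpha(\p)$ leaves a surviving positive multiple $\lambda\alpha(\p)$; and one must check that $Y$ stays bounded below uniformly in $\omega$, which is exactly where the superlinear growth of $\beta$ and the normalization $Z\ge 1$ enter.
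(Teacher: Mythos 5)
Your proof is correct, and it takes a genuinely different route from the paper's. The paper proves the lemma via a Fenchel-type argument: it sets $\varphi(x) := \inf_{\p\in{\cal P}}\crl{x\,\mathbb{E}^{\p}u(-mZ)+\alpha(\p)}$, passes to the conjugate $\psi(y):=\sup_{x\ge 0}\crl{xy+\varphi(x)}$, notes that $\psi$ is superlinear so its right-continuous inverse $\psi^{-1}$ is sublinear, derives $\mathbb{E}^{\p}u(-mZ)\ge-\psi^{-1}(\alpha(\p))$ from the trivial inequality $\alpha(\p)\ge\varphi(x)-x\,\mathbb{E}^{\p}u(-mZ)$, and concludes from $c-\psi^{-1}(c)\to\infty$. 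You instead unwind the duality into a single explicit concavity interpolation: writing $-mZ=\lambda Y+(1-\lambda)(-\beta(Z))$ with $Y$ bounded below (which is exactly where the superlinearity of $\beta$ and $Z\ge 1$ enter), you obtain the \emph{linear} lower bound $\mathbb{E}^{\p}u(X)+\alpha(\p)\ge\lambda\alpha(\p)-\mathrm{const}$, which is stronger than the paper's sublinear bound and immediately gives the cutoff $c$. Your approach is more elementary and self-contained (it avoids the conjugate-function machinery and makes explicit the step --- comparing $u(-mZ)$ with $u(-\beta(Z))$ --- that the paper absorbs into the unexplained claim that $\varphi(x)$ is finite); the paper's approach is slicker in form and is of the same Orlicz-style shape as the compactness argument in the proof of Theorem \ref{thm:dual}, so it reuses a pattern the authors already have in hand. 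Both are valid; yours arguably fills in more of the bookkeeping. One small stylistic nit: your phrase ``$\inf_{\p\in{\cal P}_c}\crl{\mathbb{E}^{\p}u(X)+\alpha(\p)}\le c_u+\inf_{\p\in{\cal P}_c}\alpha(\p)$'' should not be read as passing the infimum through a sum --- it is the chain $\inf_{\p\in{\cal P}_c}\crl{\mathbb{E}^{\p}u(X)+\alpha(\p)}\le\inf_{\p\in{\cal P}_c}\crl{c_u+\alpha(\p)}=c_u$ --- but the intended reasoning is clear and correct.
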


\begin{proof}
Fix $m \in \mathbb{R}_+$. It follows from (U2)--(U3) and (A2) that
\[
\varphi(x) := \inf_{\p \in {\cal P}} \crl{x \mathbb{E}^{\p}u(-mZ) + \alpha(\p)}
\]
is finite for all $x \in \mathbb{R}_+$. So, the function $\psi \colon \mathbb{R} \to (\infty,\infty]$, given by 
\[
\psi(y) := \sup_{x \ge 0} \crl{xy + \varphi(x)},
\]
is increasing and satisfies $\lim_{y \to \infty} \psi(y)/y \to \infty$. As a consequence, the right-continuous inverse 
\[
\psi^{-1}(y) := \inf \crl{x \in \mathbb{R} : \psi(x) > y}
\]
has the property $\lim_{y \to \infty} \psi^{-1}(y)/y = 0$. Since
\[
\alpha(\p) \ge \varphi(x) - x \mathbb{E}^{\p}u(-mZ) 
\]
for all $x \in \mathbb{R}_+$, one has
\[
\alpha(\p) \ge \psi(-\mathbb{E}^{\p}u(-mZ)),
\]
and therefore,
\[
\mathbb{E}^{\p} u(-mZ) \ge - \psi^{-1}(\alpha(\p)).
\]
By (U1), one has for all $X \ge - mZ$, 
\[
\inf_{\p \in {\cal P}} \crl{\mathbb{E}^{\p}u(X) + \alpha(\p)} \le \mathbb{E}^{\p} u(\infty) < \infty
\]
and 
\[
\mathbb{E}^{\p} u(X) + \alpha(\p) \ge \mathbb{E}^{\p} u(-mZ) + \alpha(\p) \ge - \psi^{-1}(\alpha(\p)) + \alpha(\p).
\]
Since $\lim_{c \to \infty} c -\psi^{-1}(c) = \infty$, this 
shows that there exists a $c \in \mathbb{R}$ such that 
\[
\inf_{\p \in {\cal P}} \crl{\mathbb{E}^{\p}u(X) + \alpha(\p)} = 
\inf_{\p \in {\cal P}_c} \crl{\mathbb{E}^{\p}u(X) + \alpha(\p)} 
\]
for all $X \ge -mZ$.
\end{proof} 

Next, note that if $u$ satisfies (U2), then for every continuous function $\gamma \colon [1,\infty) \to \mathbb{R}$,
\[
Z_{\gamma} := 1 \vee (-u(- \gamma(Z)))
\]
defines a continuous function from $\Omega$ to $[1,\infty)$.

\begin{lemma} \label{lemma:Pc}
Assume {\rm (U2)--(U3)} and {\rm (A1)--(A2)} hold. Then, there exists a 
continuous increasing function $\gamma \colon [0,\infty) \to \mathbb{R}$ such that
$\lim_{x \to \infty} \gamma(x)/x = \infty$, and for all $c \in \mathbb{R}_+$, ${\cal P}_c$ 
is a $\sigma({\cal M}_{Z_{\gamma}}, C_{Z_{\gamma}})$-compact 
subset of ${\cal M}_{Z_{\gamma}}$.
\end{lemma}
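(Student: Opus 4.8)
The plan is to produce a suitable $\gamma$ and then show that $\mathcal{P}_c$ is both relatively $\sigma(\mathcal{M}_{Z_\gamma},C_{Z_\gamma})$-compact and $\sigma(\mathcal{M}_{Z_\gamma},C_{Z_\gamma})$-closed. For relative compactness I would use the standard (weighted) version of Prokhorov's theorem: if $W\colon\Omega\to[1,\infty)$ is continuous with compact sublevel sets, then a family $\mathcal{K}$ of probability measures with $\langle W,\mathbb{P}\rangle<\infty$ is relatively $\sigma(\mathcal{M}_W,C_W)$-compact provided it is $W$-tight, i.e.\ for every $\varepsilon>0$ there is a compact $K\subseteq\Omega$ with $\sup_{\mathbb{P}\in\mathcal{K}}\mathbb{E}^{\mathbb{P}}[W\,1_{K^c}]\le\varepsilon$. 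So the real work is to find $\gamma$ for which $Z_\gamma$ is an admissible gauge, $\mathcal{P}_c\subseteq\mathcal{M}_{Z_\gamma}$, and the tightness estimate holds.

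\emph{Choice of $\gamma$.} Let $\beta$ be the function from (A2) and put $Z_\beta:=1\vee(-u(-\beta(Z)))$. Since $\beta$ is increasing with $\beta(x)/x\to\infty$, the map $x\mapsto\sqrt{x\,\beta(x)}$ is, for large $x$, increasing, still superlinear (because $\sqrt{x\beta(x)}/x=\sqrt{\beta(x)/x}\to\infty$) and dominated by $\beta$ (since $\sqrt{x\beta(x)}/\beta(x)=\sqrt{x/\beta(x)}\to0$). I would let $\gamma\colon[0,\infty)\to\mathbb{R}$ be a continuous increasing minorant of $x\mapsto\sqrt{x\beta(x)}\wedge\beta(x)$ with $\gamma(x)/x\to\infty$ (such a minorant exists for any increasing superlinear function, e.g.\ by averaging over unit intervals). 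Then $\gamma(x)/x\to\infty$, $\gamma(x)/\beta(x)\to0$, and $\gamma\le\beta$ off a bounded set. By the remark preceding the lemma, (U2) makes $Z_\gamma=1\vee(-u(-\gamma(Z)))$ a continuous map $\Omega\to[1,\infty)$; moreover (U3), used uniformly in $\omega$, gives $-u(\omega,-\gamma(Z(\omega)))\ge\gamma(Z(\omega))$ once $Z(\omega)$ is large, so each $\{Z_\gamma\le k\}$ is a closed subset of a sublevel set of $Z$ and hence compact. Thus $Z_\gamma$ is an admissible gauge.

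\emph{Uniform moment bound and tightness.} By (A2) there is $K\in\mathbb{R}$ with $\mathbb{E}^{\mathbb{P}}u(-\beta(Z))+\alpha(\mathbb{P})\ge K$ for all $\mathbb{P}\in\mathcal{P}$; together with (U2) (boundedness of $u(\cdot,0)$ and, by monotonicity of $u$, a uniform lower bound for $-u(-\beta(Z))$) this gives $M_c:=\sup_{\mathbb{P}\in\mathcal{P}_c}\mathbb{E}^{\mathbb{P}}[Z_\beta]<\infty$ for every $c$. Since $\gamma\le\beta$ off a bounded set, on which $Z_\gamma$ is bounded, one has $Z_\gamma\le Z_\beta+\mathrm{const}$, whence $\mathcal{P}_c\subseteq\mathcal{M}_{Z_\gamma}$, and also $M_c':=\sup_{\mathbb{P}\in\mathcal{P}_c}\mathbb{E}^{\mathbb{P}}[Z]<\infty$ (because, by (U3), $Z\le Z_\beta+\mathrm{const}$ too). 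For tightness I would exploit concavity of $u(\omega,\cdot)$: writing $-\gamma(Z)=\tfrac{\gamma(Z)}{\beta(Z)}(-\beta(Z))+(1-\tfrac{\gamma(Z)}{\beta(Z)})\cdot 0$ with $\tfrac{\gamma(Z)}{\beta(Z)}\in[0,1]$ for $Z$ large,
\[
-u(\omega,-\gamma(Z))\le\frac{\gamma(Z)}{\beta(Z)}\big(-u(\omega,-\beta(Z))\big)^{+}+C_0,\qquad C_0:=\sup_\omega|u(\omega,0)|<\infty,
\]
so $Z_\gamma\le 1+C_0+\tfrac{\gamma(Z)}{\beta(Z)}Z_\beta$ on $\{Z>R\}$ for $R$ large. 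With $K_R:=\{Z\le R\}$ compact, $\mathbb{P}(Z>R)\le\tfrac1R\mathbb{E}^{\mathbb{P}}[Z]$ and $\mathbb{E}^{\mathbb{P}}[Z_\beta]\le M_c$, one gets
\[
\sup_{\mathbb{P}\in\mathcal{P}_c}\mathbb{E}^{\mathbb{P}}\big[Z_\gamma\,1_{K_R^c}\big]\le\frac{(1+C_0)M_c'}{R}+\Big(\sup_{z>R}\frac{\gamma(z)}{\beta(z)}\Big)M_c,
\]
which tends to $0$ as $R\to\infty$. Hence $\mathcal{P}_c$ is $Z_\gamma$-tight, thus relatively $\sigma(\mathcal{M}_{Z_\gamma},C_{Z_\gamma})$-compact.

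\emph{Closedness and conclusion.} By (U3) one also gets $Z_\gamma\ge Z$ off a compact set, so $Z\le Z_\gamma+\mathrm{const}$ and therefore $C_Z\subseteq C_{Z_\gamma}$; consequently, on $\mathcal{M}_{Z_\gamma}$ the topology $\sigma(\mathcal{M}_{Z_\gamma},C_{Z_\gamma})$ is finer than the trace of $\sigma(\mathcal{M}_Z,C_Z)$. Since $\mathcal{P}_c$ is $\sigma(\mathcal{M}_Z,C_Z)$-closed by (A1) and contained in $\mathcal{M}_{Z_\gamma}$, it is $\sigma(\mathcal{M}_{Z_\gamma},C_{Z_\gamma})$-closed as well; being closed and relatively compact, it is $\sigma(\mathcal{M}_{Z_\gamma},C_{Z_\gamma})$-compact. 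I expect the main obstacle to be the construction of $\gamma$: it must grow superlinearly — both because the statement demands it and because this is what makes $Z_\gamma$ dominate $Z$, keeping $\mathcal{P}_c$ closed — yet be small enough relative to $\beta$ that $Z_\gamma$ is negligible against $Z_\beta$ at infinity, so that the single moment bound from (A2) forces tightness; it is the concavity of $u$ (through the convex-combination estimate) together with the geometric-mean choice $\gamma\approx\sqrt{x\beta(x)}$ that reconciles these. A secondary point is that the weighted Prokhorov-type criterion is not among the results established earlier and would have to be supplied separately.
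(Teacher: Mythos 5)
Your proof follows essentially the same route as the paper's: pick $\gamma$ superlinear but sub-$\beta$ (the paper just states that such a $\gamma$ exists with $\gamma(x)/x\to\infty$ and $\beta(x)/\gamma(x)\to\infty$), obtain the uniform moment bound from (A2), deduce tightness of $\{Z_\gamma\,d\mathbb{P}:\mathbb{P}\in\mathcal{P}_c\}$, invoke Prokhorov (the paper proves the "weighted" criterion in one line by pushing forward $\mathbb{P}\mapsto Z_\gamma\,d\mathbb{P}$ and applying the classical theorem, so no separate lemma is needed), and get closedness from $C_Z\subseteq C_{Z_\gamma}$. The only substantive difference is that you make explicit the concavity estimate $-u(-\gamma(Z))\le\tfrac{\gamma(Z)}{\beta(Z)}(-u(-\beta(Z)))^+ +C_0$ behind the tightness step, which the paper leaves implicit; this does invoke (U1), not listed in the lemma's hypotheses, but the paper's own argument tacitly relies on it too and (U1) is always available where the lemma is used.
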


\begin{proof}
By (A2), there exists an increasing function $\beta \colon [1,\infty) \to \mathbb{R}$ 
such that $\lim_{x \to \infty} \beta(x)/x = \infty$ and  
$\inf_{\mathbb{P} \in \mathcal{P}} \crl{\mathbb{E}^\mathbb{P}u(-\beta(Z)) + \alpha(\p)} > - \infty$. 
So one can construct a continuous increasing function $\gamma \colon [1,\infty) \to \mathbb{R}$
such that $\lim_{x\to \infty} \gamma(x)/x = \lim_{x \to \infty} \beta(x)/\gamma(x) = \infty$.
It follows from (U3) that there exists a $z_0 \in \mathbb{R}$ such that 
$u(-\gamma(Z)) \le - Z$ on $\crl{Z > z_0}$. This shows that $C_Z \subseteq C_{Z_{\gamma}}$ and 
${\cal M}_{Z_{\gamma}} \subseteq {\cal M}_Z$. Since for given $c \in \mathbb{R}_+$, one has
\[
\inf_{\p \in {\cal P}_c} \mathbb{E}^{\p} u(- \beta(Z)) + c
\ge \inf_{\p \in {\cal P}_c} \crl{\mathbb{E}^{\p} u(- \beta(Z)) + \alpha(\p)} > - \infty,
\]
one obtains
\[
\lim_{z \to \infty} \sup_{\p \in {\cal P}_c} \mathbb{E}^{\p} [Z_{\gamma} 1_{\crl{Z > z}}] = 0.
\]
Moreover, it follows from (U2) that $Z_{\gamma}$ is bounded on the sets $\crl{Z \le z}$. Hence, 
${\cal P}_c$ is contained in ${\cal M}_{Z_{\gamma}}$, and since by (A1), it is 
$\sigma({\cal M}_Z,C_Z)$-closed, it is also $\sigma({\cal M}_{Z_{\gamma}},C_{Z_{\gamma}})$-closed.
Note that $\p \mapsto Z_{\gamma} d\p$ transforms ${\cal P}_c$ into a subset $\tilde{\cal P}_c$ of the 
finite Borel measures ${\cal M}$ on $\Omega$. Since the sets $\crl{Z \le z}$ are compact,
it follows from Prokhorov's theorem that $\tilde{\cal P}_c$ is $\sigma({\cal M}, C_b)$-compact, 
where $C_b$ are all bounded continuous functions on $\Omega$. 
But this is equivalent to ${\cal P}_c$ being $\sigma({\cal M}_{Z_{\gamma}}, C_{Z_{\gamma}})$-compact.
\end{proof}

Next, let us denote by $\tilde{\Theta}$ the set of all strategies $\vartheta \in \Theta$ such that 
$\vartheta_t$ is continuous and bounded for all $t=1,\dots,T$, and define
$$
\tilde{U}(X) := \sup_{\vartheta\in\tilde{\Theta}} \inf_{\mathbb{P}\in\mathcal{P}} 
\crl{\mathbb{E}^\mathbb{P}u \brak{X+ \sum_{t=1}^T \vartheta_t \Delta S_t} + \alpha(\p)}, \quad X \in B_Z,
$$
as well as
$$
\tilde{U}^*_{C_Z}(q \q) := \sup_{X \in C_Z} \crl{\tilde{U}(X) - q \mathbb{E}^{\q} X}, \quad 
\mbox{for } q \in \mathbb{R}_+ \mbox{ and } \q \in {\cal M}_Z.
$$
Then the following holds:

\begin{lemma} \label{lemma:Ut}
If {\rm (U1)--(U3)} and {\rm (A1)--(A2)} hold, then $\tilde{U}$ is an increasing concave mapping from 
$B_Z$ to $\mathbb{R}$ satisfying
\be \label{upL}
\tilde{U}(X_n) \uparrow \tilde{U}(X) \mbox{ for every sequence $(X_n)$ in $C_Z$ such that $X_n\uparrow X$
for some $X \in L_Z$}
\ee
and 
\be \label{U*}
\tilde{U}^*_{C_Z} (q\q) = 
\begin{cases}
D^{\alpha}_v(q\q) & \mbox{ if } q= 0 \mbox{ or }\q \in {\cal Q}_Z\\
\infty & \mbox{ if } q > 0 \mbox{ and } \q \in {\cal M}_Z \setminus {\cal Q}_Z.
\end{cases}
\ee
\end{lemma}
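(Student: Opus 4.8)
The plan is to prove the three assertions separately; everything except one minimax exchange will be bookkeeping with the integrability bounds from (A2)/(U3) and the compactness supplied by Lemma~\ref{lemma:Pc}. \emph{That $\tilde U$ is a real-valued increasing concave map on $B_Z$.} Monotonicity is immediate from (U1). For concavity I would use that $\tilde\Theta$ is a convex cone, that $x\mapsto u(\omega,x)$ is concave, and that $\inf$ is superadditive: for $X,Y\in B_Z$, $\lambda\in[0,1]$ and $\vartheta,\vartheta'\in\tilde\Theta$ the strategy $\lambda\vartheta+(1-\lambda)\vartheta'\in\tilde\Theta$, so
\[
\tilde U\brak{\lambda X+(1-\lambda)Y}\ \ge\ \lambda\inf_{\p\in{\cal P}}\crl{\E^\p u\brak{X+\textstyle\sum_{t=1}^T\vartheta_t\Delta S_t}+\alpha(\p)}+(1-\lambda)\inf_{\p\in{\cal P}}\crl{\E^\p u\brak{Y+\textstyle\sum_{t=1}^T\vartheta'_t\Delta S_t}+\alpha(\p)},
\]
and one takes the suprema over $\vartheta$ and $\vartheta'$. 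For finiteness, $u(\,\cdot\,)\le v(\,\cdot\,,0)$, which is bounded by (U2), so $\tilde U\le\sup_\omega v(\omega,0)<\infty$; and choosing $\vartheta=0$ and comparing $-mZ$ (any $m$ with $X\ge-mZ$) with $-\beta(Z)$ exactly as in the first lines of the proof of Lemma~\ref{lemma:inf} gives $\tilde U(X)\ge\inf_{\p\in{\cal P}}\crl{\E^\p u(-mZ)+\alpha(\p)}>-\infty$.

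\emph{The continuity property \eqref{upL}.} Monotonicity already gives $\tilde U(X_n)\uparrow\ell\le\tilde U(X)$, so only $\ell\ge\tilde U(X)$ is needed. Fix $\vartheta\in\tilde\Theta$ and set $Y_n:=X_n+\sum_{t=1}^T\vartheta_t\Delta S_t$, $Y:=X+\sum_{t=1}^T\vartheta_t\Delta S_t$; since the $\vartheta_t$ are continuous and bounded and $|\Delta S_t|\le 2Z$, we have $Y_n\in C_Z$, $Y\in L_Z$ and $Y_n\uparrow Y$. As $Y_n\ge Y_1\ge-mZ$ for a single $m$ and $Y\ge-mZ$, Lemma~\ref{lemma:inf} yields one level $c$ with $\inf_{\p\in{\cal P}}\crl{\E^\p u(Y_n)+\alpha(\p)}=\inf_{\p\in{\cal P}_c}\crl{\E^\p u(Y_n)+\alpha(\p)}$ for all $n$, and likewise for $Y$. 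On the $\sigma({\cal M}_{Z_{\gamma}},C_{Z_{\gamma}})$-compact set ${\cal P}_c$ (Lemma~\ref{lemma:Pc}, with $\gamma$ chosen so that $u(-mZ)/Z_{\gamma}$ is bounded below, hence $u(Y_n),u(Y)\in C_{Z_{\gamma}}$), the maps $\p\mapsto\E^\p u(Y_n)+\alpha(\p)$ are lower semicontinuous, increasing in $n$, and converge pointwise (for each fixed $\p\in{\cal P}_c$, by monotone convergence, since $u(Y_1)$ is $\p$-integrable) to $\p\mapsto\E^\p u(Y)+\alpha(\p)$. A Dini-type argument — pick minimizers $\p_n\in{\cal P}_c$ of the $n$-th map, pass to a convergent subnet and use lower semicontinuity of each fixed map — gives $\inf_{{\cal P}_c}\crl{\E^\p u(Y_n)+\alpha(\p)}\uparrow\inf_{{\cal P}_c}\crl{\E^\p u(Y)+\alpha(\p)}$. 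Hence $\tilde U(X_n)\ge\inf_{\p\in{\cal P}}\crl{\E^\p u(Y_n)+\alpha(\p)}\to\inf_{\p\in{\cal P}}\crl{\E^\p u(Y)+\alpha(\p)}$, and taking $\sup_{\vartheta\in\tilde\Theta}$ gives $\ell\ge\tilde U(X)$.

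\emph{The conjugate \eqref{U*}.} Writing $\tilde U^*_{C_Z}(q\q)=\sup_{X\in C_Z}\sup_{\vartheta\in\tilde\Theta}\crl{\inf_{\p\in{\cal P}}\crl{\E^\p u(X+\sum_{t=1}^T\vartheta_t\Delta S_t)+\alpha(\p)}-q\E^\q X}$ and, for each fixed $\vartheta$, substituting $Y:=X+\sum_{t=1}^T\vartheta_t\Delta S_t$ (a bijection of $C_Z$), one obtains
\[
\tilde U^*_{C_Z}(q\q)=\brak{\sup_{\vartheta\in\tilde\Theta}q\,\E^\q\textstyle\sum_{t=1}^T\vartheta_t\Delta S_t}+\sup_{Y\in C_Z}\crl{\inf_{\p\in{\cal P}}\crl{\E^\p u(Y)+\alpha(\p)}-q\E^\q Y}.
\]
The first summand is $0$ when $q=0$; when $q>0$ it is $0$ if $\q\in{\cal Q}_Z$ (martingale property, $\vartheta_t$ bounded, $\Delta S_t\in L^1(\q)$) and $+\infty$ if $\q\notin{\cal Q}_Z$: if $(S_t)$ fails to be a $\q$-martingale, pick $t$ with $\E^\q[\Delta S_t\,|\,{\cal F}_{t-1}]\neq0$ on a set of positive $\q$-measure, approximate a bounded ${\cal F}_{t-1}$-measurable witness ($\operatorname{sign}\E^\q[\Delta S_t\,|\,{\cal F}_{t-1}]$) by continuous bounded ${\cal F}_{t-1}$-measurable functions converging in $\q$-measure (dominated by $|\Delta S_t|\in L^1(\q)$) to get $\vartheta\in\tilde\Theta$ with $\E^\q\sum_{t=1}^T\vartheta_t\Delta S_t>0$, then scale. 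This settles $q>0,\ \q\notin{\cal Q}_Z$. For $(q,\q)\in\hat{\cal Q}_Z$ it remains to prove
\[
\sup_{Y\in C_Z}\inf_{\p\in{\cal P}}G(Y,\p)=D^\alpha_v(q\q),\qquad G(Y,\p):=\E^\p u(Y)-q\E^\q Y+\alpha(\p),
\]
and by Lemma~\ref{lemma:Dv} the right-hand side equals $\inf_{\p\in{\cal P}}\sup_{Y\in C_Z}G(Y,\p)$; ``$\le$'' is the trivial minimax inequality. For ``$\ge$'' one exchanges $\sup_Y$ and $\inf_\p$: $G(\cdot,\p)$ is concave and $\|\cdot\|_Z$-continuous on the convex set $C_Z$ (dominated convergence), and $G(Y,\cdot)$ is convex and $\sigma({\cal M}_{Z_{\gamma}},C_{Z_{\gamma}})$-lower semicontinuous on ${\cal P}$ (since $u(Y)\in C_{Z_{\gamma}}$ and $\alpha$ is lsc by (A1)). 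Restricting $Y$ to a set $\crl{Y\in C_Z:\ Y\ge-mZ,\ |Y|\le mZ}$ and, via Lemma~\ref{lemma:inf}, replacing $\inf_{\p\in{\cal P}}$ by $\inf_{\p\in{\cal P}_{c_m}}$ over the compact set ${\cal P}_{c_m}$, Theorem~2 of \cite{KF} applies; letting $m\to\infty$ and using Lemma~\ref{lemma:Dv} to see that the finite-$m$ suprema recover $D_v(q\q\,\|\,\p)+\alpha(\p)$ on the priors that are nearly optimal for $D^\alpha_v(q\q)$ yields the claimed equality.

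\emph{The main obstacle.} The first two assertions are essentially mechanical once the integrability bounds are set up. The real difficulty is the minimax exchange in the last display: the pair $({\cal P},\alpha)$ is not compact, so no minimax theorem applies off the shelf, and one must exploit the coercivity hidden in Lemma~\ref{lemma:inf} — the superlinear growth of the penalty relative to $\E^\p u(-mZ)$ — simultaneously to confine the effective priors to a compact sublevel set ${\cal P}_{c_m}$ and the effective integrands $Y$ to a norm ball, perform the exchange there, and then control the double limit $m\to\infty$. Arranging these two restrictions so that they remain compatible (the level $c_m$ needed in Lemma~\ref{lemma:inf} grows with the ball radius $m$) is the delicate point.
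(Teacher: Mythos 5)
Your treatment of the real-valuedness, monotonicity and concavity of $\tilde U$ is fine and matches what the paper leaves as a routine check. For \eqref{upL}, your direct argument — fix $\vartheta\in\tilde\Theta$, reduce via Lemma~\ref{lemma:inf} to a single compact sublevel set ${\cal P}_c$, then pass to a convergent subnet of minimizers and use lower semicontinuity of each $f_m(\p)=\mathbb{E}^\p u(Y_m)+\alpha(\p)$ — is correct and takes a genuinely different route from the paper. The paper proves \eqref{upL} only for $X\in C_Z$ by Ky Fan's minimax theorem, establishes \eqref{U*} first, and then extends \eqref{upL} to $L_Z$ via Proposition~\ref{prop:dual}. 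Your argument works directly on $L_Z$ and makes \eqref{upL} logically independent of \eqref{U*}, which is a small simplification. (One harmless slip: you assert $u(Y)\in C_{Z_\gamma}$, but $Y\in L_Z$ is merely lower semicontinuous, so $u(Y)$ need not be continuous. Your subnet argument does not actually use this: it only needs lower semicontinuity of $f_m$ for each fixed finite $m$, together with pointwise monotone convergence $f_m\uparrow f$.)

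The genuine gap is in the proof of \eqref{U*} for $(q,\q)\in\hat{\cal Q}_Z$. After the minimax exchange on each $C^m_Z=\{Y\in C_Z:Y\ge-mZ\}$ one is left with showing
\[
\sup_{m\in\mathbb{R}_+} \inf_{\p\in{\cal P}}\crl{D^m_v(q\q\,\|\,\p)+\alpha(\p)}\ \ge\ \inf_{\p\in{\cal P}}\crl{D_v(q\q\,\|\,\p)+\alpha(\p)},
\]
where $D^m_v(q\q\,\|\,\p):=\sup_{Y\in C^m_Z}\{\mathbb{E}^\p u(Y)-q\mathbb{E}^\q Y\}$ is the truncated conjugate. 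You flag this double limit as ``the delicate point'' and then do not carry it out: the sentence ``letting $m\to\infty$ and using Lemma~\ref{lemma:Dv} to see that the finite-$m$ suprema recover $D_v$ on the priors that are nearly optimal'' is not an argument, precisely because the level $c_m$ from Lemma~\ref{lemma:inf} may drift with $m$. The paper closes this gap with a dichotomy. Either there is a single level $b$ such that $\sup_m\inf_{\p\in{\cal P}}\{D^m_v+\alpha\}=\sup_m\inf_{\p\in{\cal P}_b}\{D^m_v+\alpha\}$; then, since $D^m_v+\alpha$ is increasing in $m$, convex and $\sigma({\cal M}_{Z_\gamma},C_{Z_\gamma})$-lower semicontinuous in $\p$ on the compact ${\cal P}_b$, Ky Fan's theorem applies once more and yields the desired inequality. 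Or no such $b$ exists; then, using the uniform lower bound $D^m_v\ge\inf_\omega u(\omega,0)$, one shows that $\sup_m\inf_{\p\in{\cal P}}\{D^m_v+\alpha\}=\infty$, and the inequality is trivial. (Concretely: if the left side were a finite number $L$, any $\p$ with $D^m_v(q\q\,\|\,\p)+\alpha(\p)\le L+1$ would satisfy $\alpha(\p)\le L+1-\inf_\omega u(\omega,0)$, so the level $b:=L+1-\inf_\omega u(\omega,0)$ would work after all.) Without some version of this coercivity argument, the minimax half of \eqref{U*} is not proved, so the proposal is incomplete on this point. A secondary, cosmetic difference: you restrict $Y$ to the two-sided set $\{-mZ\le Y\le mZ\}$ where the paper uses only the one-sided constraint $Y\ge -mZ$; the one-sided version is what makes Lemma~\ref{lemma:Dv} directly applicable in the $m\to\infty$ limit, and I would recommend following the paper here.
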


\begin{proof}
It is straight-forward to check that $\tilde{U}$ is an increasing concave mapping from $B_Z$ to $\mathbb{R}$ .
To show \eqref{U*}, we note that for given $q \in \mathbb{R}_+$ and $\q \in {\cal M}_Z$,
\begin{align*} 
\tilde{U}^*_{C_Z}(q\mathbb{Q})
&= \sup_{X\in C_Z} \sup_{\vartheta\in\tilde{\Theta}} \inf_{\mathbb{P}\in\mathcal{P}}
\crl{\mathbb{E}^\mathbb{P} u\brak{X + \sum_{t=1}^T \vartheta_t \Delta S_t} + \alpha(\p)
- q\mathbb{E}^\mathbb{Q} X}\nonumber\\  
&= \sup_{X\in C_Z} \sup_{\vartheta\in\tilde{\Theta}}\inf_{\mathbb{P}\in\mathcal{P}}
\crl{ \mathbb{E}^{\p} u(X) + \alpha(\p) - q\mathbb{E}^\mathbb{Q} X + q\mathbb{E}^\mathbb{Q} 
\sum_{t=1}^T \vartheta_t \Delta S_t}.
\end{align*}
Since $\mathbb{E}^\mathbb{Q} \sum_{t=1}^T \vartheta_t \Delta S_t = 0$ for all $\vartheta \in \tilde{\Theta}$ 
if and only if $S$ is a $\q$-martingale, one has $\tilde{U}^*_{C_Z}(q\mathbb{Q}) = \infty$ 
for $q > 0$ and $\q \in {\cal M}_Z \setminus {\cal Q}_Z$. On the other hand, if $q = 0$ or $\q \in {\cal Q}_Z$,
then 
\[
\tilde{U}^*_{C_Z}(q\mathbb{Q}) = \sup_{X \in C_Z} \inf_{\mathbb{P}\in\mathcal{P}}
\crl{ \mathbb{E}^{\p} u(X) + \alpha(\p) - q\mathbb{E}^\mathbb{Q} X}.
\]
So, it follows from Lemma \ref{lemma:Dv} that
\be \label{U*le} \tilde{U}^*_{C_Z}(q\mathbb{Q}) \le
\inf_{\mathbb{P}\in\mathcal{P}} \sup_{X \in C_Z} \crl{ \mathbb{E}^{\p} u(X) 
+ \alpha(\p) - q\mathbb{E}^\mathbb{Q} X}
= \inf_{\mathbb{P} \in\mathcal{P}} \crl{D_v(q\q \,\| \, \p) + \alpha(\p)}.
\ee
By Lemma \ref{lemma:Pc}, there exists a continuous increasing function 
$\gamma \colon [1,\infty) \to \mathbb{R}$ such that $\lim_{x \to \infty} \gamma(x)/x = \infty$,
and for all $c \in \mathbb{R}_+$, ${\cal P}_c$ is a 
$\sigma({\cal M}_{Z_{\gamma}}, C_{Z_{\gamma}})$-compact subset of ${\cal M}_{Z_{\gamma}}$.
For a given constant $m \in \mathbb{R}_+$, denote 
\[
C^m_Z := \crl{X \in C_Z : X \ge - mZ} \quad \mbox{and} \quad
D^m_v(q \q \,\| \, \p) := \sup_{X \in C^m_Z} \crl{ \mathbb{E}^{\p} u(X) - q\mathbb{E}^\mathbb{Q} X}.
\]
By Lemma \ref{lemma:inf}, there exists an $a \in \mathbb{R}_+$ such that 
\[
\sup_{X \in C^m_Z} \inf_{\p \in {\cal P}} \crl{ \mathbb{E}^{\p} u(X) + \alpha(\p) - q\mathbb{E}^\mathbb{Q} X}
= \sup_{X \in C^m_Z} \inf_{\p \in {\cal P}_a} \crl{ \mathbb{E}^{\p} u(X) + \alpha(\p) - q\mathbb{E}^\mathbb{Q} X}.
\]
So, since $\mathbb{E}^{\p} u(X) + \alpha(\p) - q\mathbb{E}^\mathbb{Q} X$ is concave 
in $X \in C^m_Z$ as well as convex and $\sigma({\cal M}_{Z_{\gamma}}, C_{Z_{\gamma}})$-lower 
semicontinuous in $\p \in {\cal P}_c$,
it follows from the minimax result, Theorem 2 of \cite{KF}, that
\beas
\sup_{X \in C^m_Z} \inf_{\p \in {\cal P}} \crl{ \mathbb{E}^{\p} u(X) + \alpha(\p) - q\mathbb{E}^\mathbb{Q} X}
&=& \inf_{\p \in {\cal P}_a} \sup_{X \in C^m_Z} \crl{ \mathbb{E}^{\p} u(X) + \alpha(\p) - q\mathbb{E}^\mathbb{Q} X}\\
&\ge& \inf_{\p \in {\cal P}} \crl{D^m_v(q \q \,\| \, \p) + \alpha(\p)}.
\eeas
Now, note that 
\[
- \infty < \inf_{\omega \in \Omega} u(\omega,0) \le
D^m_v(q \q \,\| \, \p) \le \sup_{(\omega,x) \in \Omega \times \mathbb{R}_+}
u(\omega, x) + q m \mathbb{E}^{\q} Z < \infty \quad \mbox{for all } \p \in {\cal P}.
\]
Moreover, $D^m_v(q \q \,\| \, \p) + \alpha(\p)$ is increasing in $m \in \mathbb{R}_+$ as well as convex and 
$\sigma({\cal M}_{Z_{\gamma}}, C_{Z_{\gamma}})$-lower semicontinuous in $\p \in {\cal P}$. 
So, if there exists a $b \in \mathbb{R}_+$ such that 
\be \label{bB}
\sup_{m \in \mathbb{R}_+} \inf_{\p \in {\cal P}} \crl{D^m_v(q \q \,\| \, \p) + \alpha(\p)} 
= \sup_{m \in \mathbb{R}_+} \inf_{\p \in {\cal P}_{b}} \crl{D^m_v(q \q \,\| \, \p) + \alpha(\p)},
\ee
another application of Theorem 2 in \cite{KF} yields
\beas
\sup_{m \in \mathbb{R}_+} \inf_{\p \in {\cal P}} \crl{D^m_v(q \q \,\| \, \p) + \alpha(\p)}
&=& \inf_{\p \in {\cal P}_b} \sup_{m \in \mathbb{R}_+} \crl{D^m_v(q \q \,\| \, \p) + \alpha(\p)}\\
&\ge& \inf_{\p \in {\cal P}} \crl{D_v(q \q \,\| \, \p) + \alpha(\p)}.
\eeas
On the other hand, if \eqref{bB} does not hold for any $b \in \mathbb{R}_+$, there exists a 
sequence $(b_n)$ in $\mathbb{R}_+$ such that $b_n \to \infty$ and
\[
\sup_{m \in \mathbb{R}_+} \inf_{\p \in {\cal P}} \crl{D^m_v(q \q \,\| \, \p) + \alpha(\p)}
\ge \lim_{n \to \infty} \crl{\inf_{\omega \in \Omega} u(\omega,0) + b_n} = \infty.
\]
This shows that
\[
\tilde{U}^*_{C_Z}(q\mathbb{Q}) = \sup_{m \in \mathbb{R}_+} 
\sup_{X \in C^m_Z} \inf_{\mathbb{P}\in\mathcal{P}} 
\crl{ \mathbb{E}^{\p} u(X) + \alpha(\p) - q\mathbb{E}^\mathbb{Q} X}
\ge \inf_{\p \in {\cal P}} \crl{D_v(q \q \,\| \, \p) + \alpha(\p)},
\]
which, together with \eqref{U*le}, implies \eqref{U*}.

Next, consider a sequence $(X_n)$ in $C_Z$ such that $X_n \uparrow X$ for some $X \in C_Z$.
Since $X_1 \in C_Z$, one has $X_1 \ge - mZ$ for some $m \in \mathbb{R}_+$. So, by Lemma 
\ref{lemma:inf}, there exists a $c \in \mathbb{R}_+$ such that
\[
\inf_{\p \in {\cal P}} \crl{\mathbb{E}^{\p} u(X_n) + \alpha(\p)}
= \inf_{\p \in {\cal P}_c} \crl{\mathbb{E}^{\p} u(X_n) + \alpha(\p)} \quad \mbox{for all } n.
\]
Using Theorem 2 of \cite{KF} once more, we obtain
\[
\sup_n \inf_{\p \in {\cal P}} \crl{\mathbb{E}^{\p} u(X_n) + \alpha(\p)} 
= \inf_{\p \in {\cal P}_c} \sup_n \crl{\mathbb{E}^{\p} u(X_n) + \alpha(\p)} 
\ge \inf_{\p \in {\cal P}} \crl{\mathbb{E}^{\p} u(X) + \alpha(\p)},
\]
which by monotonicity, gives
\be \label{XnX}
\sup_n \inf_{\p \in {\cal P}} \crl{\mathbb{E}^{\p} u(X_n) + \alpha(\p)} 
= \inf_{\p \in {\cal P}} \crl{\mathbb{E}^{\p} u(X) + \alpha(\p)}.
\ee
Since, for a given strategy $\vartheta \in \tilde{\Theta}$, $\sum_{t=1}^T \vartheta_t \Delta S_t$ belongs to $C_Z$, 
we get from \eqref{XnX} that
$$
\sup_n \inf_{\p \in {\cal P}} \mathbb{E}^{\p} 
u \brak{X_n + \sum_{t=1}^T \vartheta_t \Delta S_t} = 
\inf_{\p \in {\cal P}} \mathbb{E}^{\p} u \brak{X + \sum_{t=1}^T \vartheta_t \Delta S_t},
$$
which, due to $\tilde{U}(X) = \sup_{\vartheta \in \tilde{\Theta}} 
\inf_{\p \in {\cal P}} \mathbb{E}^{\p} u \brak{X + \sum_{t=1}^T \vartheta_t \Delta S_t}$, implies
that $\tilde{U}$ satisfies \eqref{upL} for $X \in C_Z$. In particular, 
$\phi(X) = - \tilde{U}(-X)$ is an increasing convex mapping from $B_Z$ to $\mathbb{R}$ 
satisfying condition (R1) of Theorem \ref{thm:dual}. Moreover,
\beas
\phi^*_{C_Z}(q\q) &=& \tilde{U}^*_{C_Z}(q\q) \le \phi^*_{U_Z}(q\q) = \tilde{U}^*_{L_Z}(q\q) :=
\sup_{X \in L_Z} \inf_{\mathbb{P}\in\mathcal{P}} \crl{ \mathbb{E}^{\p} u(X) + \alpha(\p) - q\mathbb{E}^\mathbb{Q} X}\\
&\le& \inf_{\mathbb{P} \in\mathcal{P}} \crl{D_v(q\q \,\| \, \p) + \alpha(\p)} = \tilde{U}^*_{C_Z}(q\q)
\eeas
for all $(q,\q) \in \hat{\cal Q}_Z$.
So it follows from Proposition \ref{prop:dual} that $\phi$ satisfies condition (R2) of Theorem \ref{thm:dual}, 
which means that $\tilde{U}$ satisfies \eqref{upL}.
\end{proof}

Now, we are ready to prove Theorem \ref{thm1}. \medskip

\noindent
{\bf Proof of Theorem \ref{thm1}}\\
It follows from Lemma \ref{lemma:Ut} and Theorem \ref{thm:dual} that 
\[
\tilde{U}(X) = \min_{(q,\q) \in \hat{\cal Q}_Z} \crl{q \mathbb{E}^{\q} X + D^{\alpha}_v(q\q)} \quad \mbox{for all } X \in L_Z.
\]
Since, by Lemma \ref{lem:weak.dual}, 
\[
\tilde{U}(X) \le U(X) \le \inf_{(q,\q) \in \hat{\cal Q}_Z} \crl{q \mathbb{E}^{\q} X + D^{\alpha}_v(q\q)} \quad \mbox{for all } X \in L_Z,
\]
this proves the theorem. \qed

\subsection{Medial limits}

To prove Theorem \ref{thm2} and Corollary \ref{cor}, we need the concept 
of a medial limit, which for our purposes, is a positive linear functional, $\lm \colon l^{\infty} \to \R$, satisfying
$\liminf\leq\lm\leq\limsup$ such that for any uniformly bounded sequence $X_n\colon M \to \mathbb{R}$
of universally measurable functions on a measurable space $(M,{\cal F})$, $X = \lm_n X_n$ 
is universally measurable and $\mathbb{E}^{\mathbb{P}} X = \lm_n \mathbb{E}^{\mathbb{P}} X_n$ 
for every probability measure $\mathbb{P}$ on the universal completion ${\cal F}^*$ of ${\cal F}$.
It was originally shown by Mokobozki that medial limits exist under the usual ZFC axioms and the
continuum hypothesis; see \cite{Meyer}. Later, Normann \cite{Normann} showed that it is enough to 
assume ZFC and Martin's axiom. If a medial limit exists, we extend it to $\overline{\mathbb{R}}^{\mathbb{N}}$ by setting
\be \label{ext}
\lm_n x_n := \sup_{k \in \mathbb{N}} \inf_{m \in \mathbb{N}} \lm_n \, (-m) \vee (x_n \wedge k).
\ee

\begin{lemma} \label{lem:mllemma} 
Assume a medial limit exists. Then the following hold:
\begin{itemize}
\item[{\rm (i)}]
The set ${\cal L}$ of sequences $(x_n)$ in $\mathbb{R}^{\mathbb{N}}$ satisfying
$\lm_n |x_n| < \infty$ is a linear space.
\item[{\rm (ii)}]
$\lm \colon {\cal L} \to \mathbb{R}$ is a positive linear functional.
\item[{\rm (iii)}]
$\varphi(\mathop{\mathrm{lim\,med}}_n x_n) \leq\mathop{\mathrm{lim\,med}}_n \varphi(x_n)$
for every convex function $\varphi \colon\mathbb{R}\to\mathbb{R}$ 
and $(x_n) \in \mathcal{L}$.
\item[{\rm (iv)}]
$\lm_n X_n$ is universally measurable for every sequence of universally measurable functions 
$X_n \colon \Omega \to \mathbb{R}$.
\item[{\rm (v)}] 
$\mathbb{E}^\mathbb{P}\lm_n X_n  \le \mathop{\mathrm{lim\,med}}_n \mathbb{E}^\mathbb{P}X_n$
for each probability measure $\mathbb{P}$ on ${\cal F}^\ast$ and every 
sequence of universally measurable functions $X_n \colon \Omega \to \mathbb{R}$ such that 
$X_n \ge c$ for all $n$ and a constant $c \in \mathbb{R}$.
\end{itemize}
\end{lemma}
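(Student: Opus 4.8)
The plan is to treat all five items through the extension of $\lm$ to $\overline{\mathbb{R}}^{\mathbb{N}}$ defined in \eqref{ext}, after recording a few elementary facts about it. Taking $k,m$ larger than $\sup_n|x_n|$ shows $(-m)\vee(x_n\wedge k)=x_n$ for a bounded sequence, so the extension agrees with the original $\lm$ on $l^\infty$, and since $\lm$ is monotone on $l^\infty$ the extension is monotone on all of $\overline{\mathbb{R}}^{\mathbb{N}}$. If $x_n\ge c$ for all $n$, then for $m\ge|c|$ one has $(-m)\vee(x_n\wedge k)=x_n\wedge k$, so the inner infimum drops out and $\lm_n x_n=\sup_k\lm_n(x_n\wedge k)\ge c$. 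Finally, distinguishing $x_n\ge0$ from $x_n<0$ yields the pointwise identity
\[
(-m)\vee(x_n\wedge k)=(x_n^+\wedge k)-(x_n^-\wedge m)\qquad(k,m\ge1),
\]
which splits the general case into bounded pieces.

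For (i), $|x_n+y_n|\wedge k\le(|x_n|\wedge k)+(|y_n|\wedge k)$ together with monotonicity and linearity of $\lm$ on $l^\infty$ gives $\lm_n(|x_n+y_n|\wedge k)\le\lm_n|x_n|+\lm_n|y_n|$ for each $k$, hence $\lm_n|x_n+y_n|<\infty$; rescaling the truncation level handles scalar multiples, and constant sequences are clearly in ${\cal L}$. For (ii), the displayed identity gives, for $(x_n)\in{\cal L}$,
\[
\lm_n x_n=\sup_k\inf_m\bigl(\lm_n(x_n^+\wedge k)-\lm_n(x_n^-\wedge m)\bigr)=\lm_n x_n^+-\lm_n x_n^-,
\]
using $\sup_m\lm_n(x_n^-\wedge m)=\lm_n x_n^-$; both terms are finite since they are bounded by $\lm_n|x_n|$, so $\lm\colon{\cal L}\to\mathbb{R}$, and positivity is immediate from monotonicity. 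For additivity I would first handle nonnegative $(x_n),(y_n)$ via the sandwich $(x_n+y_n)\wedge k\le(x_n\wedge k)+(y_n\wedge k)\le(x_n+y_n)\wedge 2k$, which after applying $\lm_n$ and letting $k\to\infty$ forces $\lm_n(x_n+y_n)\le\lm_n x_n+\lm_n y_n\le\lm_n(x_n+y_n)$. The general case then follows by applying this to the nonnegative sequences in $(x_n+y_n)^++x_n^-+y_n^-=(x_n+y_n)^-+x_n^++y_n^+$ and rearranging, while homogeneity comes from rescaling truncations when $\lambda\ge0$ and from $(\lambda x_n)^\pm=|\lambda|x_n^\mp$ when $\lambda<0$.

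For (iii), pick $a,b\in\mathbb{R}$ with $\varphi(t)\ge at+b$ for all $t$ and $\varphi(t_0)=at_0+b$, where $t_0:=\lm_n x_n$ (a supporting line of the convex function $\varphi$ at $t_0$); then $(ax_n+b)\in{\cal L}$ by (i), so monotonicity of the extension together with (ii) yields $\lm_n\varphi(x_n)\ge\lm_n(ax_n+b)=at_0+b=\varphi(t_0)$. For (iv), fix $k$ and $m$: the sequence $\bigl((-m)\vee(X_n\wedge k)\bigr)_n$ is uniformly bounded and universally measurable, so $\lm_n(-m)\vee(X_n\wedge k)$ is universally measurable by the defining property of medial limits, and therefore $\lm_n X_n=\sup_k\inf_m\lm_n(-m)\vee(X_n\wedge k)$ is universally measurable as a countable combination of such functions. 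For (v), after subtracting the constant $c$ I may assume $X_n\ge0$; then $\lm_n X_n=\lim_k\lm_n(X_n\wedge k)$ pointwise and increasingly, so monotone convergence and the defining property applied to the bounded universally measurable sequence $(X_n\wedge k)_n$ give
\[
\mathbb{E}^{\p}\lm_n X_n=\lim_k\mathbb{E}^{\p}\lm_n(X_n\wedge k)=\lim_k\lm_n\mathbb{E}^{\p}(X_n\wedge k)\le\lm_n\mathbb{E}^{\p}X_n,
\]
where the last step uses monotonicity of the extension and $\mathbb{E}^{\p}(X_n\wedge k)\le\mathbb{E}^{\p}X_n$.

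I expect the one real obstacle to be the additivity of $\lm$ on ${\cal L}$ in (ii): the extension \eqref{ext} is not visibly additive and one cannot simply interchange $\sup_k$ with $\lm_n$ inside $\lm_n(x_n\wedge k)$ for sums. The resolution is the two-sided truncation sandwich for nonnegative sequences, which traps $\lm_n(x_n+y_n)$ between $\lm_n x_n+\lm_n y_n$ from above and below, combined with the reduction to the nonnegative case via positive and negative parts. Once this is in place, everything else reduces, via the identity $(-m)\vee(x_n\wedge k)=(x_n^+\wedge k)-(x_n^-\wedge m)$ and the defining measurability and expectation property of medial limits, to linearity and monotonicity of $\lm$ on $l^\infty$.
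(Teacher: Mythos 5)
Your proof is correct and follows essentially the same route as the paper: all five items are derived from the definition \eqref{ext} of the extension together with monotonicity and linearity of $\lm$ on $l^\infty$, with (iii) handled by an affine minorant and (v) by truncation, the expectation-interchange property, and monotone convergence. The only cosmetic difference is that for (iii) the paper invokes the full Fenchel--Moreau representation $\varphi(x)=\sup_y\{xy-\varphi^*(y)\}$ rather than a single supporting line, which is an equivalent way of expressing the same inequality; for (i) and (ii) the paper merely states they are ``simple consequences of \eqref{ext},'' and your truncation-sandwich and positive/negative-part decomposition supply a valid proof of exactly the details the paper omits.
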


\begin{proof}
(i) and (ii) are simple consequences of \eqref{ext}. To show (iii), we note that by the Fenchel--Moreaux 
theorem, $\varphi$ can be written as $\varphi(x) = \sup_{y \in \mathbb{R}} xy - \varphi^*(y)$
for the convex conjugate $\varphi^*$ of $\varphi$. Moreover, since $\liminf \le \lm \le \limsup$,
one has $\lm_n(x_n) = c$ for constant sequences $x_n \equiv c$. So, since $\lm$ is linear on ${\cal L}$, 
one obtains
$$
\varphi(\mathop{\mathrm{lim\,med}}_n x_n)
= \sup_{y \in \mathbb{R}}  
\big(\lm_n x_n y -\varphi^*(y)\big)
\leq \lm_n \big(\sup_{y \in \mathbb{R}} x_ny -\varphi^*(y)\big)
=\lm_n \varphi(x_n).
$$  

(iv) follows from \eqref{ext} since $\lm_n X_n$ is universally measurable 
for any uniformly bounded sequence of universally measurable functions 
$X_n \colon \Omega \to \mathbb{R}$.

(v): For every $k \in \mathbb{N}$, 
$$
\mathbb{E}^{\mathbb{P}} \lm_n (X_n \wedge k) = \lm_n \mathbb{E}^{\mathbb{P}}
(X_n \wedge k) \le \lm_n \mathbb{E}^{\mathbb{P}} X_n,
$$
and therefore, by \eqref{ext} and the monotone convergence theorem,
$\mathbb{E}^{\mathbb{P}} \lm_n X_n \le \lm_n \mathbb{E}^{\mathbb{P}} X_n.$
\end{proof}

\subsection{Proofs of Theorem \ref{thm2} and Corollary \ref{cor}}

\begin{lemma}\label{lem:existence.strategy}
Assume a medial limit exists, $u$ fulfills {\rm (U1)--(U3)} and ${\cal P}$ satisfies {\rm (NA)}. 
Let $X_n \colon \Omega \to \mathbb{R}$ be a sequence of Borel measurable functions 
decreasing pointwise to a Borel measurable function $X \colon \Omega \to \mathbb{R}$ such that $U(X) \in \mathbb{R}$.
Then $U(X_n)$ decreases to $U(X)$, and there exists a strategy $\vartheta^\ast \in \Theta$ such that 
\[ U(X)=\inf_{\mathbb{P}\in\mathcal{P}} \crl{
\mathbb{E}^\mathbb{P} u \brak{X+ \sum_{t=1}^T \vartheta^\ast_t \Delta S_t} + \alpha(\p)}. 
\] 
\end{lemma}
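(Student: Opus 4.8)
The plan is to construct the optimal strategy for $X$ by forming, with the help of the medial limit, a coordinatewise ``limit'' of near-optimal strategies for the $X_n$; this is the only averaging device at hand that produces a \emph{single} strategy behaving well under every $\mathbb{P}\in\mathcal{P}$ simultaneously. I would begin with a few reductions. By (U2) (applied with, say, $n=1$) the function $u$ is bounded from above by some $\bar u\in\mathbb{R}$, so $U(X_n)\le\bar u$; since $u$ is increasing and $X\le\dots\le X_2\le X_1$, the numbers $U(X_n)$ decrease and are bounded below by $U(X)>-\infty$, hence converge to some $L\in[U(X),\bar u]$. It then suffices to produce a strategy $\vartheta^\ast\in\Theta$ with
\[
\inf_{\mathbb{P}\in\mathcal{P}}\crl{\mathbb{E}^{\mathbb{P}}u\brak{X+\textstyle\sum_{t=1}^T\vartheta^\ast_t\Delta S_t}+\alpha(\mathbb{P})}\ge L,
\]
since this forces $U(X)\ge L\ge U(X)$, so that $U(X)=L=\lim_nU(X_n)$ and $\vartheta^\ast$ attains the supremum in \eqref{U1}.

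Next, for each $n$ I would pick $\vartheta^n\in\Theta$ with $\inf_{\mathbb{P}\in\mathcal{P}}\crl{\mathbb{E}^{\mathbb{P}}u(X_n+\sum_t\vartheta^n_t\Delta S_t)+\alpha(\mathbb{P})}\ge U(X_n)-1/n\ge L-1/n$. The crucial point --- and the main obstacle --- is to arrange, without loss of generality, that these strategies do not explode, i.e. that $\sup_n|\vartheta^n_t(\omega)|<\infty$ for every $t$ and every $\omega$, so that the coordinatewise medial limit $\lm_n\vartheta^n_t$ stays real-valued. This is exactly where (NA) must be used: for a no-arbitrage measure $\mathbb{P}'\in\mathcal{P}$, near-optimality together with $\alpha\ge 0$ yields a lower bound on $\mathbb{E}^{\mathbb{P}'}u(X_n+\sum_t\vartheta^n_t\Delta S_t)$ uniform in $n$, and combining this with (U3) (and $X_n\le X_1$) bounds $\mathbb{E}^{\mathbb{P}'}(\sum_t\vartheta^n_t\Delta S_t)^-$ uniformly in $n$ as well. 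The hard part is then to upgrade these $L^1$-estimates --- valid under \emph{every} no-arbitrage measure of $\mathcal{P}$ --- to a genuinely pointwise bound on the strategies themselves: using the local description of arbitrage in the spirit of \cite{BN}, the $\vartheta^n$ should be replaceable by strategies carrying the same wealth processes $\sum_t\vartheta^n_t\Delta S_t$ that are pointwise bounded uniformly in $n$. Carrying out this normalization step is the technical heart of the proof.

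Granting it, I would set $\vartheta^\ast_t:=\lm_n\vartheta^n_t$ coordinatewise via the extension \eqref{ext}. By Lemma \ref{lem:mllemma}(iv) applied to $(\Omega,\mathcal{F}_{t-1})$ each $\vartheta^\ast_t$ is $\mathcal{F}^\ast_{t-1}$-measurable, and it is real-valued by the pointwise bound, so $\vartheta^\ast\in\Theta$. Since $\liminf\le\lm\le\limsup$ and $X_n\downarrow X$ one has $\lm_nX_n=X$ pointwise, whence by linearity of $\lm$ (Lemma \ref{lem:mllemma}(ii)) and finiteness of the sum over $t$, $X+\sum_t\vartheta^\ast_t\Delta S_t=\lm_nY^n$ pointwise, where $Y^n:=X_n+\sum_t\vartheta^n_t\Delta S_t$. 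For fixed $\omega$ the sequence $(Y^n(\omega))_n$ lies in $[X(\omega),X_1(\omega)]$ shifted by the bounded quantity $\sum_t\vartheta^n_t(\omega)\Delta S_t(\omega)$, hence is bounded, so Lemma \ref{lem:mllemma}(iii) applied to the convex function $x\mapsto-u(\omega,x)$ gives $u(\omega,X(\omega)+\sum_t\vartheta^\ast_t(\omega)\Delta S_t(\omega))\ge\lm_n u(\omega,Y^n(\omega))$ for every $\omega$. Finally, fixing $\mathbb{P}\in\mathcal{P}$ and applying Lemma \ref{lem:mllemma}(v) to the nonnegative functions $\bar u-u(\cdot,Y^n)$ (then using linearity of $\lm$) gives $\mathbb{E}^{\mathbb{P}}\lm_nu(\cdot,Y^n)\ge\lm_n\mathbb{E}^{\mathbb{P}}u(\cdot,Y^n)$, so that --- all functions being bounded above by $\bar u$ ---
\begin{align*}
\mathbb{E}^{\mathbb{P}}u\brak{X+\textstyle\sum_t\vartheta^\ast_t\Delta S_t}+\alpha(\mathbb{P})
&\ge\lm_n\brak{\mathbb{E}^{\mathbb{P}}u(\cdot,Y^n)+\alpha(\mathbb{P})}\\
&\ge\lm_n\inf_{\mathbb{P}'\in\mathcal{P}}\crl{\mathbb{E}^{\mathbb{P}'}u(\cdot,Y^n)+\alpha(\mathbb{P}')}\ge\lm_n\brak{U(X_n)-\tfrac1n}=L.
\end{align*}
Taking the infimum over $\mathbb{P}\in\mathcal{P}$ then yields the displayed bound and finishes the proof.
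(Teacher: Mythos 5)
Your high-level architecture matches the paper's: take near-optimizers $\vartheta^n$ for $X_n$, form $\vartheta^\ast_t=\lm_n\vartheta^n_t$, and push the medial limit through expectations via Lemma~\ref{lem:mllemma}(iii) and (v). However, the step you yourself flag as ``the technical heart'' --- controlling the $\vartheta^n$ so that the coordinatewise medial limit stays real-valued --- is precisely what is missing, and the route you propose for it does not work. You suggest replacing the $\vartheta^n$ by strategies that carry the \emph{same} wealth processes $\sum_t\vartheta^n_t\Delta S_t$ yet are pointwise bounded uniformly in $n$. But wherever $\Delta S_t\ne 0$, the strategy value is forced by the wealth increment, so such a replacement is only free on $\{\Delta S_t=0\}$; if the original $\vartheta^n_t$ are unbounded in $n$ on a set where $\Delta S_t\ne 0$, no normalization in the spirit of \cite{BN} will make them pointwise bounded there. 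In other words, the content you need is not a normalization of the strategies but a statement about \emph{where} the explosion can occur.

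The paper's actual argument is structurally different at this point. It does \emph{not} attempt a pointwise bound on $\vartheta^n_t$. Instead, it lets the explosion set $A^\pm_t:=\{\lm_n(\vartheta^n_t)^\pm=\infty\}$ be nonempty, defines $\vartheta^\ast_t:=0$ on $A^+_t\cup A^-_t$, and then proves
\[
\p\edg{\,\lm_n|\vartheta^n_t\Delta S_t|<\infty\,}=1\quad\text{for all }t\text{ and all }\p\in\mathcal{P},
\]
which shows in particular that $\Delta S_t=0$ on $A^\pm_t$ up to $\p$-null sets, so that zeroing $\vartheta^\ast_t$ there does not affect the wealth. This is where (NA) is really used: via (NA) and the fundamental theorem of asset pricing one finds, for each $\p$, a martingale measure $\q$ equivalent to a dominating $\p'\in\mathcal{P}$ with bounded density and $\mathbb{E}^{\q}X_1^+<\infty$; Lemma~\ref{lem:mart} plus $\alpha\ge 0$ then give a uniform bound on $\mathbb{E}^{\q}(\sum_t\vartheta^n_t\Delta S_t)^-$, which by \cite{jacod1998local} makes each $\sum_{s\le t}\vartheta^n_s\Delta S_s$ a $\q$-martingale; an induction over $t$ using the resulting $\q$-submartingale bound on $(\sum_{s\le t}\vartheta^n_s\Delta S_s)^-$ together with Lemma~\ref{lem:mllemma}(v) yields $\q$-a.s.\ finiteness of $\lm_n(\vartheta^n_t)^\pm(\Delta S_t)^\mp$, hence $\q[A^\pm_t\cap\{\Delta S_t\ne 0\}]=0$ by the martingale property. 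Since $\q$ dominates $\p$, this transfers back to $\p$. Without this chain of reasoning --- or some equally concrete substitute --- the proof has a genuine gap exactly at the step you identified but did not carry out.

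A small additional caution: once the explosion set is handled as above, the identity $\lm_n\sum_t\vartheta^n_t\Delta S_t=\sum_t\vartheta^\ast_t\Delta S_t$ holds only $\p$-a.s.\ for $\p\in\mathcal{P}$, not pointwise everywhere, so the subsequent applications of Lemma~\ref{lem:mllemma}(iii) and (v) should be read under a fixed $\p\in\mathcal{P}$ rather than ``for every $\omega$'' as in your write-up; this is how the paper phrases it.
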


\begin{proof}
Since $U$ is bounded from above, there exists for each $n$, a $\vartheta^n\in\Theta$ such that
\begin{align*}
  \inf_{\mathbb{P}\in\mathcal{P}} 
  \crl{\mathbb{E}^\mathbb{P}u \brak{X_n + \sum_{t=1}^T \vartheta^n_t \Delta S_t} + \alpha(\p)}
  \geq U(X_n)- \frac{1}{n}.
  \end{align*}
Denote $A_t^{\pm} :=\{\omega \in \Omega:\mathop{\mathrm{lim\,med}}_n 
(\vartheta_t^n(\omega))^{\pm} = \infty\}$ 
and define
  \[ \vartheta^\ast_t(\omega):=
  \begin{cases}
  \lm_n \vartheta^n_t(\omega) & \text{if } \omega\notin A_t^+ \cup A_t^-\\
  0&\text{otherwise.}
  \end{cases}\] 
We want to show that
\begin{align}
  \label{eq:varthat.in.L}
  \p\edg{ \lm_n |\vartheta^n_t \Delta S_t|< \infty } =1
  \quad\text{ for all $t = 1, \dots, T$ and }\p\in\mathcal{P}.
  \end{align}
To do that, we note that by (NA), every $\mathbb{P} \in {\cal P}$ is dominated by a $\p' \in {\cal P}$ that does not 
admit arbitrage. By the fundamental theorem of asset pricing, there exists
a martingale measure $\mathbb{Q}$ equivalent to $\mathbb{P}'$ such that $\mathbb{E}^{\q} X^+_1 < \infty$ and
$d\mathbb{Q}/d\mathbb{P}'$ is bounded\footnote{To see this, note that 
$d\tilde{\p}/d\p' = (1/1+ X^+_1)/\mathbb{E}^{\p'} (1/1+ X^+_1)$ defines a measure $\tilde{\p}$ equivalent 
to $\p'$ such that $\mathbb{E}^{\tilde{\p}} X^+_1 < \infty$. 
$\tilde{\p}$ still does not admit arbitrage. Therefore, there exists a martingale measure $\q$ with bounded 
density $d\q/d\tilde{\p}$; see e.g. Theorem 5.17 in \cite{FS}. $\q$ is equivalent to $\p'$ such that 
$\mathbb{E}^{\q} X^+_1 < \infty$ and $d\q/d\p'$ is bounded.
}.
If we can show that
\be \label{qinfty}
\lm_n |\vartheta^n_t \Delta S_t|< \infty \quad \mbox{$\q$-almost surely}
\ee
for all $t = 1, \dots, T$, \eqref{eq:varthat.in.L} follows since $\q$ dominates $\p$.
To prove \eqref{qinfty}, we set $\vartheta^n_0 = 0$ and use an induction argument.
Fix $t \ge 1$, and assume that \eqref{qinfty} holds for all $s \le t-1$. 

Since 
\[
\mathbb{E}^{\p'} u\brak{X_n + \sum_{t=1}^T \vartheta^n_t \Delta S_t} + \alpha(\p')
\ge U(X_n) \ge U(X) \in \mathbb{R},
\]
one obtains from Lemmas \ref{lem:conjugate.bounded} and \ref{lem:mart}
that there exist constants $c,c' \in \mathbb{R}$ such that 
\beas
\mathbb{E}^\mathbb{Q} \edg{\brak{\sum_{t=1}^T \vartheta^n_t \Delta S_t}^-}
&\le& \mathbb{E}^{\q} X_1^+ - \mathbb{E}^{\p'} u\brak{X_n + \sum_{t=1}^T \vartheta^n_t \Delta S_t} 
+ \mathbb{E}^{\p'} v \brak{q \frac{d\q}{d\p'}} + c\\
&\le& \mathbb{E}^{\q} X_1^+ + \alpha(\p') - U(X)
+ \mathbb{E}^{\p'} v \brak{q \frac{d\q}{d\p'}} + c = c'
\eeas
for all $n$. So it follows from Theorems 1 and 2 in \cite{jacod1998local} that 
 $\sum_{s=1}^t \vartheta^n_s \Delta S_s$ is a $\mathbb{Q}$-martingale. Consequently,
 $\brak{\sum_{s=1}^t \vartheta^n_s \Delta S_s}^-$ is a $\mathbb{Q}$-submartingale, and therefore,
$$
\mathbb{E}^\mathbb{Q}\edg{\brak{\sum_{s=1}^t \vartheta^n_s \Delta S_s}^-}
\le \mathbb{E}^\mathbb{Q}\edg{\brak{\sum_{s=1}^T \vartheta^n_s \Delta S_s}^-} \leq c'.
$$
Now, we obtain from part (v) of Lemma \ref{lem:mllemma} that $\lm_n \brak{\sum_{s=1}^t \vartheta^n_s \Delta S_s}^-$
is $\mathbb{Q}$-almost surely finite. But since 
\[
\brak{\vartheta^n_t \Delta S_t}^- \le 
\brak{\sum_{s=1}^{t-1} \vartheta^n_s \Delta S_s}^+ + \brak{\sum_{s=1}^t \vartheta^n_s \Delta S_s}^-,
\]
we get from the induction hypothesis that
$\mathop{\mathrm{lim\,med}}_n (\vartheta^n_t)^{\pm} (\Delta S_t)^{\mp}$
is $\mathbb{Q}$-almost surely finite. Since
$\lm_n (\vartheta^{n}_t)^+ (\Delta S_t)^-= \infty$ on $A^+_t \cap \{\Delta S_t< 0\}$, one has
$\mathbb{Q}[A^+_t \cap \crl{\Delta S_t < 0}]=0$. By the martingale property, this implies 
$\mathbb{Q}[A^+_t \cap \crl{\Delta S_t \neq 0}]= 0$. The same argument applied to $A_t^-$ gives 
$\mathbb{Q}[A^-_t \cap \crl{\Delta S_t \neq 0}]= 0$. 
It follows that $\lm_n|\vartheta^{n}_t \Delta S_t| < \infty$ $\q$-almost surely, which implies \eqref{eq:varthat.in.L}.

As a result, one has $\lm_n \sum_{t=1}^T \vartheta^n_t \Delta S_t = \sum_{t=1}^T \vartheta^*_t \Delta S_t$
$\p$-almost surely for all $\p\in\mathcal{P}$. Since $u$ is increasing, concave and bounded from above, 
an application of (iii) and (v) of Lemma \ref{lem:mllemma} to $-u$ gives
\beas
U(X) &\geq& \inf_{\mathbb{P}\in\mathcal{P}} \crl{
  \mathbb{E}^\mathbb{P} u\brak{X+ \sum_{t=1}^T \vartheta^*_t \Delta S_t} + \alpha(\p)}\\
  &\geq& \inf_{\mathbb{P}\in\mathcal{P}} \crl{\mathbb{E}^\mathbb{P}
  \lm_n u \brak{X_n+ \sum_{t=1}^T \vartheta^n_t \Delta S_t} + \alpha(\p)} \\
&\ge&  \inf_{\mathbb{P}\in\mathcal{P}} \crl{\mathop{\mathrm{lim\,med}}_n
  \mathbb{E}^\mathbb{P} u\brak{X_n + \sum_{t=1}^T \vartheta^n_t \Delta S_t} + \alpha(\p)}\\
  &\ge& \mathop{\mathrm{lim\,med}}_n \inf_{\mathbb{P}\in\mathcal{P}} \crl{
  \mathbb{E}^\mathbb{P} u\brak{X_n + \sum_{t=1}^T \vartheta^n_t \Delta S_t} + \alpha(\p)} =\inf_n U(X_n).
  \eeas
By monotonicity, $U(X_n) \downarrow U(X)$ and $U(X) = \inf_{\mathbb{P}\in\mathcal{P}}
\crl{ \mathbb{E}^\mathbb{P} u\brak{X+ \sum_{t=1}^T \vartheta^*_t \Delta S_t} + \alpha(\p)}$.
\end{proof}

\noindent
{\bf Proof of Theorem \ref{thm2}}\\
Assume a medial limit exists, $u$ satisfies (U1)--(U3) and ${\cal P}$ fulfills (NA).
Then an application of Lemma \ref{lem:existence.strategy} with $X_n = X$ yields that
the supremum in \eqref{U1} is attained for every Borel measurable function 
$X \colon \Omega \to \mathbb{R}$ satisfying $U(X) \in \mathbb{R}$.

If in addition, (A1)--(A2) hold, we know from the proof of Theorem \ref{thm1} that 
$\phi(X) = -U(-X)$ is an increasing convex mapping from $B_Z$ to $\mathbb{R}$ satisfying 
condition (R2) of Theorem \ref{thm:dual} and 
$$
\phi^*_{C_Z}(q\q) = U^*_{C_Z}(q\mathbb{Q})= \left\{
\begin{array}{ll}
D_v(q\mathbb{Q},\mathcal{P}) & \mbox{ if } q = 0 \mbox{ or } \q \in {\cal Q}_Z\\
 \infty & \mbox{ if } q > 0 \mbox{ and } \q \in {\cal M}_Z \setminus {\cal Q}_Z. 
\end{array} \right.
$$
Moreover, by Lemma \ref{lem:existence.strategy}, $\phi$ fulfills (R3).
Hence, it follows from Theorem \ref{thm:dual} that 
\[
U(X) = -\phi(-X) = \inf_{(q, \mathbb{Q}) \in \mathbb{R}_+ \times {\cal M}_Z} 
\crl{q\mathbb{E}^{\q} X + U^*_{C_Z}(q\mathbb{Q})}
= \inf_{(q, \mathbb{Q}) \in \hat{\cal Q}_Z} 
\crl{q\mathbb{E}^{\q} X + U^*_{C_Z}(q\mathbb{Q})}
\]
for all $X\in B_Z$. \qed

\bigskip
\noindent 
{\bf Proof of Corollary \ref{cor}}\\
Note that 
\[W(X) = - \frac{1}{\lambda} \log(-U(X))\] for
\be \label{UW}
U(X) = \sup_{\vartheta \in \Theta} \inf_{\p \in {\cal P}} \mathbb{E}^{\p} u \brak{X + \sum_{t=1}^T \vartheta_t \Delta S_t}
\quad \mbox{and} \quad u(x) = - \exp(-\lambda x).
\ee
Clearly, $u$ satisfies (U1)--(U3), and under the assumptions of the corollary, ${\cal P}$ together with the trivial 
function $\alpha \equiv 0$ fulfill (A1)--(A2). Therefore, it follows from Theorem \ref{thm2} that
the supremum in \eqref{UW} is attained for all $X \in B_Z$. In particular, $U(X) \in (-\infty,0)$, and
therefore, $W(X) \in \mathbb{R}$ for all $X \in B_Z$. Furthermore,
\[
v(y) = \sup_{x \in \mathbb{R}} \crl{u(x) - xy} = \frac{y}{\lambda} \brak{\log \frac{y}{\lambda} - 1},
\]
from which it follows that
\[
\inf_{\p \in {\cal P}} D_v(q\q \,\| \, \p) = \frac{q}{\lambda}H(\q \,\| \, {\cal P}) +
  \frac{q}{\lambda} \brak{\log \frac{q}{\lambda}- 1}.\]
So, by Theorem \ref{thm2},
  \begin{align*} 
  W(X)&=-\frac{1}{\lambda}\log(-U(X))
  =-\frac{1}{\lambda}\log\Big(- \inf_{(q, \mathbb{Q}) \in \hat{\cal Q}_Z} 
  \brak{q \mathbb{E}^{\mathbb{Q}}X + \inf_{\p \in {\cal P}} D_v(q \q \,\| \, \p)}\Big)\\
  &=-\frac{1}{\lambda}\log\Big(- \inf_{q\in \mathbb{R}_+} \Big(q
  \Big(\inf_{\mathbb{Q} \in \mathcal{Q}_Z}(\mathbb{E}^{\mathbb{Q}}X + \frac{1}{\lambda} 
  H(\q \,\| \, {\cal P}))\Big) +\frac{q}{\lambda} \brak{\log \frac{q}{\lambda}-1}\Big)\Big).
  \end{align*}
Solving for the minimizing $q$ gives
$W(X)=\inf_{\mathbb{Q} \in \mathcal{Q}_Z} \crl{\mathbb{E}^{\mathbb{Q}}X + \frac{1}{\lambda} H(\q \,\| \, {\cal P})}$.
\qed

\setcounter{equation}{0}
\begin{appendix}
\section{Appendix}

\subsection{Properties of Example \ref{ex:mo}}
\label{ap:mo}

Clearly, ${\cal P}$ is a convex subset of ${\cal M}_Z$. So to prove that it satisfies (A1) for 
$\alpha \equiv 0$, it is enough to show that it is $\sigma({\cal M}_Z, C_Z)$-closed. To do that, let
$(\p_n)$ be a sequence in ${\cal P}$ converging in $\sigma({\cal M}_Z, C_Z)$ to a 
Borel probability measure $\p$. Then,
\[
\mathbb{E}^\mathbb{P}[S_t^{c^i} \wedge m] 
= \lim_n \mathbb{E}^{\p_n}[S_t^{c^i} \wedge m] \le C^i_t \quad \mbox{and} \quad 
\mathbb{E}^\mathbb{P}[S_t^{d^i} \wedge m] = \lim_n \mathbb{E}^{\p_n}[S_t^{d^i} \wedge m] \le D^i_t
\]
for all $t = 1, \dots, T$, $i = 1, \dots, I$ and $m \in \mathbb{N}$, from which it follows by monotone convergence that
\[
\mathbb{E}^\mathbb{P}[S_t^{c^i}] \le C^i_t \quad \mbox{and} \quad 
\mathbb{E}^\mathbb{P}[S_t^{d^i}] \le D^i_t \quad \mbox{for all $t$ and $i$.}
\]
Hence, ${\cal P}$ is $\sigma({\cal M}_Z, C_Z)$-closed.

Moreover, if $u \colon \Omega\times \mathbb{R} \to \mathbb{R}$ is a random 
utility function satisfying (U1)--(U3) and there exists a constant 
\[
q < p := \max_{1 \le i \le I} |c^i| \wedge \max_{1 \le i \le I} |d^i|
\] 
such that $u(\omega, x)/(1+ |x|^q)$ is bounded, then (A2) holds for $\alpha \equiv 0$ and $\beta(x) = x^{p/q}$.

  Now, let us assume that $s^{c^i}_0 < C^i_t$ and $s^{d^i}_0 < D^i_t$ for all $t$ and $i$.
  To show that ${\cal P}$ satisfies (NA), we assume for notational simplicity that $T=2$ and $a_t=b_t=1$ for $t=1,2$. 
  Then $\Omega$ can be identified with $(0,\infty) \times (0,\infty)$.
  The general case follows from similar arguments. 
  Choose a $\p \in {\cal P}$ and disintegrate it as $\mathbb{P}=\p_1 \otimes K$, where $\p_1$ is the 
  first marginal distribution (corresponding to the distribution of $S_1$) and $K$ is a transition probability kernel
  (corresponding to the conditional distribution of $S_2$ given $S_1$).
  For every $\varepsilon \in (0, 1)$, denote by $\p_1^\varepsilon$ and $K^\varepsilon$ 
  the measure and kernel given by 
  \[\p_1^\varepsilon:= \frac{\delta_{(1-\varepsilon)s_0} +\delta_{(1+ \varepsilon)s_0}}{2}
  \quad \text{and} \quad
  K^\varepsilon_x := \frac{\delta_{(1- \varepsilon)x}+\delta_{(1+\varepsilon)x}}{2}.\]
  Then, the measure
  \[ \mathbb{P}^\varepsilon:=(\varepsilon \p_1 +(1-\varepsilon)\p_1^\varepsilon)\otimes
  (\varepsilon K+(1-\varepsilon)K^\varepsilon)\]
  dominates $\p$ and does not admit arbitrage. It remains to show that 
  $\mathbb{P}^\varepsilon$ belongs to $\mathcal{P}$ for some $\varepsilon >0$. 
  First, note that for $m = c^i$ or $d^i$,
  \[\mathbb{E}^{\mathbb{P}^\varepsilon}S_1^m
  =\varepsilon\mathbb{E}^{\p}S_1^m + (1-\varepsilon) \frac{(1-\varepsilon)^m +(1+\varepsilon)^m}{2} s^m_0
  \to s_0^m \quad \mbox{as } \varepsilon \to 0.\]
  This shows that the moment conditions for $S_1$ under $\p^{\varepsilon}$ 
  are satisfied as soon as $\varepsilon > 0$ is sufficiently small.
  Moreover, for $m = c^i$ or $d^i$, one has
  \be \label{exp}
  \mathbb{E}^{\mathbb{P}^\varepsilon}S_2^m
  = \varepsilon^2 \mathbb{E}^{\mathbb{P}} S_2^m
  + \varepsilon(1-\varepsilon) \mathbb{E}^{\p_1\otimes K^\varepsilon} S_2^m
  + \varepsilon(1-\varepsilon) \mathbb{E}^{\p_1^\varepsilon \otimes K} S_2^m
  +(1-\varepsilon)^2 \mathbb{E}^{\p_1^\varepsilon\otimes K^\varepsilon} S_2^m. \ee
  The term $\mathbb{E}^{\p_1^\varepsilon\otimes K^\varepsilon} S_2^m$ converges to $s_0^m$ for $\varepsilon\to 0$. 
  So if we can show that the other expectations in \eqref{exp} are bounded in $\varepsilon$, 
  it follows that $S_2$ satisfies the moment constraints for $\varepsilon > 0$ small enough.
  The first expectation $\mathbb{E}^{\mathbb{P}} S_2^{m}$ is independent of $\varepsilon$ and finite 
  since $\p$ belongs to ${\cal P}$. The second expectation satisfies
  \[\mathbb{E}^{\p_1 \otimes K^\varepsilon} S_2^m
  = \frac{\mathbb{E}^{\p_1} (1-\varepsilon)^m S_1^m + \mathbb{E}^{\p_1} (1+\varepsilon)^m S_1^m}{2}
  \to \mathbb{E}^{\p_1} S_1^m
  \le C^i_1 \quad \mbox{for } \varepsilon \to 0. \]
  Finally, note that one can change $K$ on a $\p_1$-zero set and still have $\mathbb{P}=\p_1 \otimes K$.
  Therefore, one can assume that $K_{(1 \pm \varepsilon_n) s_0} = \delta_{s_0}$ for a 
  sequence of positive numbers $(\varepsilon_n)$ converging to $0$. 
  Then $\mathbb{E}^{\p_1^{\varepsilon_n} \otimes K} S_2^m = s_0^m$.
 Hence, the moment conditions for $S_2$ under $\p^{\varepsilon_n}$ 
 hold too for $\varepsilon_n$ close enough to $0$, showing that ${\cal P}$ fulfills (NA).
 
\subsection{Properties of Example \ref{ex:wball}}
\label{ap:wball}

Obviously, $W(\cdot,\p^*)^p$, and consequently also ${\cal P}$, are convex. Moreover, by Kantorovich duality
(see e.g. Theorem 5.10 in \cite{Vi}), one has
\[
W_p(\p, \p^*)^p = \sup_{f \in C_b(\Omega)\, ,\, g \in C_b(\Omega) \, , \,
f + g \le d^p} \brak{\mathbb{E}^{\p} f + \mathbb{E}^{\p^*} g},
\]
from which it is easy to see that ${\cal P}$ is $\sigma({\cal M}_Z, {\cal P}_Z)$-closed.
This shows that (A1) holds for $\alpha \equiv 0$.

Next, note that $Z(\omega) = s_0 + T + e^{\rho T} T^{1-1/\kappa} d(\omega, \omega^*)$ defines a continuous 
function $Z \colon \Omega\to [1, \infty)$ with compact sublevel sets $\crl{Z \le z}$, $z \in \mathbb{R}_+$, such 
that 
\[
Z(\omega) \ge s_0 + T + \sum_{t=1}^T |\varphi(\omega_{t,2}) - \varphi(\omega^*_{t,2})|
\ge 1 \vee \brak{s_0 + \sum_{t=1}^T \omega_{t,2}} = 1 \vee \sum_{t=0}^T |S_t|.
\]
Since there exists a constant $c \in \mathbb{R}_+$ such that $Z(\omega)^p \leq c(1+ d(\omega,\omega^*)^p)$, 
$W_p$ satisfies the triangle inequality (see e.g. Chapter 6 of \cite{Vi}) and
$\mathbb{E}^{\p} d(\cdot,\omega^*)^p = W_p(\p,\delta_{\omega^*})^p$, one has
\[ 
\mathbb{E}^{\p} Z^p \le  c( 1+ W_p(\p,\delta_{\omega^*})^p ) 
\leq 2^{p-1} c( 1+ W_p(\p,\p^*)^p + W_p(\p^*, \delta_{\omega^*})^p )
\leq 2^{p-1} c ( 1+ W_p(\p,\p^*)^p + \mathbb{E}^{\p^*} Z^p ) \] for all $\p \in {\cal M}_Z$. 
In particular, if $u(\omega,x)/(1+ |x|^q)$ is bounded for a constant $q < p$, then 
\[
\mathbb{E}^{\p} u(-\beta(Z)) \geq -c\brak{1+ W_p(\p,\p^*)^p}
\] 
for $\beta(x) = x^{p/q}$, a new constant $c \in \mathbb{R}_+$ and all $\p \in {\cal P}$, 
showing that (A2) holds for $\alpha \equiv 0$.

  To prove that ${\cal P}$ satisfies (NA), we again assume $T=2$ and $a_t=b_t =1$ for $t =1,2$.
  The general case follows analogously.
  Choose a $\mathbb{P}\in\mathcal{P}$ and disintegrate it as $\mathbb{P}=\mathbb{P}_1\otimes K$.
  Similarly, write $\p^* = \p^*_1 \otimes K^*$, and define for $\lambda \in (0,1)$, 
  \[
  \p^{\lambda}_1 := \lambda \frac{\delta_{(1-\lambda) s_0} + \delta_{(1+\lambda)s_0}}{2} 
  + (1-\lambda) \p^*_1 \quad \mbox{and} \quad K^{\lambda}_x :=   
  \lambda \frac{\delta_{(1-\lambda)x} + \delta_{(1+\lambda)x}}{2} + (1-\lambda) K^*_x.
  \]
  Then the measure $\p^{\lambda} := \p^{\lambda}_1 \otimes K^{\lambda}$ does not admit arbitrage.
  Moreover, there exists a $\lambda \in (0,1)$ such that $\mathbb{E}^{\p^{\lambda}} d(\cdot,\omega^*)^p < \infty$ 
  and $W_p(\p^{\lambda}, \p^*) \le \eta/2$. Now choose a measure $\tilde{\mathbb{P}}_1$ equivalent to $\mathbb{P}_1$ and
  a transition probability kernel $\tilde{K}$ such that for all $x > 0$, $\tilde{K}_x$ is equivalent to $K_x$
  and the three expectations $\mathbb{E}^{\tilde{\mathbb{P}}_1 \otimes \tilde{K}}d(\cdot,\omega^*)^p$,
  $\mathbb{E}^{\tilde{\mathbb{P}}_1\otimes K^{\lambda}}d(\cdot,\omega^*)^p$ and 
  $\mathbb{E}^{\p_1^{\lambda} \otimes \tilde{K}}d(\cdot,\omega^*)^p$ are finite.
  For each $\varepsilon \in (0,1)$, the measure
  \[\mathbb{P}^{\varepsilon, \lambda} := (\varepsilon \tilde{\mathbb{P}}_1 + (1-\varepsilon) \p_1^{\lambda})
  \otimes (\varepsilon \tilde{K} + (1-\varepsilon) K^{\lambda})\]
  does not admit arbitrage and $\mathbb{P} \ll \tilde{\mathbb{P}}_1 \otimes \tilde{K} \ll \mathbb{P}^{\varepsilon, \lambda}$.
  Since $W_p(\cdot,\p^*)^p$ is convex, $W_p(\mathbb{P}^{\varepsilon, \lambda},\p^*)^p$ is dominated by
  \[
  \varepsilon^2 W_p(\tilde{\mathbb{P}}_1\otimes \tilde{K},\p^*)^p
  +\varepsilon(1-\varepsilon) W_p(\tilde{\mathbb{P}}_1\otimes K^{\lambda},\p^*)^p
  +(1-\varepsilon) \varepsilon W_p(\p^{\lambda}_1\otimes \tilde{K},\p^*)^p
  + (1-\varepsilon)^2 W_p(\p^{\lambda},\p^*)^p.
  \]
  Due to $\mathbb{E}^{\p^*}d(\cdot,\omega^*)^p<\infty$, one obtains from the triangle inequality that
  \[W_p(\tilde{\mathbb{P}}_1\otimes \tilde{K},\p^*)
  \leq W_p(\tilde{\mathbb{P}}_1 \otimes \tilde{K},\delta_{\omega^*}) + W_p(\delta_{\omega^*},\p^*)
  = \brak{\mathbb{E}^{\tilde{\mathbb{P}}_1\otimes \tilde{K}} d(\cdot,\omega^*)^p}^{1/p} 
  + \brak{\mathbb{E}^{\p^*} d(\cdot,\omega^*))^p}^{1/p} <\infty,\]
  and similarly, $W_p(\tilde{\mathbb{P}}_1\otimes K^{\lambda},\p^*) < \infty$ as well as
  $W_p(\p^{\lambda}_1\otimes \tilde{K},\p^*)<\infty$.
  This shows that $W_p(\p^{\varepsilon, \lambda}, \p^*) \le \eta$ for $\varepsilon > 0$ small enough,
  proving that ${\cal P}$ satisfies (NA).
  
\subsection{Properties of Example \ref{ex:wpen}}
\label{ap:wpen}

It is easy to see that ${\cal P}$ and $\alpha$ are convex. Moreover, it follows from the arguments in Appendix \ref{ap:wball} 
that all sublevel sets ${\cal P}_c$, $c \in \mathbb{R}_+$, are $\sigma({\cal M}_Z, C_Z)$-closed, and for 
each $c > 0$, ${\cal P}_c$ satisfies (NA). So (A1) holds and ${\cal P}$ fulfills (NA).
Finally, if $u(\omega,x)/(1+ |x|^q)$ is bounded for a constant $q < p$, one obtains as in 
Appendix \ref{ap:wball} that
\be \label{Eu}
\mathbb{E}^{\p} u(-\beta(Z)) \geq -c \brak{1+ W_{(p+q)/2} (\p,\p^*)^{(p+q)/2}}
\ge -c \brak{1+ W_p(\p,\p^*)^{(p+q)/2}} \ee
for $\beta(x) = x^{(p+q)/2q}$, a constant $c \in \mathbb{R}_+$ and all $\p \in {\cal P}$. This shows that
\[
\inf_{\mathbb{P} \in \mathcal{P}} \crl{\mathbb{E}^\mathbb{P}u(-\beta(Z)) + \eta W_p(\p,\p^*)^p} > - \infty,
\]
and (A2) holds.
\end{appendix}

\bibliographystyle{plain}

\begin{thebibliography}{10}

\bibitem{backhoff2015robust}
J.~Backhoff and J.~Fontbona (2016).
\newblock {\em Robust utility maximization without model compactness.}
\newblock SIAM J. Fin. Math. 7(1), 70--103. 

\bibitem{B}
D.~Bartl (2018).
\newblock {\em Exponential utility maximization under model uncertainty for unbounded endowments.}
\newblock Forthcoming in Ann. Appl. Prob.

\bibitem{BS}
D.~P.~Bertsekas and S.~E.~Shreve (1978).
\newblock Stochastic Optimal Control: The Discrete Time Case.
\newblock Academic Press New York.

\bibitem{BC}
R.~Blanchard and L.~Carassus (2018).
\newblock {\em Robust optimal investment in discrete time for unbounded utility function.}
\newblock Ann. Appl. Proba. 28(3), 1856-1892.

\bibitem{BMS}
G.~Bordigoni, A.~Matoussi and M. Schweizer (2007).
\newblock 
{\em A stochastic control approach to a robust utility maximization problem.}
\newblock
in Stochastic Analysis and Applications, Abel Symposium 2, 125--151.

\bibitem{BN}
B.~Bouchard, M.~Nutz (2015).
\newblock {\em Arbitrage and duality in nondominated discrete-time models.}
\newblock Ann. Appl. Proba. 25(2), 823--859.

\bibitem{CDK}
P.~Cheridito, F.~Delbaen and M.~Kupper (2006).
\newblock {\em Dynamic monetary risk measures for bounded discrete-time processes.}
\newblock Electr. J. Probab. 11, 57--106.

\bibitem{CK}
P.~Cheridito and M.~Kupper (2006).
\newblock {\em Composition of time-consistent dynamic monetary risk measures in discrete time.}
\newblock Int. J. Theor. Appl. Fin. 14(1), 137--162.

\bibitem{Ch}
G.~Choquet (1959).
\newblock {\em Forme abstraite du th{\'e}or{\`e}me de capacitabilit{\'e}.}
\newblock Annales de L'Institut Fourier 9, 83--89.

\bibitem{KF}
K.~Fan (1953).
\newblock {\em Minimax theorems.}
\newblock Proc. Nat. Academy of Sciences 39(1), 42--47.
  
\bibitem{FS}
H.~F\"ollmer and A.~Schied (2004). 
\newblock Stochastic Finance. An Introduction in Discrete Time.
\newblock Walter de Gruyter GmbH \& Co., Berlin, New York.

\bibitem{gundel2005utility}
A.~Gundel (2005). 
\newblock {\em Robust utility maximization for complete and incomplete market models.}
\newblock Finance and Stochastics 9(2), 151--176.

\bibitem{HS}
L.P.~Hansen and T.J.~Sargent (2001) 
\newblock {\em Robust control and model uncertainty.}
\newblock Am. Econ. Rev. 91, 60--66.

\bibitem{jacod1998local}
J.~Jacod and A.~N.~Shiryaev (1998).
\newblock {\em Local martingales and the fundamental asset pricing theorems in the
  discrete-time case.}
\newblock Finance and Stochastics 2(3), 259--273.

\bibitem{MPZ}
A.~Matoussi, D.~Possama\"i and C.~Zhou (2015).
\newblock
{\em Robust utility maximization in nondominated models with 2BSDE: the uncertain volatility model.}
\newblock 
Math. Fin. 25(2), 258--287.
  
\bibitem{Meyer}
P.~A.~Meyer (1973).
\newblock {\em Limites m\'ediales d'apr\`es Mokobozki.}
\newblock S\'eminaire de Probabilit\'es 7, 198--204.

\bibitem{NN}
A.~Neufeld and M.~Nutz (2018).
\newblock
{\em Robust utility maximization with L\'evy processes.}
\newblock Math. Fin. 28 (1), 82--105.

\bibitem{NS}
A.~Neufeld and M.~\v{S}iki\'{c} (2018).
\newblock {\em Robust utility maximization in discrete-time markets with friction.}
\newblock SIAM J. Contr. Optim. 56(3), 1912--1937.

\bibitem{Normann}
D.~Normann (1976).
\newblock {\em Martin's axiom and medial functions.}
\newblock Math. Scand. 38, 167--176.

\bibitem{N14}
M.~Nutz (2014).
\newblock {\em Superreplication under model uncertainty in discrete time.}
\newblock Fin. Stoch. 18(4), 791--803.

\bibitem{N16}
M.~Nutz (2016).
\newblock {\em Utility maximization under model uncertainty in discrete time.}
\newblock Math. Fin. 26(2), 252--268.

\bibitem{owari2011robust}
K.~Owari (2011).
\newblock {\em Robust utility maximization with unbounded random endowment.}
\newblock Adv. Math. Econ. 14, 147--181.

\bibitem{quenez2004optimal}
M.~C.~Quenez (2004).
\newblock {\em Optimal portfolio in a multiple-priors model.}
\newblock Sem. Stoch. Analysis, Random Fields and Appl. IV. Birkh{\"a}user, Basel.  

\bibitem{S05}
A.~Schied (2005).
\newblock 
{\em Optimal investments for robust utility functionals in complete market models.}
\newblock Math. Oper. Res. 30(2), 750--764.

\bibitem{S07}
A.~Schied (2007).
\newblock {\em Optimal investments for risk- and ambiguity-averse preferences: a duality approach.}
\newblock Fin. Stoch. 11, 107--129.

\bibitem{HT} H. Tong (1952). 
\newblock {\em Some characterizations of normal and perfectly normal spaces.}
\newblock Duke Math. J. 19, 289--292.

\bibitem{Vi} C. Villani (2009).
Optimal Transport: Old and New. Grundlehren der Mathematischen 338. Springer.
Wissenschaften. 

\end{thebibliography}

\end{document}